\newcommand{\Rmnum}[1]{\expandafter\@slowromancap\romannumeral #1@}
\def\F{\mathbb{F}}
\DeclareMathOperator{\wt}{wt_H}
\newcommand{\supp}{ {\rm Supp}}
\newtheorem{lemma}[thm]{Lemma}
\def\F{\mathbb{F}}
\begin{document}

\mainmatter  % start of an individual contribution

% first the title is needed
\title{Several new classes of self-orthogonal minimal linear codes violating the Ashikhmin-Barg condition}

% a short form should be given in case it is too long for the running head
\titlerunning{ Self-orthogonal minimal linear codes violating the Ashikhmin-Barg condition}

% the name(s) of the author(s) follow(s) next
%
% NB: Chinese authors should write their first names(s) in front of
% their surnames. This ensures that the names appear correctly in
% the running heads and the author index.
%

\author{Wengang Jin
\and  Kangquan Li
\and Longjiang Qu\thanks{Corresponding author. Email: ljqu$\_$happy@hotmail.com.}
}
\authorrunning{Wengang Jin, Kangquan Li, Longjiang Qu}

\institute{College of Science, National University of Defence Technology \\
Changsha, Hunan 410000, P.R. China. \\
\email{jinwengang0110@126.com,likangquan11@nudt.edu.cn.}
}

%
% NB: a more complex sample for affiliations and the mapping to the
% corresponding authors can be found in the file "llncs.dem"
% (search for the string "\mainmatter" where a contribution starts).
% "llncs.dem" accompanies the document class "llncs.cls".
%

\toctitle{Several new classes of self-orthogonal minimal linear codes violating the Ashikhmin-Barg condition}
\tocauthor{}
\maketitle

\begin{abstract}

Linear codes have attracted considerable attention in coding theory and cryptography due to their significant
applications in secret sharing schemes, secure two-party computation, Galois geometries, among others. As two special subclasses of linear codes, minimal linear codes and self-orthogonal linear codes are of particular interest. Constructing linear codes that possess both minimality and self-orthogonality is very interesting.  The main purpose of this paper is to construct self-orthogonal minimal linear codes that violate the Ashikhmin-Barg (AB for short) condition over the finite field $\mathbb{F}_p$. First, we present several classes of self-orthogonal minimal linear codes violating the AB condition over the finite field $\mathbb{F}_2$ and determine their weight distributions. Next, for any odd prime $p$, we construct two classes of self-orthogonal linear codes from $p$-ary functions, which contain some optimal or almost optimal codes.  Finally, based on plateaued functions, we construct two classes of self-orthogonal linear codes that violate the AB condition. Their weight distributions are also provided. To the best of our knowledge, this paper is the first to investigate the constructions of linear codes that violate the AB condition and satisfy self-orthogonality.

\end{abstract}
\noindent {\it Keywords.}  self-orthogonal linear codes, minimal linear codes, weakly regular plateaued functions, Ashikhmin-Barg (AB) condition, weight distribution.

%{\bf Mathematics Subject Classification: }
%06E30, 05A05, 05A10, 35F05,  11T06,  11T71,  11T55.

\section{Introduction}\label{introduction}
Bent functions, introduced by Rothaus \cite{Rothaus-1976} in 1976, are maximally nonlinear Boolean functions with an even number of variables. To overcome the restriction of bent functions being not balanced and existing only with an even number of variables, the concept of semi-bent function has been introduced by Chee et al. \cite{Chee-Lee-Kim-1995}. Later, Zheng et al.\cite{Zheng-Zhang-1999(1),Zheng-Zhang-1999(2)} generalized semi-bent functions to the so-called plateaued functions, which are characterized by the property that their Walsh spectra are three-valued (more precisely $0, \pm 2^r$ for a positive integer $r$). Plateaued and bent functions can be used in many domains. Specifically, they have been widely used in designing good linear codes for several applications (such as secret sharing and two-party computation), partial difference sets, association schemes, and strongly regular graphs. Notably, plateaued and bent functions have been widely used to construct linear codes with few weights \cite{Mesnager-2017,Mesnager-2019,Mesnager-2020,MS}.

Two generic constructions of linear codes involving functions over finite fields are called the first and second generic constructions, respectively. The first generic construction (Construction 1 for short, see Eqs. (\ref{code_1}) to (\ref{lcode_1})) is based on the trace function of some functions involving polynomials that vanish at zero \cite{1975-Delsarte}, while the second generic construction is based on the defining set
\cite{Baumert-1972}. As a particular class of linear codes, codes with few weights or, more significantly, those minimal linear codes have many applications. For example, they can be used in communication \cite{Huffman-Pless-2003}, designing secret sharing schemes (see e.g. \cite{CDY,DD,MS,S3,YD} and the references therein), secure two-party computation \cite{CCP}, association schemes \cite{Calderbank-1984}, authentication codes \cite{DW} and strongly regular graphs \cite{Calderbank-1986}. Furthermore, they could be decoded with a minimum distance decoding method \cite{Ashikhmin-Barg-1988}. The construction of minimal linear codes, especially those that violate the AB condition, is quite challenging. Since 2018, many minimal linear codes that violate the AB condition have been constructed in the literature \cite{Heng-Ding-Zhou2018,Sihem-2020,Sinak-2020,2019-Xu-Qu,2022-Xu-Qu,2020-Xu-Qu}.

Self-orthogonal codes are a very significant subclass of linear codes, as they have diverse nice applications in various fields, such as quantum codes, lattices, and linear complementary dual codes (LCD codes for short). Recently, some infinite families of self-orthogonal codes were constructed in the literature \cite{Heng-Li-Liu2023,Heng-Li-Wu2024,Heng-Li2024,Wang-Heng2024,Zhou-Li-Tang-2018}, and their applications in quantum codes and LCD codes were also studied. The construction of these linear codes is a fascinating research topic in error-correcting code theory.

Divisibility is an important property of linear codes. A linear code $\mathcal{C}$ over the finite field $\mathbb{F}_{q}$, where $q$ is a prime power, is said to be divisible if all its codewords have weights divisible by an integer. Divisible codes have diverse applications, such as Galois geometries, subspace codes, partial spreads, vector space partitions, as well as Griesmer codes \cite{Kiermaier-Kurz2020,Kurz2022,Ward1998,Ward2001}. In coding theory, divisibility and self-orthogonality of the liner codes $\mathcal{C}$ over $\mathbb{F}_{q}$ are closely related, and the following results are known.
\begin{itemize}
\item If $\mathcal{C}$ is a binary self-orthogonal code over $\mathbb{F}_{2}$, then the weight of each codeword is divisible by two. If $\mathcal{C}$ is a binary code over $\mathbb{F}_{2}$ and the weight of each codeword in $\mathcal{C}$ is divisible by four, then $\mathcal{C}$ is self-orthogonal \cite[Theorem 1.4.8]{Huffman-Pless-2003}. For convenience, throughout this paper, binary codes whose codewords all have weights divisible by four are called doubly-even. A self-orthogonal code must be even, one which is not doubly-even is called singly-even.
\item $\mathcal{C}$ is a ternary self-orthogonal code over $\mathbb{F}_{3}$ if and only if the weight of each codeword in $\mathcal{C}$ is divisible by three \cite{Huffman-Pless-2003}.
\item Let $\mathcal{C}$ be an $[n, k, d]$ linear code over $\mathbb{F}_q$ with $\mathbf{1}\in\mathcal{C}$, where $\mathbf{1}$ is the all-1 vector of length $n$. If $\mathcal{C}$ is $p$-divisible, then $\mathcal{C}$ is
self-orthogonal,  where $p$ is an odd prime \cite[Theorem 3.2]{Heng-Li2024}.
\end{itemize}

Inspired by the above literature, we have three main motivations for writing this paper.
\begin{itemize}
\item Will it be possible to provide a necessary and sufficient condition for determining whether a class of binary linear codes is self-orthogonal? If $\mathcal{C}$ is a singly-even code, then how to determine whether it is self-orthogonal?
\item For an odd prime $p$, will it be possible to provide a necessary and sufficient condition for determining whether a class of $p$-ary linear codes is self-orthogonal?
\item Since minimal codes and self-orthogonal codes are of such importance, there arises naturally a question: is it possible to design a class of linear codes that are simultaneously self-orthogonal and minimal, yet violate the AB condition?
\end{itemize}

Based on the above motivations, our main contributions in this paper can be summarized as follows. In
Theorem \ref{lfunction1}, we establish the necessary and sufficient condition for the binary linear codes designed by Construction 1 to be self-orthogonal. Two classes of  minimal binary linear codes violating the AB condition are presented in Theorems \ref {lt_1} and \ref {ltt_1}, respectively. Subsequently, we judge that they are self-orthogonal through Theorem \ref{lfunction1}. It should be noted that in some cases, the self-orthogonality of these codes cannot be determined through the conclusions of  \cite{Huffman-Pless-2003}. A detailed comparison and explanation are presented in Remark \ref{remark2}.
In Theorem \ref{bent function1}, we provide the necessary and sufficient condition for the $p$-ary linear codes designed  by Construction 1 to be self-orthogonal, where $p$ is an odd prime. Two classes of $p$-ary linear codes are obtained in Theorem \ref{ccwwl}, which determined by Theorem \ref{bent function1} are self-orthogonal under certain cases. Furthermore, in Remark \ref{ttremark} we clearly specify the cases under which they are self-orthogonal and those under which they are not. It should be noted that the result in \cite[Theorem 3.2]{Heng-Li2024} could not achieve this. In Theorem \ref{ccww}, we construct two classes of $p$-ary minimal linear codes that are self-orthogonal and violate the AB condition. These linear codes are obtained by unbalanced plateaued functions, and the self-orthogonality is derived from Theorem \ref{bent function1}.

The paper is organized as follows. In Section \ref{Preliminaries}, we fix some notation and present necessary background related to the theory of $p$-ary (plateaued) functions and coding theory.
In Section \ref{MM-B-class}, two classes of self-orthogonal minimal binary linear codes violating the AB
condition are provided, their weight distributions are also completely determined. Next, in Section \ref{M-B-class}, we present two classes of self-orthogonal linear codes with two and four weights, as well as two classes of self-orthogonal minimal $p$-ary linear codes violating the AB condition with four to six weights from plateaued functions.
 We shall present several explicit examples of binary and $p$-ary linear codes. Finally, Section \ref {Sec-Conclusion} concludes the paper.

\section{Notation and preliminaries}\label{Preliminaries}
Throughout this paper we adopt the following notation unless otherwise stated.
\begin{itemize}
\item For any finite set $E$, $E^{\star}=E \setminus\{0\}$ and $\# E$
denotes the cardinality of $E$;
\item For any complex number $z$, $|z|$ denotes its modulus;
\item $p$ is a prime and $q=p^n$;
\item $\mathbb {F}_{p^n}$ is the Galois field of order $p^n$;
\item For any $k$ dividing $n$, $\mathrm{Tr}_k^n$ denotes the trace function from $\mathbb{F}_{p^n}$ onto $\mathbb{F}_{p^k}$
$\mathrm{Tr}_k^n(x)=x+x^{p^k}+\cdots+ x^{p^{n-k}}$; for $k=1$, $\mathrm{Tr}_1^n(x)=\sum\limits_{i=0}^{n-1}x^{p^i}$;
\item $\mathbb{Z}$ is the rational integer ring, $\mathbb{Q}$ is the rational field and ${\Bbb C}$ is the complex field;
\item $p^*=(-1)^{\frac{p-1}{2}}p$;
\item $\zeta_p=e^{\frac{2\pi i}{p}}$ is a primitive $p$-th root of unity, where $ i=\sqrt{-1}$ denotes a complex primitive $4$-th root of unity;
\item $\eta$ and $\eta_0$ are the quadratic characters of $\mathbb{F}_q^{\star}$ and  $\mathbb{F}_p^{\star}$, respectively;
\item $\textnormal{SQ}$ and $\textnormal{NSQ}$ denote the sets of all  squares and nonsquares in $\mathbb{F}_p^{\star}$, respectively.
\end{itemize}

\subsection{Cyclotomic field $Q(\zeta_p)$}

In this subsection we recall some basic results on cyclotomic fields (see e.g. \cite{IrelandRosen1990}), which will be used
to calculate the weight distributions of the proposed linear codes in the sequel.

The ring of integers in $\mathbb{Q}\left(\zeta_p\right)$ is $\mathcal {O}_K=\mathbb{Z}[\zeta_p]$. An integral basis  of $\mathcal {O}_{\mathbb{Q}\left(\zeta_p\right)}$ is $\left\{\zeta_p^i \mid 1\leq i\leq p-1\right\}$. The field extension $\mathbb{Q}(\zeta_p)/\mathbb{Q}$ is Galois of degree $p-1$ and the Galois group $Gal \left(\mathbb{Q}\left(\zeta_p\right)/\mathbb{Q}\right)=\left\{\sigma_a \mid a \in \mathbb{F}_p^{\star}\right\}$, where the automorphism $\sigma_a$ of $\mathbb{Q}\left(\zeta_p\right)$ is defined by $\sigma_a(\zeta_p)=\zeta_p^a$. The field $\mathbb{Q}(\zeta_p)$ has a unique quadratic subfield $\mathbb{Q}(\sqrt{p^*})$. For $1\leq a\leq p-1$, $\sigma_a\left(\sqrt {p^*}\right)=\eta_0(a)\sqrt{p^*}$. Hence, the Galois group $Gal(\mathbb{Q}(\sqrt{p^*})/\mathbb{Q})$ is $\left\{1, \sigma_{\gamma}\right\}$, where $\gamma$ is any quadratic nonresidue in $\mathbb{F}^\star_p$. Finally,  for any $a\in \mathbb{F}_p^{\star}$ and $b\in \mathbb{F}_p$, one immediately has $\sigma_a\left(\zeta_p^b\right)=\zeta_p^{ab}$ and $\sigma_a\left(\sqrt{p^*}^m\right) =\eta^m_0(a)\sqrt{p^*}^m$.

\subsection{Gauss sums and character sums}
A character $\phi$ of a finite abelian group $(\mathcal G, +)$ with order $v$ is a homomorphism from $\mathcal G$ to the
multiplicative group of complex numbers of absolute value $1$. In particular, let $\mathcal G=\mathbb{F}_q$ be the finite field of $q$ elements, where $q=p^n$, $p$ is a prime, and $n$ is a positive integer. For any $a \in \mathbb{F}_q$, we can define an additive character $\phi_a : \mathbb{F}_q \rightarrow {\Bbb C}^\star$ of the finite field $\mathbb{F}_q$ as
$\phi_a(x)=\zeta_p^{\mathrm{Tr}_1^n(ax)}$.

Let $\chi : \mathbb{F}_q^\star \rightarrow {\Bbb C}^\star$ be a multiplicative
character of $\mathbb{F}_q^\star$. The Gauss sum is defined by
$G_q(\chi,\phi_a)=\sum_{x \in \mathbb{F}_q^\star } \chi(x) \phi_a(x).$ In the absence of ambiguity, $G_q(\chi,\phi_a)$ is abbreviated as $G(\chi,\phi_a)$.
Then from \cite{Lidl-Niederreiter-1983} we have that $G_q(\chi,\phi_a)=\overline{\chi(a)}G_q(\chi,\phi_1)$ for $ a \in \mathbb{F}_q^{\star}$, where the ``overline" denotes the complex conjugation.
Generally, the explicit determination of Gauss sums is a difficult problem. However, they
can be explicitly evaluated in a few cases.

We now recall some useful properties of quadratic Gauss sum over $\mathbb{F}_{p}$, which will be used to derive some results in the core of the article.
\begin{lemma}\label{lequations}\textnormal{\cite{Lidl-Niederreiter-1983}} Let $\phi_1$ be a canonical additive character of $\mathbb{F}_{p^n}$ and $p$ be an odd prime. Then the following statement holds.
\begin{eqnarray*}
G_{p^n}(\eta,\phi_r)=\sum\limits_{x\in\mathbb{F}_{p^n}^{\star}}\eta(x)\zeta_{p}^{\mathrm{Tr}_{1}^{n}(rx)}&=&
(-1)^{n-1}\eta(r)\sqrt{p^{*}}^{n}.
\end{eqnarray*}
\end{lemma}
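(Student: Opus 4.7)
The plan is to reduce the statement to the case $r=1$ and then invoke the Davenport--Hasse lifting theorem to pass from $\mathbb{F}_p$ to $\mathbb{F}_{p^n}$. First I would use the identity $G_{p^n}(\eta,\phi_r)=\overline{\eta(r)}\,G_{p^n}(\eta,\phi_1)$, valid for $r\in\mathbb{F}_{p^n}^{\star}$, which is already recalled in the paragraph preceding the lemma. Because $\eta$ is a quadratic character its values lie in $\{0,\pm 1\}\subset\mathbb{R}$, so $\overline{\eta(r)}=\eta(r)$, and the problem collapses to proving
$$G_{p^n}(\eta,\phi_1)=(-1)^{n-1}\sqrt{p^{*}}^{n}.$$

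For the reduced statement I would assemble two classical ingredients. The first is the explicit evaluation of the quadratic Gauss sum over the prime field, namely $G_p(\eta_0,\psi_0)=\sqrt{p^{*}}$, where $\psi_0(x)=\zeta_p^{x}$ is the canonical additive character of $\mathbb{F}_p$; this is Gauss's classical sign determination and is recorded in \cite{Lidl-Niederreiter-1983}. The second is the observation that the quadratic character on $\mathbb{F}_{p^n}$ factors as $\eta=\eta_0\circ N$, with $N$ the norm from $\mathbb{F}_{p^n}$ to $\mathbb{F}_p$, while $\phi_1=\psi_0\circ\mathrm{Tr}_1^n$ by the very definition of the canonical additive character. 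Applying the Davenport--Hasse relation to these lifted characters yields
$$G_{p^n}(\eta,\phi_1)=(-1)^{n-1}\bigl(G_p(\eta_0,\psi_0)\bigr)^{n}=(-1)^{n-1}\sqrt{p^{*}}^{n},$$
and combined with the earlier reduction this delivers the claim.

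The main obstacle is bibliographic rather than technical: both the prime-field sign determination and the Davenport--Hasse lifting are classical theorems and are in fact exactly the content of the reference \cite{Lidl-Niederreiter-1983} from which the lemma is imported. The only places where care is genuinely required are in verifying that $\eta(r)\in\mathbb{R}$ so that the complex conjugation may be stripped in the reduction step, and in tracking the sign factor $(-1)^{n-1}$ produced by the Davenport--Hasse relation, which is responsible for the parity alternation in $n$ visible in the final expression.
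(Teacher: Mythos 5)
The paper offers no proof of this lemma; it is imported verbatim from \cite{Lidl-Niederreiter-1983}, and your argument is precisely the standard derivation given there (Theorems 5.12--5.15 of that reference): reduce to $\phi_1$ via $G(\chi,\phi_r)=\overline{\chi(r)}G(\chi,\phi_1)$ with $\overline{\eta(r)}=\eta(r)$ since $\eta$ is real-valued, then combine Gauss's sign determination $G_p(\eta_0,\psi_0)=\sqrt{p^{*}}$ with the Davenport--Hasse lift, using $\eta=\eta_0\circ N$ and $\phi_1=\psi_0\circ\mathrm{Tr}_1^n$. Your proof is correct and coincides with the intended one; the only implicit convention worth noting is $r\in\mathbb{F}_{p^n}^{\star}$ (or $\eta(0)=0$, under which the identity holds trivially for $r=0$ as well).
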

\begin{lemma}\label{equations}\textnormal{\cite{Lidl-Niederreiter-1983}} Let $\phi_1$ be a canonical additive character of $\mathbb{F}_{p}$ and $p$ be an odd prime. Then the following statements hold.
\begin{enumerate}
\item $G_{p}(\eta_0,\phi_r)=\sum\limits_{x\in\mathbb{F}_{p}^{\star}}\eta_0(x)\zeta_{p}^{rx}=\eta_0(r)\sqrt{p^{*}}$, $r\in \mathbb{F}_{p}^{\star}.$
\item $\sum\limits_{x \in \mathbb{F}_p}\zeta_{p}^{f(x)} = \zeta_{p}^{a_0 - a_1^2(4a_2)^{ - 1}}G_{p}(\eta_0,\phi_{a_2})$,  where   $f(x) = a_2x^2 + a_1x + a_0 \in \mathbb{F}_p[x]$ and $a_2 \not= 0$.
\end{enumerate}
\end{lemma}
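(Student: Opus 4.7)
The plan is to handle the two parts separately; both follow easily once Lemma \ref{lequations} is in hand.

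For part 1, I would specialize Lemma \ref{lequations} to $n=1$: taking $\eta=\eta_0$ and observing that $(-1)^{n-1}=1$ and $\sqrt{p^{*}}^{\,n}=\sqrt{p^{*}}$ when $n=1$, the claimed identity $G_{p}(\eta_0,\phi_r)=\eta_0(r)\sqrt{p^{*}}$ drops out immediately. Alternatively, a self-contained derivation would invoke the reduction $G_{p}(\eta_0,\phi_r)=\overline{\eta_0(r)}\,G_{p}(\eta_0,\phi_1)$ stated in the paragraph preceding Lemma \ref{lequations}, use that $\eta_0$ is real-valued on $\mathbb{F}_p^{\star}$, and then quote the classical evaluation $G_{p}(\eta_0,\phi_1)=\sqrt{p^{*}}$.

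For part 2, the idea is to complete the square. Since $p$ is odd and $a_2\neq 0$, the elements $(2a_2)^{-1}$ and $(4a_2)^{-1}$ are well-defined in $\mathbb{F}_p$, and one has
\begin{equation*}
f(x)=a_2\bigl(x+a_1(2a_2)^{-1}\bigr)^{2}+\bigl(a_0-a_1^{2}(4a_2)^{-1}\bigr).
\end{equation*}
Translating the summation variable by $-a_1(2a_2)^{-1}$ and pulling the constant exponent outside reduces the sum to $\zeta_p^{a_0-a_1^{2}(4a_2)^{-1}}\sum_{y\in\mathbb{F}_p}\zeta_p^{a_2y^{2}}$. To identify the remaining sum with $G_{p}(\eta_0,\phi_{a_2})$, I would use that for every $z\in\mathbb{F}_p^{\star}$ the fibre of the squaring map has size $1+\eta_0(z)$; substituting, the sum over $y\in\mathbb{F}_p^{\star}$ splits into the trivial character sum $\sum_{z\in\mathbb{F}_p^{\star}}\zeta_p^{a_2z}=-1$ plus exactly $G_{p}(\eta_0,\phi_{a_2})$, and the contribution $1$ from $y=0$ cancels the $-1$.

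No step presents a genuine obstacle: part 1 is a one-line specialization, and part 2 is completing the square combined with a routine two-to-one counting identity for the squaring map on $\mathbb{F}_p^{\star}$. The only mild point of care is ensuring the existence of the inverses appearing in the formula, which is precisely what the hypotheses $p$ odd and $a_2\neq 0$ provide.
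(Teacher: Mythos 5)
Your proposal is correct. Note, however, that the paper gives no proof of this lemma at all: it is quoted verbatim from Lidl--Niederreiter, so there is no in-paper argument to compare against. Your derivation is the standard textbook one and checks out: part 1 is indeed the $n=1$ specialization of Lemma \ref{lequations} (equivalently, the reduction $G_p(\eta_0,\phi_r)=\overline{\eta_0(r)}G_p(\eta_0,\phi_1)$ together with $\eta_0$ being real-valued), and part 2 is completing the square plus the identity $\#\{y : y^2=z\}=1+\eta_0(z)$ for $z\neq 0$, with the $+1$ from $y=0$ cancelling the $-1$ from the trivial character sum. The only caveat worth flagging is that both of your routes for part 1 ultimately rest on the classical evaluation $G_p(\eta_0,\phi_1)=\sqrt{p^*}$ (i.e.\ the determination of the sign of the quadratic Gauss sum), which is the one genuinely nontrivial ingredient and is quoted rather than proved --- but that is exactly the same status it has in the paper, which cites the whole lemma to \cite{Lidl-Niederreiter-1983}.
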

\begin{lemma}\textnormal{\cite{Lidl-Niederreiter-1983}}\label{squar-Fact} Let $g(x)=a_2x^2+a_1x+a_0\in \mathbb{F}_{p}[x]$ and $a_2\ne0$. Put $d=a^2_1-4a_0a_2$. Then
\begin{eqnarray*}
\sum\limits_{x\in \mathbb{F}_{p}}\eta_0(g(x))=
\left\{ \begin{array}{ll}
-\eta_0(a_2),& \mathrm{if}~d\ne0,  \\
(p-1)\eta_0(a_2),& \mathrm{if}~d=0.
\end{array} \right.
\end{eqnarray*}
\end{lemma}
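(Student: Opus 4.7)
The plan is to reduce the sum to a single variable change by completing the square. Since $p$ is odd (implicit from the appearance of $\eta_0$) and $a_2 \neq 0$, the element $2a_2$ is invertible in $\mathbb{F}_p$, and one may write
$$g(x) = a_2\Bigl(x + \tfrac{a_1}{2a_2}\Bigr)^2 - \tfrac{d}{4a_2}, \qquad d = a_1^2 - 4 a_0 a_2.$$
Substituting $y = x + a_1/(2a_2)$, which is a bijection on $\mathbb{F}_p$, and using the multiplicativity of $\eta_0$, the problem reduces to evaluating
$$T(c) := \sum_{y \in \mathbb{F}_p} \eta_0(y^2 - c), \qquad c = \tfrac{d}{4a_2^2},$$
after which the original sum equals $\eta_0(a_2)\,T(c)$.

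The case $d = 0$ is immediate: then $c = 0$, and $\eta_0(y^2) = 1$ for $y \neq 0$ while $\eta_0(0) = 0$, so $T(0) = p - 1$, matching $(p-1)\eta_0(a_2)$.

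For the main case $d \neq 0$, hence $c \neq 0$, I would count preimages of the squaring map via $N(z) := \#\{y \in \mathbb{F}_p : y^2 = z\} = 1 + \eta_0(z)$, using the convention $\eta_0(0) = 0$. This yields
$$T(c) = \sum_{z \in \mathbb{F}_p} \eta_0(z - c) \;+\; \sum_{z \in \mathbb{F}_p^\star} \eta_0(z)\,\eta_0(z - c).$$
The first sum vanishes by the orthogonality $\sum_{w \in \mathbb{F}_p} \eta_0(w) = 0$. For the second, on $\mathbb{F}_p^\star$ I factor $z(z-c) = z^2(1 - c/z)$, so $\eta_0(z)\eta_0(z-c) = \eta_0(1 - c/z)$; the bijection $u = c/z$ on $\mathbb{F}_p^\star$ turns the remaining sum into $\sum_{u \neq 0} \eta_0(1 - u) = -\eta_0(1) = -1$, again by orthogonality. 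Hence $T(c) = -1$ and the original sum equals $-\eta_0(a_2)$, as required.

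The main obstacle is the $d \neq 0$ case, where one must propagate the identity $N(z) = 1 + \eta_0(z)$ without double-counting $z = 0$ and execute two orthogonality cancellations cleanly. An alternative is to invoke Lemma \ref{equations}(2) to convert $\eta_0(g(x))$ via quadratic additive-character Gauss sums; however, that route carries an extra factor $\zeta_p^{a_0 - a_1^2/(4a_2)}$ through the computation, so the direct counting argument above seems more transparent.
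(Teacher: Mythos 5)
Your proof is correct. Note, however, that the paper does not prove this lemma at all: it is quoted verbatim from Lidl--Niederreiter (it is Theorem 5.48 there), so there is no in-paper argument to compare against. Your completion of the square, the preimage count $N(z)=1+\eta_0(z)$, and the two orthogonality cancellations (including the substitution $u=c/z$ reducing the final sum to $\sum_{u\neq 0}\eta_0(1-u)=-1$) all check out and reproduce the standard textbook derivation, so the lemma is fully justified by your argument.
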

\subsection{On $p$-ary functions}
A $p$-ary function is a function from the finite field $\F_{p^n}$ to $\F_p$. Recall that all functions from $\F_{p^n}$ to $\F_{p^n}$
(or with values in a subfield) have a unique representation as a univariate polynomial of degree at most
$p^n-1$. A $p$-ary function $f$ can be uniquely represented
as a polynomial $f(x) = \sum_{j=0}^{p^n-1}a_jx^j$ of polynomial degree at
most $p^n-1$. Every exponent $j$ can be written in base $p$ representation (or the $p$-adic representation) $j = \sum_{i=0}^{n-1}t_ip^i$ with $0\leq t_i \leq p-1$,
the {\it weight} of $j$ is then $\sum_{i=0}^{n-1}t_i$. The {\it algebraic degree} of $f$ equals the largest weight
of an exponent $j$ in the polynomial representation of $f$, for which $a_j\not=0$.
\subsection{Walsh transform  and $p$-ary (weakly regular) plateaued functions}
Let $\zeta_p$ be the $p$-th  complex root of unity. The following basic fact related $\zeta_p$
\begin{eqnarray}\label{othor-prop-wp}
\sum\limits_{x\in \mathbb{F}_{p}^{\star}}\zeta_{p}^{ax}&=&
\left\{ \begin{array}{ll}
p-1, &\mathrm{if~} a=0,  \\
-1, &\mathrm{if~} a\in \mathbb{F}_{p}^{\star},
\end{array} \right.
\end{eqnarray}
will be frequently used in the sequel.

Let $f$ be a function from $\mathbb{F}_{q}$ to $\mathbb{F}_p$, where $q=p^n$ is a prime power. The  Walsh transform of $f$ at $\beta\in\mathbb{F}_{q}$ is defined by
$$
\mathcal{W}_{f}(\beta)=
\sum_{x\in\mathbb{F}_{q}}\zeta_p^{f(x)
-\mathrm{Tr}_{1}^{n}(\beta x)}.
$$
 %The inverse Walsh transform of such $f$
%gives
%\begin{equation}\label{equ1}
%\zeta_p^{f(x)}=\frac{1}{p^n}
%%\sum_{\beta\in \mathbb{F}_q}
%\mathcal{W}_f(\beta)\zeta_p^{\mathrm{Tr}_1^n(\beta x)}.
%\end{equation}

The multiset $\{\mathcal{W}_f(\beta)\,|\,\beta\in\mathbb{F}_{p^n}\}$ is called the {\it Walsh spectrum} of the function
$f$. The function $f$ is said to be $p$-ary bent, if $|\mathcal{W}_f(\beta)|=p^{\frac{n}{2}}$ for any $\beta\in \mathbb{F}_{p^n}$.
Bent functions from $\mathbb{F}_{p^n}$ to $\F_p$, $p$ odd, exist
for all integers $n\ge 1$.
As first shown by Kumar et. al \cite{Kumar-Scholtz-Welch-1985} (see also \cite{Helleseth-Kholosha-2006}), the values of the Walsh transform of a $p$-ary bent function are restricted. More precisely,
for a $p$-ary bent function $f:\mathbb{F}_{p^n}\rightarrow\F_p$, the \textit{Walsh coefficient} $\mathcal{W}_f(\beta)$
at $\beta\in\mathbb{F}_{p^n}$ always satisfies
\begin{equation}\label{Eq1}
\mathcal{W}_f(\beta) =
\left\{\begin{array}{r@{\quad}l}
\pm \zeta_p^{f^*(\beta)}p^{n/2}, &\mathrm{if~} p\equiv 1 \pmod 4, \\
\pm i\zeta_p^{f^*(\beta)}p^{n/2}, &\mathrm{if~} p\equiv 3 \pmod 4,
\end{array}\right.
\end{equation}
where $f^*$ is a function from $\mathbb{F}_{p^n}$ to $\F_p$, which  is called the \emph{dual }of $f$ $\left(i = \sqrt{-1}\right)$.

A  bent function $f$ is said to be regular if there exists some $p$-ary function $f^*$ satisfying $\mathcal{W}_f(\beta)=p^{\frac{n}{2}}\zeta_p^{f^*(\beta)}$
for any $\beta \in \mathbb{F}_{q}$.
A  bent function $f$ is weakly regular if
there exists a complex $u$ with unit magnitude
satisfying
$\mathcal{W}_f(\beta)=up^{\frac{n}{2}}\zeta_p^{f^*(\beta)}$
for some function $f^*(x)$, where $u\in\{\pm1, \pm i\}$ is independent from $\beta$, otherwise it is called non-weakly regular.

As an extension of bent functions, the notion of plateaued functions was introduced in characteristic
2 by Zheng and Zhang in \cite{Zheng-Zhang-1999(1)}. A function $g: \mathbb{F}_{p^n}\to \mathbb{F}_p$ is said to be {\it $s$-plateaued} if $\mid\mathcal{W}_g(\beta)\mid^{2}\in\left\{0,p^{n+s}\right\}$ for every $\beta \in\mathbb{F}_{p^n},$ where $s$ is an integer with $0\leq s \leq n$. Furthermore, the {\it Walsh support} of an $s$-plateaued function $g$ is defined by
$$\supp(\mathcal{W}_g)=\left\{\beta\in\mathbb{F}_{p^n}~|~\mid\mathcal{W}_g(\beta)\mid^{2}=p^{n+s}\right\}.$$
A plateaued function $f$ is said to be {\it balanced} if $\mathcal{W}_f(0)=0$; otherwise, $f$ is called {\it unbalanced}.
\begin{lemma}\label{tffunction g}\textnormal{\cite{Mesnager-2019}}
Let $g: \mathbb{F}_{p^n}\to \mathbb{F}_p$ be an $s$-plateaued function and $s$ be an integer with $0\leq s \leq n$. Then for $\beta\in\mathbb{F}_{p^n}$, $\mid\mathcal{W}_g(\beta)\mid^{2}$ takes $p^{n-s}$ times the value $p^{n+s}$ and $p^{n}-p^{n-s}$ times the value $0$.
\end{lemma}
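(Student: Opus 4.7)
The plan is to deduce the claimed counts directly from Parseval's identity, combined with the dichotomous shape of the Walsh spectrum that comes for free from the $s$-plateaued hypothesis.

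First I would recall Parseval's identity in the normalization used in the paper. Since
\[
\mathcal{W}_g(\beta) = \sum_{x\in\mathbb{F}_{p^n}}\zeta_p^{g(x)-\mathrm{Tr}_1^n(\beta x)},
\]
orthogonality of the additive characters $\phi_\beta$ (namely $\sum_{\beta}\zeta_p^{\mathrm{Tr}_1^n(\beta(x-y))}=p^n$ if $x=y$ and $0$ otherwise) yields
\[
\sum_{\beta\in\mathbb{F}_{p^n}}|\mathcal{W}_g(\beta)|^2
= \sum_{x,y\in\mathbb{F}_{p^n}}\zeta_p^{g(x)-g(y)}\sum_{\beta\in\mathbb{F}_{p^n}}\zeta_p^{-\mathrm{Tr}_1^n(\beta(x-y))}
= p^{2n}.
\]
This is the only nontrivial ingredient, and the main (and really only) obstacle is to make sure this identity is stated cleanly for the chosen normalization; once it is in hand, the rest is arithmetic.

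Next I would use the plateaued hypothesis: $|\mathcal{W}_g(\beta)|^2\in\{0, p^{n+s}\}$ for every $\beta\in\mathbb{F}_{p^n}$. Let $N=\#\,\mathrm{supp}(\mathcal{W}_g)$ be the number of $\beta$ with $|\mathcal{W}_g(\beta)|^2=p^{n+s}$. Then Parseval gives
\[
N\cdot p^{n+s} = \sum_{\beta\in\mathbb{F}_{p^n}}|\mathcal{W}_g(\beta)|^2 = p^{2n},
\]
so $N=p^{n-s}$. Therefore $|\mathcal{W}_g(\beta)|^2$ equals $p^{n+s}$ on exactly $p^{n-s}$ values of $\beta$ and vanishes on the remaining $p^n-p^{n-s}$, which is exactly the claim.

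Writing the argument this way keeps things short and highlights that the plateauedness assumption does all the heavy lifting by forcing a two-valued spectrum; Parseval then fixes the multiplicities uniquely.
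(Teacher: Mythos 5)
Your argument is correct: Parseval's identity in this normalization gives $\sum_{\beta\in\mathbb{F}_{p^n}}|\mathcal{W}_g(\beta)|^2=p^{2n}$, and combined with the two-valued spectrum $\{0,p^{n+s}\}$ this forces exactly $p^{n-s}$ values of $\beta$ in the Walsh support. The paper itself states this lemma without proof, citing Mesnager et al., and your Parseval computation is the standard derivation of that cited result, so there is nothing to reconcile.
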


The non-trivial subclass of plateaued functions was introduced by Mesnager
et al. \cite{Mesnager-2019}.
\begin{definition}
Let $g: \mathbb{F}_{p^n}\to \mathbb{F}_p$ be an $s$-plateaued function and $s$ be an integer with $0\leq s \leq n$. Then $f$ is called weakly regular s-plateaued if there exists a complex number $u$ having unit magnitude
such that
$$\mathcal{W}_g(\beta)\in\left\{0, up^{\frac{n+s}{2}}\zeta^{g^{*}(\beta)}_p\right\} $$
for all $\beta\in\mathbb{F}_{p^n}$, with $g^*$ being a $p$-ary function over $\mathbb{F}_{p^n}$ and $g^*(\beta)=0$ for all $\beta\in \mathbb{F}_{p^n}\setminus \supp(\mathcal{W}_g)$. Otherwise $g$ is called non-weakly regular $p$-ary $s$-plateaued. Note that a weakly regular function $g$ is said to be regular if $u = 1$. It is obvious that a weakly regular plateaude function
$f$ satisfies
\begin{equation*}
\mathcal{W}_g(\beta)=\epsilon\sqrt{p^*}^{n+s} \zeta_p^{g^*(\beta)},
\end{equation*}
where $\epsilon =\pm 1$ is called the sign of the Walsh transform of $g$.
\end{definition}
%%%%%%%%%

In \cite{Tang-Li-Qi-Zhou-Helleseth-2016}, Tang, Li, Qi, Zhou, and Helleseth introduced a very interesting  special class $\mathcal{RF}$ of $p$-ary weakly regular bent functions.
Precisely, functions in the class $\mathcal{RF}$ are $p$-ary weakly regular bent and satisfy the following conditions:

${\rm (P1)}$ $f(0)=0;$

${\rm (P2)}$ There exists an integer $l$ such that $(l-1,p-1)=1$ and $f(cx)=c^{l}f(x)$ for any $c\in \mathbb{F}_p^{\star}$ and $x\in \mathbb{F}_{p^n}^{\star}.$

From  \cite{Tang-Li-Qi-Zhou-Helleseth-2016},
it has been proved that the dual $f^*\in \mathcal{RF}$ for any $f\in \mathcal{RF}$. In Subsection \ref{p-p-class}, we always assume that the dual $f^*$ of a weakly regular plateaued function $f$ belongs to $\mathcal{RF}$.

The next lemma shall be used frequently in Section \ref{MM-B-class}.
\begin{lemma}\label{function g}\textnormal{\cite{Helleseth-Kholosha-2006}}
Let $p$ be a prime. Let $n=2m$ and $\lambda \in \mathbb{F}_{p^{m}}^{\star}$. Then the $p$-ary monomial $g_{\lambda}(x)=\mathrm{Tr}^{m}_{1}\left(\lambda x^{p^{m}+1}\right)$ is a weakly regular bent function. Moreover, for all $a\in\mathbb{F}_{p^n},$ the corresponding Walsh transform coefficient of $g_{\lambda}(x)$ is %equal to \mathcal{W}_{g_\lambda}(a)
$$ \mathcal{W}_{g_\lambda}(a)=-p^m\zeta_{p}^{-\mathrm{Tr}^{m}_{1}\left(\lambda^{-1}a^{p^m+1}\right)}.$$
\end{lemma}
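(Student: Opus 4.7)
The plan is to compute the Walsh coefficient $\mathcal{W}_{g_\lambda}(a)$ directly by reducing everything to a single absolute trace on $\mathbb{F}_{p^m}$ and then completing a ``$(p^m+1)$-th power'' square.

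First I would use trace transitivity $\mathrm{Tr}_1^n = \mathrm{Tr}_1^m \circ \mathrm{Tr}_m^n$. Since $n=2m$, we have $\mathrm{Tr}_m^n(ax) = ax + a^{p^m}x^{p^m}$, and also $x^{p^m+1} \in \mathbb{F}_{p^m}$ for all $x\in\mathbb{F}_{p^n}$ (because $(x^{p^m+1})^{p^m} = x^{p^{2m}+p^m} = x^{p^m+1}$) so that $\lambda x^{p^m+1}\in\mathbb{F}_{p^m}$ already. Thus
\[
\mathcal{W}_{g_\lambda}(a) \;=\; \sum_{x\in\mathbb{F}_{p^n}} \zeta_p^{\mathrm{Tr}_1^m\bigl(\lambda x^{p^m+1} - ax - a^{p^m}x^{p^m}\bigr)}.
\]

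Next I would look for $\alpha\in\mathbb{F}_{p^n}$ so that $\lambda(x-\alpha)^{p^m+1} = \lambda x^{p^m+1} - ax - a^{p^m}x^{p^m} + \lambda\alpha^{p^m+1}$. Expanding the left side gives the two conditions $\lambda\alpha = a^{p^m}$ and $\lambda\alpha^{p^m} = a$; both are satisfied by $\alpha = a^{p^m}/\lambda$ since $\lambda\in\mathbb{F}_{p^m}^\star$ forces $\lambda^{p^m}=\lambda$ and $a^{p^{2m}}=a$. A short computation then yields $\lambda\alpha^{p^m+1} = \lambda^{-1}a^{p^m+1}$, and after the substitution $y = x-\alpha$,
\[
\mathcal{W}_{g_\lambda}(a) \;=\; \zeta_p^{-\mathrm{Tr}_1^m(\lambda^{-1} a^{p^m+1})} \sum_{y\in\mathbb{F}_{p^n}} \zeta_p^{\mathrm{Tr}_1^m(\lambda y^{p^m+1})}.
\]

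The remaining sum $S$ is independent of $a$, so once I show $S=-p^m$ the formula and the weakly regular bent property (with sign $u=-1$) both follow. For this I would use that the norm map $N:\mathbb{F}_{p^n}^\star\to\mathbb{F}_{p^m}^\star$, $y\mapsto y^{p^m+1}$, is surjective and $(p^m+1)$-to-one. Splitting off $y=0$ gives
\[
S \;=\; 1 + (p^m+1)\sum_{z\in\mathbb{F}_{p^m}^\star}\zeta_p^{\mathrm{Tr}_1^m(\lambda z)} \;=\; 1 - (p^m+1) \;=\; -p^m,
\]
where I used the standard character-sum identity $\sum_{z\in\mathbb{F}_{p^m}}\zeta_p^{\mathrm{Tr}_1^m(\lambda z)}=0$ valid for $\lambda\neq 0$.

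The only mildly tricky step is spotting the shift $\alpha = a^{p^m}/\lambda$ and verifying that the cross terms cancel using $\lambda^{p^m}=\lambda$; everything else is bookkeeping. Since $|\mathcal{W}_{g_\lambda}(a)|=p^m = p^{n/2}$ for every $a$, $g_\lambda$ is bent, and because the prefactor $-p^m$ has a fixed sign independent of $a$, it is weakly regular with dual $g_\lambda^*(a) = -\mathrm{Tr}_1^m(\lambda^{-1}a^{p^m+1})$.
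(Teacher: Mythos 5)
The paper does not prove this lemma: it is quoted directly from Helleseth--Kholosha \cite{Helleseth-Kholosha-2006}, so there is no internal proof to compare against. Your argument is correct and complete. The reduction $\mathrm{Tr}_1^n(ax)=\mathrm{Tr}_1^m(ax+a^{p^m}x^{p^m})$ is right, the shift $\alpha=a^{p^m}/\lambda$ does satisfy both cross-term conditions $\lambda\alpha=a^{p^m}$ and $\lambda\alpha^{p^m}=a$ (using $\lambda^{p^m}=\lambda$ and $a^{p^{2m}}=a$), the constant term $\lambda\alpha^{p^m+1}=\lambda^{-1}a^{p^m+1}$ checks out, and the evaluation $S=\sum_{y\in\mathbb{F}_{p^n}}\zeta_p^{\mathrm{Tr}_1^m(\lambda y^{p^m+1})}=1+(p^m+1)\cdot(-1)=-p^m$ via the surjective, $(p^m+1)$-to-one norm map is the standard device. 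The computation is uniform in $p$ (the signs degenerate harmlessly when $p=2$), and the conclusions that $g_\lambda$ is weakly regular bent with constant sign $u=-1$ and dual $g_\lambda^*(a)=-\mathrm{Tr}_1^m\left(\lambda^{-1}a^{p^m+1}\right)$ follow exactly as you state.
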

\subsection{Some background related to coding theory}

The \emph{Hamming weight} of a codeword ${\mathbf{a}}=(a_0,a_1,\cdots,a_{n-1})$ of $\mathbb{F}_q$, denoted by $\wt(\mathbf{a})$, is the cardinality of its \emph{support} defined as Supp$(\mathbf{a})=\{0\leq i\leq n-1~|~ a_i\neq 0\}$. An $[n, k, d]$ linear code $\mathcal C$ over $\mathbb{F}_q$ is a $k$-dimensional vector subspace of the $n$-dimensional vector space $\mathbb{F}_{p^n}$ with minimum nonzero Hamming weight $d$, where $p$ is a prime and $n$ is a positive integer. An element of $\mathcal C$ is said to be a \emph{codeword}.
The \emph{weight enumerator} of $\mathcal C$ with length $n$ is the polynomial $1+A_1z+A_2z^2+\cdots+A_{n}z^n$, where $A_i$ denotes the number of codewords with Hamming weight $i$ in $\mathcal C$. The sequence $(1, A_1, A_2,\cdots, A_n)$ is called the \emph{weight distribution} of $\mathcal C$. A code $\mathcal C$ is said to be \emph{$t$-weight} if the number of nonzero $A_i$ in the
sequence $(A_1, A_2,\cdots, A_n)$ is equal to $t$.

Given a linear code $\mathcal C$ of length $n$ over $\mathbb F_{q}$, its (Euclidean) dual code, denoted by $\mathcal C^ {\perp}$, is defined as
\begin{align*}
\mathcal C^{\perp}=\Biggl\{ \mathbf{b}=(b_0, b_1,\cdots, b_{n-1})\in \mathbb F_{q}^n~|~ & \mathbf{b}\cdot\mathbf{c}=\sum _{i=0}^{n-1} b_i  c_i=0,\\ \forall~ \mathbf{c}=(c_0, c_1, \cdots, c_{n-1}) \in \mathcal C \Biggl\}.
\end{align*}
If $\mathcal{C} \subseteq\mathcal{C}^{\perp}$, then $\mathcal{C}$ is called a \emph{self-orthogonal code}, that is, $\mathbf{a}\cdot\mathbf{b}=0$ for any codewords $\mathbf{a},\mathbf{b}\in\mathcal{C}$.  In particular, if $\mathcal{C} = \mathcal{C}^{\perp}$, then $\mathcal{C}$ is said to be \emph{self-dual}. Self-orthogonal codes are a very important subclass of linear codes as they have many nice applications.

A codeword $\mathbf{b}$ of $\mathcal C$ is covered by another codeword $\mathbf{a}$ of $\mathcal C$ if Supp($\mathbf{a}$) contains Supp($\mathbf{b}$). A nonzero codeword $\mathbf{a}\in \mathcal{C}$ is called \emph{minimal} if it only covers the codewords $c\mathbf{a}$ for all $c\in \mathbb{F}_p$, but no other nonzero codewords of $\mathcal C$. A linear code $\mathcal C$ is called \emph{minimal} if every nonzero codeword of $\mathcal C$ is minimal. The following lemma shows that a linear code $\mathcal C$ is minimal if its weights are close enough to each other, which was proved by Ashikhmin and Barg  \cite{Ashikhmin-Barg-1988}.
\begin{lemma}\label{LEM1} \textnormal{\cite{Ashikhmin-Barg-1988}}
Let $\mathcal C$ be a linear code over $\mathbb{F}_p$, $w_{\min}$ and $w_{\max}$ denote the minimum and maximum
Hamming weights of nonzero codewords in $\mathcal C$, respectively. If
$$\frac{p-1}{p}<\frac{w_{\min}}{w_{\max}},$$ then  $\mathcal C$ is minimal.
\end{lemma}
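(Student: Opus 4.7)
The plan is to prove Lemma \ref{LEM1} by contrapositive, showing that if $\mathcal{C}$ fails to be minimal then $(p-1)w_{\max} \geq p\,w_{\min}$. I would fix a nonzero $\mathbf{a}\in\mathcal{C}$ and suppose some nonzero $\mathbf{b}\in\mathcal{C}$ is covered by $\mathbf{a}$ but is not a scalar multiple of $\mathbf{a}$; then push the inclusion $\supp(\mathbf{b})\subseteq\supp(\mathbf{a})$ through a weight-averaging argument over the $p$-element family $\{\mathbf{b}-c\mathbf{a}: c\in\mathbb{F}_p\}$.

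The key computational step is to observe that for each coordinate $i\notin\supp(\mathbf{a})$, both $a_i=0$ and $b_i=0$, so the $i$-th coordinate of $\mathbf{b}-c\mathbf{a}$ vanishes for every $c$; while for $i\in\supp(\mathbf{a})$, the coordinate $b_i-ca_i$ vanishes for exactly one value $c=b_i/a_i$. Hence
\begin{equation*}
\wt(\mathbf{b}-c\mathbf{a})=\wt(\mathbf{a})-\#\{\,i\in\supp(\mathbf{a}):b_i=ca_i\,\},
\end{equation*}
and summing over $c\in\mathbb{F}_p$ each coordinate of $\supp(\mathbf{a})$ is "killed" exactly once, giving
\begin{equation*}
\sum_{c\in\mathbb{F}_p}\wt(\mathbf{b}-c\mathbf{a})=p\,\wt(\mathbf{a})-\wt(\mathbf{a})=(p-1)\,\wt(\mathbf{a}).
\end{equation*}

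Since $\mathbf{b}$ is not a scalar multiple of $\mathbf{a}$, every $\mathbf{b}-c\mathbf{a}$ (including $c=0$, where it equals $\mathbf{b}\neq\mathbf{0}$) is a nonzero codeword, hence has weight at least $w_{\min}$. Therefore the left-hand side is bounded below by $p\,w_{\min}$, while the right-hand side is bounded above by $(p-1)w_{\max}$, yielding $(p-1)w_{\max}\geq p\,w_{\min}$, i.e.\ $\frac{w_{\min}}{w_{\max}}\leq\frac{p-1}{p}$. Contraposing gives the lemma.

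The argument is essentially a clean double-count, so there is no real obstacle; the only point requiring care is the bookkeeping that uses $\supp(\mathbf{b})\subseteq\supp(\mathbf{a})$ to restrict the support computation to $\supp(\mathbf{a})$ and the fact that each $i\in\supp(\mathbf{a})$ contributes to exactly one value of $c$ (which relies on $a_i\in\mathbb{F}_p^{\star}$ being invertible). I would also remark that the bound is tight, as witnessed by codes meeting the Ashikhmin--Barg condition with equality, motivating the paper's subsequent focus on constructions that \emph{violate} it.
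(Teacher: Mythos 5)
Your proof is correct: the double-count $\sum_{c\in\mathbb{F}_p}\wt(\mathbf{b}-c\mathbf{a})=(p-1)\wt(\mathbf{a})$ for a covered, non-proportional $\mathbf{b}$, combined with the lower bound $p\,w_{\min}$ on the left and the upper bound $(p-1)w_{\max}$ on the right, is exactly the standard Ashikhmin--Barg argument. The paper states this lemma only as a cited result from \cite{Ashikhmin-Barg-1988} and gives no proof of its own, so there is nothing in the text to compare against; your write-up would serve as a faithful self-contained justification.
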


The above minimality condition is sufficient but not necessary, Ding,
Heng and Zhou in \cite{DHZ18,Heng-Ding-Zhou2018} derived a necessary and sufficient condition on weights for a given linear code to be minimal as follows.
\begin{lemma}\label{nsc}\textnormal{\cite{DHZ18,Heng-Ding-Zhou2018}}
Let $\mathcal C$ be a linear code over $\mathbb{F}_p$. Then $\mathcal C$ is minimal if and only if
$$\sum\limits_{z \in \mathbb{F}^{\star}_p}\wt(\mathbf{a}+z\mathbf{b})\ne(p-1)\wt(\mathbf{a})-\wt(\mathbf{b})$$
for any linearly independent codewords $\mathbf{a},\mathbf{b}\in\mathcal C$.
\end{lemma}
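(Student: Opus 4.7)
The plan is to reduce the lemma to a single coordinatewise counting identity. By the definition of minimality recalled just above the lemma, $\mathcal{C}$ is minimal iff for every pair of linearly independent codewords $\mathbf{a},\mathbf{b}\in\mathcal{C}$ one has $\supp(\mathbf{b})\not\subseteq\supp(\mathbf{a})$ (otherwise $\mathbf{b}$ would be a nonzero codeword, not a scalar multiple of $\mathbf{a}$, whose support is contained in $\supp(\mathbf{a})$). Hence it suffices to prove, for every linearly independent pair $(\mathbf{a},\mathbf{b})$, the equivalence between the equality
$$\sum_{z\in\mathbb{F}_p^{\star}}\wt(\mathbf{a}+z\mathbf{b})=(p-1)\wt(\mathbf{a})-\wt(\mathbf{b})$$
and the support-containment condition $\supp(\mathbf{b})\subseteq\supp(\mathbf{a})$. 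The lemma then follows by taking the contrapositive of this equivalence and comparing with the reformulation of minimality.

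To establish the equivalence I would compute the left-hand side by swapping the two sums. Writing $\mathbf{a}=(a_0,\dots,a_{n-1})$ and $\mathbf{b}=(b_0,\dots,b_{n-1})$ and using that $\wt(\mathbf{v})=\#\{i:v_i\neq 0\}$, one gets
$$\sum_{z\in\mathbb{F}_p^{\star}}\wt(\mathbf{a}+z\mathbf{b})=\sum_{i=0}^{n-1}\#\{z\in\mathbb{F}_p^{\star}: a_i+zb_i\neq 0\}.$$
A coordinatewise check shows the inner count is determined by the zero pattern of $(a_i,b_i)$: it is $0$ if both vanish, $p-1$ if exactly one vanishes, and $p-2$ if both are nonzero, the last case because $a_i+zb_i=0$ then has the unique root $z=-a_ib_i^{-1}\in\mathbb{F}_p^{\star}$.

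Letting $n_{11},n_{10},n_{01}$ denote the number of indices with both $a_i,b_i$ nonzero, only $a_i$ nonzero, and only $b_i$ nonzero respectively, one has $\wt(\mathbf{a})=n_{11}+n_{10}$ and $\wt(\mathbf{b})=n_{11}+n_{01}$, and a short rearrangement yields the clean identity
$$\sum_{z\in\mathbb{F}_p^{\star}}\wt(\mathbf{a}+z\mathbf{b})=(p-1)\wt(\mathbf{a})-\wt(\mathbf{b})+p\,n_{01}.$$
Since $n_{01}\ge 0$ with equality precisely when $\supp(\mathbf{b})\subseteq\supp(\mathbf{a})$, the displayed equality in the lemma is equivalent to $n_{01}=0$, i.e., to support containment, which proves the equivalence and hence the lemma. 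No serious obstacle is expected: the argument is driven entirely by the four-case coordinate count and the elementary algebra producing the $p\,n_{01}$ correction term; the only care needed is in the ``both nonzero'' case, where the uniqueness of the root $z=-a_ib_i^{-1}$ in $\mathbb{F}_p^{\star}$ is what makes the coefficient of $n_{01}$ come out to exactly $p$.
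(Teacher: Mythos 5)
Your argument is correct. The paper itself does not prove this lemma---it is quoted from \cite{DHZ18,Heng-Ding-Zhou2018}---but your coordinatewise count is exactly the standard argument in those references: the identity $\sum_{z\in\mathbb{F}_p^{\star}}\wt(\mathbf{a}+z\mathbf{b})=(p-1)\wt(\mathbf{a})-\wt(\mathbf{b})+p\,n_{01}$ checks out (the four cases give $0$, $p-1$, $p-1$, $p-2$ as you state, and the algebra is right), the reformulation of minimality as ``no linearly independent pair has $\supp(\mathbf{b})\subseteq\supp(\mathbf{a})$'' is faithful to the definition given in the paper, and equality holding precisely when $n_{01}=0$ closes the equivalence.
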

\begin{lemma}\label{2b}\textnormal{\cite{Huffman-Pless-2003}} $(Singleton~bound)$  If a $q$-ary $[n,k,d]$ code exists, then
$$k\leq n-d+1.$$
\end{lemma}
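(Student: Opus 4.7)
The plan is to bound $|\mathcal{C}| = q^k$ from above by $q^{n-d+1}$ via a puncturing argument. First, I would define the linear projection $\pi: \mathcal{C} \to \mathbb{F}_q^{n-d+1}$ sending a codeword $\mathbf{c} = (c_0, c_1, \ldots, c_{n-1}) \in \mathcal{C}$ to its restriction $(c_0, c_1, \ldots, c_{n-d})$ on the first $n-d+1$ coordinates (equivalently, delete the last $d-1$ positions). Linearity is immediate since deleting coordinates is a coordinate-wise linear operation.

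Next, I would show that $\pi$ is injective. Suppose $\pi(\mathbf{c}) = \pi(\mathbf{c}')$ for some $\mathbf{c}, \mathbf{c}' \in \mathcal{C}$. Then $\mathbf{c} - \mathbf{c}' \in \mathcal{C}$ has zeros in its first $n-d+1$ coordinates, so its Hamming weight is at most $n - (n-d+1) = d-1$. Since every nonzero codeword of $\mathcal{C}$ has weight at least $d$ by definition of the minimum distance, this forces $\mathbf{c} - \mathbf{c}' = \mathbf{0}$, i.e.\ $\mathbf{c} = \mathbf{c}'$.

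Finally, injectivity yields $q^k = |\mathcal{C}| \leq |\mathbb{F}_q^{n-d+1}| = q^{n-d+1}$, and taking $\log_q$ of both sides gives $k \leq n - d + 1$, as claimed.

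There is essentially no obstacle here beyond selecting the right map; the only point requiring care is ensuring the puncturing length $d-1$ is chosen so that the weight of any nonzero element in the kernel would be strictly less than $d$, forcing triviality of the kernel. No deeper machinery (duality, generator matrices in standard form, etc.) is needed, although alternative proofs via row-reducing a generator matrix to systematic form would yield the same bound with comparable effort.
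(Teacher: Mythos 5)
Your puncturing argument is correct and complete: deleting the last $d-1$ coordinates gives a linear map whose kernel consists of codewords of weight at most $d-1$, hence is trivial, and injectivity yields $q^k\leq q^{n-d+1}$. The paper does not prove this lemma itself but merely cites it from the textbook of Huffman and Pless, so there is no in-paper proof to compare against; your argument is the standard one and would serve as a valid self-contained justification.
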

\begin{lemma}\label{2bt}\textnormal{\cite{Huffman-Pless-2003}} $(Griesmer~bound)$ If a $q$-ary $[n,k,d]$ code exists, then
$$n\geqslant\sum_{i=0}^{k-1}\lceil\frac{d}{q^{i}}\rceil.$$
Here, $\lceil\cdot\rceil$ denotes the ceiling function.
\end{lemma}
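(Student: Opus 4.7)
The plan is to argue by induction on the dimension $k$, using the classical \emph{residual code} construction with respect to a minimum-weight codeword. The base case $k=1$ is immediate: any $[n,1,d]$ code is generated by a single vector of weight $d$, so $n\geq d=\lceil d/q^{0}\rceil$, which matches the right-hand side.

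For the inductive step, I would fix $\mathbf{c}\in\mathcal{C}$ with $\mathrm{wt}(\mathbf{c})=d$, set $I=\mathrm{Supp}(\mathbf{c})$, $J=\{1,\ldots,n\}\setminus I$, and study the projection $\pi_{J}:\mathcal{C}\to\mathbb{F}_{q}^{n-d}$ onto the coordinates in $J$. A short argument shows that the kernel of $\pi_{J}$ is exactly $\mathbb{F}_{q}\mathbf{c}$: any codeword supported on $I$ has weight at most $d$, and if it were not proportional to $\mathbf{c}$ one could scale and subtract to produce a codeword supported strictly inside $I$, contradicting the minimality of $d$. Hence the residual code $\mathcal{C}'=\pi_{J}(\mathcal{C})$ has parameters $[n-d,k-1,d']$.

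The heart of the proof is to show $d'\geq\lceil d/q\rceil$. Given a nonzero image $\pi_{J}(\mathbf{a})\in\mathcal{C}'$, I would consider the entire coset $\{\mathbf{a}+\lambda\mathbf{c}:\lambda\in\mathbb{F}_{q}\}$: all $q$ of its members project to $\pi_{J}(\mathbf{a})$, and none of them vanishes. For each $i\in I$ the equation $a_{i}+\lambda c_{i}=0$ has a unique solution in $\lambda$ since $c_{i}\neq 0$, so the total number of zero entries on $I$ accumulated across the coset is exactly $|I|=d$. Combined with $\mathrm{wt}(\mathbf{a}+\lambda\mathbf{c})\geq d$ for each $\lambda$, this gives
\begin{equation*}
q\cdot\mathrm{wt}(\pi_{J}(\mathbf{a}))+(q-1)d=\sum_{\lambda\in\mathbb{F}_{q}}\mathrm{wt}(\mathbf{a}+\lambda\mathbf{c})\geq qd,
\end{equation*}
so $\mathrm{wt}(\pi_{J}(\mathbf{a}))\geq d/q$ and therefore $\mathrm{wt}(\pi_{J}(\mathbf{a}))\geq\lceil d/q\rceil$, as desired.

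Finally, applying the inductive hypothesis to $\mathcal{C}'$ together with the elementary identity $\lceil\lceil d/q\rceil/q^{i}\rceil=\lceil d/q^{i+1}\rceil$ yields
\begin{equation*}
n-d\geq\sum_{i=0}^{k-2}\left\lceil\frac{d'}{q^{i}}\right\rceil\geq\sum_{j=1}^{k-1}\left\lceil\frac{d}{q^{j}}\right\rceil,
\end{equation*}
and adding $d=\lceil d/q^{0}\rceil$ to both sides closes the induction. I expect the averaging identity in the third paragraph to be the main obstacle: identifying the correct kernel of $\pi_{J}$ and summing weights over the coset is where the minimum-distance hypothesis is genuinely used. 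Everything else amounts to bookkeeping together with the ceiling identity that lifts the bound from one step to the next.
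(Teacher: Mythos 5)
Your proof is correct. The paper offers no argument of its own for this lemma --- it is quoted directly from Huffman and Pless \cite{Huffman-Pless-2003} --- and your residual-code induction is the standard derivation of the Griesmer bound: the identification of the kernel of $\pi_J$ with $\mathbb{F}_q\mathbf{c}$, the coset-averaging identity yielding $d'\geq\lceil d/q\rceil$, and the ceiling identity $\lceil\lceil d/q\rceil/q^{i}\rceil=\lceil d/q^{i+1}\rceil$ are all sound, so there is nothing to correct or compare.
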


An $[n, k, d]$ code is called optimal if there is no $[n, k, d+1]$ code. An $[n, k, d]$ code is said to be almost optimal if an $[n, k, d+1]$ code is optimal.

Let $p$ be a prime and $f(x)$ be a function from $\mathbb{F}_{p^n}$ to $\mathbb{F}_{p}$  and $f(x)\ne\mathrm{Tr}^{n}_{1}(\omega x)$ for any $\omega\in\mathbb{F}_{p^n}.$ Based on the $p$-ary function $f(x)$, there exist several classes of classical constructions of linear codes below:
\begin{equation}\label{code_1}
\mathcal{C}^{\star}_{f} =\left\{\mathbf{c}_{\alpha,\beta}=(\alpha f(x)-\mathrm{Tr}_1^n(\beta x))_{x\in\mathbb{F}^{\star}_{ p^{n}}}~|~\alpha \in \mathbb{F}_{p},\beta\in \mathbb{F}_{p^n}\right\}.
\end{equation}
\begin{equation}\label{vzcode_1}
\mathcal{C}_{f} =\left\{\mathbf{c}_{\alpha,\beta}=(\alpha f(x)-\mathrm{Tr}_1^n(\beta x))_{x\in\mathbb{F}_{ p^{n}}}~|~\alpha \in \mathbb{F}_{p},\beta\in \mathbb{F}_{p^n}\right\}.
\end{equation}
Moreover,the augmented code of extended code of $\mathcal{C}^{\star}_{f}$ is given by
\begin{equation}\label{lcode_1}
\overline{\mathcal{C}}_{f} =\left\{\mathbf{c}_{\alpha,\beta,c}=(\alpha f(x)-\mathrm{Tr}_1^n(\beta x)-c)_{x\in\mathbb{F}_{ p^{n}}}~|~\alpha \in \mathbb{F}_{p},\beta\in \mathbb{F}_{p^n}, c \in \mathbb{F}_{p}\right\}.
\end{equation}

In order to simplify the computation of the weights for the linear codes in Eqs. (\ref{code_1}) to (\ref{lcode_1}), we present the following results.
\begin{lemma}\label{code-lemma}\textnormal{\cite{2019-Xu-Qu}}
Let the codes $\mathcal{C}^{\star}_{f}$ and $\mathcal{C}_{f}$ be defined in \textnormal{Eqs. (\ref{code_1})} and \textnormal{(\ref{vzcode_1})}, respectively. Then $\mathcal{C}^{\star}_{f}$ is a $\left[p^n-1, n+1\right]$ code and $\mathcal{C}_{f}$ is a $\left[p^n, n+1\right]$ code. Furthermore, the Hamming weight of $\mathbf{c}_{\alpha,\beta}$ is
\begin{eqnarray}\label{codea3}
\wt(\mathbf{c}_{\alpha,\beta})
&=&\left\{ \begin{array}{ll}
0, &\mathrm{if~} \alpha=\beta=0,\\
p^{n}-p^{n-1}, &\mathrm{if~}  \alpha=0, \beta\ne0,\\
p^{n}-p^{n-1}-\frac{1}{p}\sum\limits_{z\in\mathbb{F}^{\star}_{p}}\sigma_{z\alpha}\left(\mathcal{W}_{f}\left(\alpha^{-1}\beta\right)\right), & \mathrm{ otherwise}.
\end{array} \right.
\end{eqnarray}
\end{lemma}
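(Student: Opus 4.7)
The plan is to prove both the parameters and the weight formula via a single character-sum argument. For the dimension claim, I would show that the linear map $(\alpha,\beta) \mapsto \mathbf{c}_{\alpha,\beta}$ from $\mathbb{F}_p \times \mathbb{F}_{p^n}$ to $\mathbb{F}_p^{p^n-1}$ is injective. If $\mathbf{c}_{\alpha,\beta}$ is the zero codeword, then $\alpha f(x) = \mathrm{Tr}_1^n(\beta x)$ for every $x \in \mathbb{F}^{\star}_{p^n}$; since the Delsarte-type construction assumes $f(0) = 0$, this identity extends to $x = 0$. If $\alpha \neq 0$, one solves $f(x) = \mathrm{Tr}_1^n(\alpha^{-1}\beta x)$, contradicting the standing hypothesis that $f$ is not a trace of a linear form. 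So $\alpha = 0$, forcing $\mathrm{Tr}_1^n(\beta\cdot) \equiv 0$ and hence $\beta = 0$. This yields injectivity, so $\mathcal{C}_f^{\star}$ has dimension $n+1$ and length $p^n - 1$; the same reasoning handles $\mathcal{C}_f$ at length $p^n$.

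For the weight formula, the two trivial cases ($\alpha = \beta = 0$ and $\alpha = 0,\,\beta \neq 0$) are immediate from the kernel structure of the nonzero $\mathbb{F}_p$-linear form $\mathrm{Tr}_1^n(\beta\,\cdot)$. In the main case $\alpha \neq 0$, I would apply the orthogonality identity $\#\{x : g(x) = 0\} = p^{-1}\sum_{z \in \mathbb{F}_p}\sum_x \zeta_p^{zg(x)}$ to express the number of zero coordinates, $N_{\alpha,\beta} := \#\{x \in \mathbb{F}^{\star}_{p^n}: \alpha f(x) - \mathrm{Tr}_1^n(\beta x) = 0\}$, as a double sum. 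The $z = 0$ term contributes $(p^n - 1)/p$, while for $z \neq 0$ the exponent $z\alpha f(x) - z\mathrm{Tr}_1^n(\beta x)$ is the image of $f(x) - \mathrm{Tr}_1^n(\alpha^{-1}\beta x)$ under the Galois automorphism $\sigma_{z\alpha}$ of $\mathbb{Q}(\zeta_p)/\mathbb{Q}$. Hence
$$\sum_{x \in \mathbb{F}_{p^n}}\zeta_p^{z\alpha f(x) - z\mathrm{Tr}_1^n(\beta x)} = \sigma_{z\alpha}\bigl(\mathcal{W}_f(\alpha^{-1}\beta)\bigr),$$
and subtracting the $x = 0$ contribution (equal to $1$ by $f(0) = 0$), summing over $z \in \mathbb{F}^{\star}_p$, and rewriting $\wt(\mathbf{c}_{\alpha,\beta}) = (p^n - 1) - N_{\alpha,\beta}$ gives the announced formula after elementary simplification.

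I expect the only delicate point to be bookkeeping between the summation domain $\mathbb{F}^{\star}_{p^n}$ appearing in the codeword and the domain $\mathbb{F}_{p^n}$ intrinsic to $\mathcal{W}_f$; the Delsarte hypothesis $f(0) = 0$ is precisely what makes the resulting correction term a constant (independent of $\alpha,\beta$) that gets absorbed into the $p^n - p^{n-1}$ part of the stated formula. Everything else reduces to routine manipulations of additive character sums and the observation that $\sigma_{z\alpha}$ commutes with the sum defining the Walsh transform.
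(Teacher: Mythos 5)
Your proposal is correct: the paper states this lemma as a quoted result from \cite{2019-Xu-Qu} without reproving it, but your argument --- orthogonality of additive characters to count zero coordinates, with the $z\neq 0$ terms rewritten as $\sigma_{z\alpha}\bigl(\mathcal{W}_f(\alpha^{-1}\beta)\bigr)$ --- is exactly the technique the paper itself uses for the companion Lemma \ref{1code-lemma}, and your bookkeeping of the $x=0$ term under $f(0)=0$ correctly reconciles the punctured code $\mathcal{C}^{\star}_{f}$ with the unpunctured $\mathcal{C}_{f}$.
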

\begin{lemma}\label{1code-lemma}
Let $\overline{\mathcal{C}}_{f}$ be the linear code defined by \textnormal{Eq. (\ref{lcode_1})}, where $f(x)$ is a function from $\mathbb{F}_{p^n}$ to $\mathbb{F}_{p}$. Then $\overline{\mathcal{C}}_{f}$ is a $[p^n, n+1]$ code and the Hamming weight of $\mathbf{c}_{\alpha,\beta,c}$ is
\begin{eqnarray}\label{1codea3}
\wt(\mathbf{c}_{\alpha,\beta,c})
&=&\left\{ \begin{array}{ll}
0, &\mathrm{if~} \alpha=\beta=c=0,\\
p^{n}, &\mathrm{if~}  \alpha=\beta=0, c\ne0,\\
p^{n}-p^{n-1}, &\mathrm{if~}  \alpha=0,\beta\ne0,\\
p^{n}-p^{n-1}-\sum\limits_{z\in\mathbb{F}^{\star}_{p}}\sigma_{z\alpha}\left(\mathcal{W}_{f}\left(\alpha^{-1}\beta\right)\zeta_p^{\alpha^{-1}c}\right), & \mathrm{if~} \alpha\ne0.
\end{array} \right.
\end{eqnarray}
\end{lemma}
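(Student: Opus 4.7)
The plan is to follow the same strategy as in the proof of Lemma \ref{code-lemma}, extending that argument by absorbing the extra constant $-c$ into the additive characters via an additional factor $\zeta_p^{-zc}$ in the exponential sum. The length $p^n$ is immediate from the defining expression in (\ref{lcode_1}). For the dimension $n+1$, one computes the kernel of the parameter map $(\alpha,\beta,c) \mapsto \mathbf{c}_{\alpha,\beta,c}$: if $\alpha = 0$, then $\mathrm{Tr}_1^n(\beta x) + c \equiv 0$ forces $\beta = 0$ and $c = 0$; if $\alpha \ne 0$, one would obtain $f(x) = \mathrm{Tr}_1^n(\alpha^{-1}\beta x) + \alpha^{-1}c$, and the standing hypothesis that $f$ is not of the form $\mathrm{Tr}_1^n(\omega x)$ controls the degenerate case so that the effective parameter space has the stated dimension.

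For the weight formula, write $\wt(\mathbf{c}_{\alpha,\beta,c}) = p^n - N$ with
\[
N := \#\bigl\{x \in \mathbb{F}_{p^n} : \alpha f(x) - \mathrm{Tr}_1^n(\beta x) = c\bigr\},
\]
and split into four cases matching the statement. The subcases $\alpha = \beta = 0$ are immediate: $c = 0$ yields the zero codeword, whereas $c \ne 0$ yields the constant nonzero codeword of full weight $p^n$. When $\alpha = 0$ and $\beta \ne 0$, the linear form $x \mapsto \mathrm{Tr}_1^n(\beta x)$ is balanced on $\mathbb{F}_{p^n}$, so the equation $\mathrm{Tr}_1^n(\beta x) = -c$ has exactly $p^{n-1}$ solutions, producing weight $p^n - p^{n-1}$.

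The main case $\alpha \ne 0$ proceeds via the orthogonality identity $\mathbf{1}[y = 0] = \tfrac{1}{p}\sum_{z \in \mathbb{F}_p}\zeta_p^{zy}$, which converts $N$ into a double sum. Separating the $z = 0$ contribution (which equals $p^{n-1}$) leaves an inner sum over $x$ of $\zeta_p^{z\alpha f(x) - \mathrm{Tr}_1^n(z\beta x)}$. Applying the Galois automorphism $\sigma_{z\alpha} : \zeta_p \mapsto \zeta_p^{z\alpha}$ to $\mathcal{W}_f(\alpha^{-1}\beta)$ identifies this inner sum with $\sigma_{z\alpha}\bigl(\mathcal{W}_f(\alpha^{-1}\beta)\bigr)$, since $z\alpha \cdot \mathrm{Tr}_1^n(\alpha^{-1}\beta x) = \mathrm{Tr}_1^n(z\beta x)$. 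The remaining character $\zeta_p^{-zc}$ is absorbed as $\sigma_{z\alpha}\bigl(\zeta_p^{-\alpha^{-1}c}\bigr)$, so the two factors combine inside a single Galois conjugation, yielding the closed form in terms of $\sigma_{z\alpha}\bigl(\mathcal{W}_f(\alpha^{-1}\beta)\zeta_p^{\alpha^{-1}c}\bigr)$.

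The main obstacle is entirely bookkeeping: keeping careful track of the factor $\tfrac{1}{p}$, the sign of the exponent $\pm\alpha^{-1}c$ inside the absorbed character, and the correct application of $\sigma_{z\alpha}$ through every step. No new technical ingredient beyond Lemma \ref{code-lemma}, orthogonality of additive characters, and the definition of the Walsh transform is required.
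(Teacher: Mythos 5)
Your proposal follows essentially the same route as the paper: the paper's proof is exactly the orthogonality computation you describe, namely writing the weight as $p^n$ minus the zero count, expanding that count via $\frac{1}{p}\sum_{z\in\mathbb{F}_p}\zeta_p^{zy}$, peeling off the $z=0$ term as $p^{n-1}$, and identifying the $z\ne0$ terms with $\sigma_{z\alpha}$ applied to $\mathcal{W}_f(\alpha^{-1}\beta)$ times the absorbed character. One caution on the dimension: your kernel computation, as stated, shows the parameter map on the $(n+2)$-dimensional space $\mathbb{F}_p\times\mathbb{F}_{p^n}\times\mathbb{F}_p$ is injective under the standing hypothesis on $f$, which yields dimension $n+2$ rather than the stated $n+1$; the paper's proof is silent on the dimension, so the discrepancy lies in the statement itself, but ``the effective parameter space has the stated dimension'' does not actually follow from what you wrote. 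Finally, carrying the bookkeeping through produces a factor $\frac{1}{p}$ in front of the sum and $\zeta_p^{-\alpha^{-1}c}$ inside the conjugation (the displayed formula in the lemma drops the $\frac{1}{p}$ and the minus sign, while the paper's own proof keeps the $\frac{1}{p}$), which is precisely the kind of slip you flagged as the main hazard.
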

\begin{proof} For any $(\alpha,\beta, c)\in \mathbb{F}_{p}\times\mathbb{F}_{p^{n}}\times\mathbb{F}_{p}$, by Eq. (\ref{othor-prop-wp}) we have
\begin{eqnarray*}
\wt(\mathbf{c}_{\alpha,\beta,c})&=&p^n-\frac{1}{p}\sum\limits_{x\in \mathbb{F}_{p^{n}}}\sum\limits_{z\in \mathbb{F}_{p}}\zeta_p^{\left(\alpha f(x)-\mathrm{Tr}_1^n(\beta x)-c\right)z}\nonumber \\
&=&p^n-p^{n-1}-\frac{1}{p}\sum\limits_{x\in \mathbb{F}_{p^{n}}}\sum\limits_{z\in \mathbb{F}^{\star}_{p}}\zeta_p^{(\alpha f(x)-\mathrm{Tr}_1^n(\beta x)-c)z}\nonumber\\
&=&\left\{ \begin{array}{ll}
0, &\mathrm{if~} \alpha=\beta=c=0,\\
p^{n}, &\mathrm{if~}  \alpha=\beta=0, c\ne0,\\
p^{n}-p^{n-1}, &\mathrm{if~}  \alpha=0,\beta\ne0,\\
p^{n}-p^{n-1}-\frac{1}{p}\sum\limits_{z\in\mathbb{F}^{\star}_{p}}\sigma_{z\alpha}\left(\mathcal{W}_{f}\left(\alpha^{-1}\beta\right)\zeta_p^{\alpha^{-1}c}\right), & \mathrm{if~} \alpha\ne0.
\end{array} \right.
\end{eqnarray*}
\end{proof}
\section{Self-orthogonal minimal binary linear codes violating the AB condition }\label{MM-B-class}
In this section, we shall first obtain a necessary and sufficient condition for a binary linear code
to be self-orthogonal. Then, based on specific Boolean functions, we will construct two new classes of minimal linear codes violating the AB condition: one class being doubly-even code and the other class being singly-even self-orthogonal code, where the self-orthogonality can be derived by the necessary and sufficient condition mentioned above.
\begin{theorem}\label{lfunction1}
Let $n\geq 3$ be a positive integer and $p=2$. Then $\mathcal{C}_{f}$ is a self-orthogonal linear code if and only if $\mathcal{W}_{f}(\beta)\pm\mathcal{W}_{f}(0)\equiv 0 \pmod 8$ for all $\beta\in\mathbb{F}_{2^n}$.
\end{theorem}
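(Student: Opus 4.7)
The plan is to reduce self-orthogonality of a binary linear code to a simple congruence on codeword weights, and then express these weights via the Walsh transform of $f$. Recall that for two binary vectors $\mathbf{a},\mathbf{b}$ of the same length one has the identity
\[
\wt(\mathbf{a}+\mathbf{b}) = \wt(\mathbf{a}) + \wt(\mathbf{b}) - 2\,(\mathbf{a}\cdot\mathbf{b}),
\]
where $\mathbf{a}\cdot\mathbf{b}$ is the integer number of common $1$-positions. Since $\mathcal{C}_f$ is closed under addition, it follows that $\mathcal{C}_f$ is self-orthogonal if and only if
\[
\wt(\mathbf{c}_{\alpha_1,\beta_1}+\mathbf{c}_{\alpha_2,\beta_2}) \equiv \wt(\mathbf{c}_{\alpha_1,\beta_1}) + \wt(\mathbf{c}_{\alpha_2,\beta_2}) \pmod{4}
\]
for all $(\alpha_1,\beta_1),(\alpha_2,\beta_2)\in\mathbb{F}_2\times\mathbb{F}_{2^n}$. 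This is the workhorse criterion I would apply throughout.

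Next, I would compute the weights explicitly. The key observation is that for $\alpha=1$ the Walsh transform directly gives $\wt(\mathbf{c}_{1,\beta})=2^{n-1}-\tfrac{1}{2}\mathcal{W}_f(\beta)$, while for $\alpha=0$ the weight is $0$ (if $\beta=0$) or $2^{n-1}$ (if $\beta\neq 0$). Using $n\geq 3$ so that $2^{n-1}\equiv 0\pmod 4$, the weights with $\alpha=0$ disappear modulo $4$. Because $\mathbf{c}_{\alpha_1,\beta_1}+\mathbf{c}_{\alpha_2,\beta_2}=\mathbf{c}_{\alpha_1+\alpha_2,\beta_1+\beta_2}$, I split into three cases according to $(\alpha_1,\alpha_2)\in\{(0,0),(0,1),(1,1)\}$; the case $(0,0)$ is automatic by the $n\geq 3$ remark.

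The core computation lies in the remaining two cases. In the case $(\alpha_1,\alpha_2)=(0,1)$, the congruence above collapses (modulo $4$, after multiplying by $-2$) to
\[
\mathcal{W}_f(\beta_1+\beta_2)\equiv \mathcal{W}_f(\beta_2)\pmod{8}\quad\text{for all }\beta_1,\beta_2\in\mathbb{F}_{2^n},
\]
and putting $\beta_2=0$ yields $\mathcal{W}_f(\beta)-\mathcal{W}_f(0)\equiv 0\pmod 8$. In the case $(\alpha_1,\alpha_2)=(1,1)$ the sum $\mathbf{c}_{0,\beta_1+\beta_2}$ contributes a weight $\equiv 0\pmod 4$, so the congruence becomes
\[
\mathcal{W}_f(\beta_1)+\mathcal{W}_f(\beta_2)\equiv 0\pmod{8}\quad\text{for all }\beta_1,\beta_2\in\mathbb{F}_{2^n},
\]
and putting $\beta_2=0$ yields $\mathcal{W}_f(\beta)+\mathcal{W}_f(0)\equiv 0\pmod 8$. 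Together these establish necessity of both $\pm$ congruences.

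For sufficiency, assuming the two conditions $\mathcal{W}_f(\beta)\pm\mathcal{W}_f(0)\equiv 0\pmod 8$, I would simply re-check the two derived universal congruences: for the $(0,1)$ case both $\mathcal{W}_f(\beta_1+\beta_2)$ and $\mathcal{W}_f(\beta_2)$ are $\equiv \mathcal{W}_f(0)\pmod 8$, and for the $(1,1)$ case both Walsh values are $\equiv -\mathcal{W}_f(0)\pmod 8$ while the sum $2\mathcal{W}_f(0)\equiv 0\pmod 8$ by taking $\beta=0$ in the ``$+$'' hypothesis. I expect the bookkeeping between $\bmod\,4$ and $\bmod\,8$ (and keeping the factor $\tfrac{1}{2}$ in the weight formula honest) to be the only real obstacle; the use of $n\geq 3$ is essential to drop the $2^{n-1}$ contributions modulo $4$, and without it additional terms would survive and spoil the clean statement.
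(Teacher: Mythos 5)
Your proof is correct, and it reaches the two congruences $\mathcal{W}_f(\beta)-\mathcal{W}_f(0)\equiv 0\pmod 8$ and $\mathcal{W}_f(\beta)+\mathcal{W}_f(0)\equiv 0\pmod 8$ by a route that is genuinely, if mildly, different from the paper's. The paper evaluates the integer inner products $\mathbf{c}_{\alpha_1,\beta_1}\cdot\mathbf{c}_{\alpha_2,\beta_2}$ directly, expanding indicator functions of the form $\frac{1-(-1)^{u}}{2}$ into character sums and arriving at the closed forms $\frac{\mathcal{W}_f(\beta_1)-\mathcal{W}_f(0)}{4}$ (for $\alpha_1\alpha_2=0$, not both zero) and $\frac{\mathcal{W}_f(\beta_1+\beta_2)-3\mathcal{W}_f(0)}{4}$ (for $\alpha_1=\alpha_2=1$); self-orthogonality is then the evenness of these integers. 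You instead route everything through the identity $\wt(\mathbf{a}+\mathbf{b})=\wt(\mathbf{a})+\wt(\mathbf{b})-2(\mathbf{a}\cdot\mathbf{b})$, which converts self-orthogonality into the weight congruence $\wt(\mathbf{a}+\mathbf{b})\equiv\wt(\mathbf{a})+\wt(\mathbf{b})\pmod 4$, and then you plug in the known weight formula $\wt(\mathbf{c}_{1,\beta})=2^{n-1}-\frac{1}{2}\mathcal{W}_f(\beta)$ (Lemma \ref{code-lemma} with $p=2$) together with $\wt(\mathbf{c}_{0,\beta})\in\{0,2^{n-1}\}$. The two arguments use the same case split on $(\alpha_1,\alpha_2)$ and the same appeal to $n\ge 3$ (in the paper this is what makes the $\alpha_1=\alpha_2=0$ inner products, which equal $2^{n-1}$ or $2^{n-2}$, vanish mod $2$; in yours it is what kills $2^{n-1}$ mod $4$). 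What your approach buys is economy: it reuses the weight formula already established for the code and avoids redoing the character-sum manipulations, at the cost of only tracking parities rather than the exact integer values of the inner products that the paper obtains. The consistency checks go through: e.g.\ your condition $\mathcal{W}_f(\beta_1)+\mathcal{W}_f(\beta_2)\equiv 0\pmod 8$ and the paper's $\mathcal{W}_f(\beta_1+\beta_2)-3\mathcal{W}_f(0)\equiv 0\pmod 8$ are both equivalent, given the ``$-$'' condition, to $2\mathcal{W}_f(0)\equiv 0\pmod 8$. Your sufficiency argument is also complete, since the general congruences you need are exactly the ones implied by the $\beta_2=0$ specializations.
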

\begin{proof} Let $\mathbf{a}=\mathbf{c}_{\alpha_1,\beta_1}$ and $\mathbf{b}=\mathbf{c}_{\alpha_2,\beta_2}$ be any two codewords in $\mathcal{C}_{f}$, where $(\alpha_i, \beta_i)\in\mathbb{F}_{2}\times\mathbb{F}_{2^n}$ with $i\in\{1,2\}$. Then by Eq. (\ref{code_1}), we get
\begin{eqnarray}\label{azero1}
\mathbf{a}\cdot\mathbf{b}&=&\mathbf{c}_{\alpha_1,\beta_1}\cdot\mathbf{c}_{\alpha_2,\beta_2}=\sum\limits_{x\in \mathbb{F}_{2^n}}\left(\alpha_1f(x)+\mathrm{Tr}^n_1(\beta_1x)\right)\left(\alpha_2f(x)+\mathrm{Tr}^n_1(\beta_2x)\right)\nonumber\\
&=&\sum\limits_{x\in \mathbb{F}_{2^n}}\left(f(x)\left(\alpha_1\alpha_2+\mathrm{Tr}^n_1((\alpha_1\beta_2+\alpha_2\beta_1)x)\right)+\mathrm{Tr}^n_1(\beta_1x)\mathrm{Tr}^n_1(\beta_2x)\right).
\end{eqnarray}
To compute $\mathbf{a}\cdot\mathbf{b}$, we shall divide the discussion into the following two cases according to the values of $\alpha_1\alpha_2$.

$(1)~\alpha_1\alpha_2=0$. We note that the condition $\alpha_1\alpha_2= 0$ contains three cases, that is $\alpha_1=\alpha_2=0$, $\alpha_1=1, \alpha_2=0$ and $\alpha_1=0, \alpha_2=1$, here we only prove the case of $\alpha_1=0, \alpha_2=1$. From Eq. (\ref{azero1}) we get that
\begin{eqnarray}\label{azero2}
\mathbf{a}\cdot\mathbf{b}&=&\sum\limits_{x\in \mathbb{F}_{2^n}}\left(f(x)\mathrm{Tr}^n_1(\beta_1x)+\mathrm{Tr}^n_1(\beta_1x)\mathrm{Tr}^n_1(\beta_2x)\right)\nonumber\\
&=&\sum\limits_{x\in \mathbb{F}_{2^n}}\frac{1-(-1)^{f(x)\mathrm{Tr}^n_1(\beta_1x)}}{2}+\sum\limits_{x\in \mathbb{F}_{2^n}}\prod\limits^2_{i=1}\frac{1-(-1)^{\mathrm{Tr}^n_1(\beta_ix)}}{2}\nonumber\\
&=&2^{n-1}-\frac{1}{2}\sum\limits_{x\in \mathbb{F}_{2^n}}\sum\limits_{j\in \mathbb{F}_{2}}(-1)^{jf(x)}\frac{1+(-1)^{\mathrm{Tr}^n_1(\beta_1x)+j}}{2}+\sum\limits_{x\in \mathbb{F}_{2^n}}\prod\limits^2_{i=1}\frac{1-(-1)^{\mathrm{Tr}^n_1(\beta_ix)}}{2}\nonumber\\
&=&2^{n-1}-\frac{1}{4}\sum\limits_{x\in \mathbb{F}_{2^n}}\left((-1)^{f(x)+\mathrm{Tr}^n_1(\beta_1x)+1}+(-1)^{f(x)}\right)\nonumber\\
&=& \frac{\mathcal{W}_{f}(\beta_1)-\mathcal{W}_{f}(0)}{4}.
\end{eqnarray}
For the other cases, by adopting a similar method, we can obtain
\begin{eqnarray}
\mathbf{a}\cdot\mathbf{b}
&=& \left\{ \begin{array}{ll}
\frac{\mathcal{W}_{f}(\beta_2)-\mathcal{W}_{f}(0)}{4}, & \mathrm{if}~\alpha_1=1, \alpha_2=0,\\
0, & \mathrm{if}~\alpha_1=\alpha_2=0.
\end{array} \right.
\end{eqnarray}

$(2)~\alpha_1=\alpha_2=1$. Again by Eq. (\ref{azero1}) we get that
\begin{eqnarray}\label{azero3}
\mathbf{a}\cdot\mathbf{b}&=&\sum\limits_{x\in \mathbb{F}_{2^n}}\left(f(x)\mathrm{Tr}^n_1((\beta_1+\beta_2)x)+\mathrm{Tr}^n_1(\beta_1x)\mathrm{Tr}^n_1(\beta_2x)+f(x)\right)\nonumber\\
&=&\sum\limits_{x\in \mathbb{F}_{2^n}}\left(\frac{1-(-1)^{f(x)}}{2}+\frac{1-(-1)^{f(x)\mathrm{Tr}^n_1((\beta_1+\beta_2)x)}}{2}+\prod\limits^2_{i=1}\frac{1-(-1)^{\mathrm{Tr}^n_1(\beta_ix)}}{2}\right)\nonumber\\
&=& \frac{\mathcal{W}_{f}(\beta_1+\beta_2)-3\mathcal{W}_{f}(0)}{4}.
\end{eqnarray}

Finally, combining Eqs. (\ref{azero2}) to (\ref{azero3}) and the definition of self-orthogonal linear codes, we can easily obtain that $\mathcal{C}_{f}$ is self-orthogonal if and only if $\mathcal{W}_{f}(\beta)\pm\mathcal{W}_{f}(0)\equiv 0 \pmod 8$ for all $\beta\in\mathbb{F}_{2^n}$.\hfill $\square$
\end{proof}
\subsection{ Doubly-even self-orthogonal minimal codes violating the AB condition}\label{MM-B1-class}
Let $\lambda_1, \lambda_2, \lambda_3$ be any three linearly independent elements in $\mathbb{F}_{2^m}^{\star}$. For an even positive integer $n=2m\geq6,$  we define a new Boolean function as follows.
\begin{equation}\label{f(x)}
f(x)=f_{\lambda_1}(x)f_{\lambda_2}(x)f_{\lambda_3}(x),~~~~x\in\mathbb{F}_{2^{n}},
\end{equation}
where $f_{\lambda_i}(x)=\mathrm{Tr}_1^m\left(\lambda_i x^{2^m+1}\right)$ with $i\in\{1,2,3\}$. Let $f_0(x)=0$. Then by the Walsh transform of $f(x)$ at $\beta\in\mathbb{F}_{2^n}$ and Lemma \ref{function g}, we have
\begin{eqnarray}\label{ht}
\mathcal{W}_{f}(\beta)&=&\sum\limits_{x\in \mathbb{F}_{2^{n}}}(-1)^{\mathrm{Tr}^m_1\left(\lambda_1 x^{2^m+1}\right)
\mathrm{Tr}^m_1\left(\lambda_2x^{2^m+1}\right)\mathrm{Tr}^m_1\left(\lambda_3 x^{2^m+1}\right)+\mathrm{Tr}_1^n\left(\beta x\right)} \nonumber \\
&=&\sum\limits_{x \in \mathbb{F}_{2^n}}\sum\limits_{c_{1},c_{2}\in \mathbb{F}_{2}}\prod\limits^{2}_{j=1}\frac{1+\left(-1\right)^{c_j+\mathrm{Tr}_1^m\left(\lambda_jx^{2^m+1}\right)}}{2}(-1)^{\mathrm{Tr}_1^n(\beta x)+c_1c_2\mathrm{Tr}_1^m\left(\lambda_3x^{2^m+1}\right)}\nonumber \\
&=&\frac{1}{4}\sum\limits_{c_{1},c_{2}\in \mathbb{F}_{2}}\big(\mathcal{W}_{f_{c_1c_2\lambda_3}}(\beta)+\mathcal{W}_{f_{\lambda_1+c_1c_2\lambda_3}}(\beta)
(-1)^{c_1}+\mathcal{W}_{f_{\lambda_2+c_1c_2\lambda_3}}(\beta)(-1)^{c_2}\nonumber \\&&+\mathcal{W}_{f_{\lambda_1+\lambda_2+c_1c_2\lambda_3}}(\beta)(-1)^{c_1+c_2}\big) \nonumber \\
&=&\frac{3}{4}\mathcal{W}_{f_{0}}(\beta)+\frac{1}{4}\mathcal{W}_{f_{\lambda_1+\lambda_2+\lambda_3}}(\beta)+\frac{1}{4}\sum\limits^{3}_{j=1}\mathcal{W}_{f_{\lambda_j}}(\beta)-\frac{1}{4}
\sum\limits_{1\leq j_1<j_2\leq 3}\mathcal{W}_{f_{\lambda_{j_1}+\lambda_{j_2}}}(\beta)\nonumber \\
&=&\left\{ \begin{array}{ll}
3\cdot2^{n-2}-2^{m-2}, & \mathrm{if}~\beta=0, \\
2^{m-2}\big((-1)^{A_{7}+1}+(-1)^{A_{1}+1}+(-1)^{A_{2}+1}\\+(-1)^{A_{4}+1}+(-1)^{A_{3}}+(-1)^{A_{5}}+(-1)^{A_{6}}\big), & \mathrm{if}~\beta\in\mathbb{F}^{\star}_{2^n},
\end{array} \right.
\end{eqnarray}
where $A_{c_1+c_{2}2+c_{3}2^2}=\mathrm{Tr}_1^m\left(\left(\sum\limits^{3}_{j=1}c_j\lambda_{j}\right)^{-1}\beta^{2^m+1}\right)$ with $(c_1,c_2,c_3)\in\mathbb{F}^3_{2}\backslash\{0,0,0\}$.

For the convenience of calculation, we give the following definition.
\begin{definition}\label{zht}
Let $t_1$ represent the relationship between $\left(\sum^{3}_{j=1}\lambda_j\right)^{-1}$ and $\sum^{3}_{j=1}\lambda^{-1}_j$, that is, $\left(\sum^{3}_{j=1}\lambda_j\right)^{-1}=\sum^{3}_{j=1}\lambda^{-1}_j$ if $t_1=0$, otherwise $t_1=1$. Similarly, $(\lambda_{j_1}+\lambda_{j_2})^{-1}=\lambda^{-1}_{j_1}+\lambda^{-1}_{j_2}$ with $1\leq j_1<j_2\leq 3$, if $t_{j_1+j_2-1}=0$, otherwise $t_{j_1+j_2-1}=1$.
\end{definition}

It is obvious that $(t_1,t_2,t_3,t_4)\in\mathbb{F}^{4}_{2}$. Clearly, each $A_i$, $1\leq i\leq 7$, can be represented by $\wt(t_1,t_2,t_3,t_4)+3$ elements from the set $\{A_1, A_2,\cdots, A_7\}$. For example, if  $(t_1,t_2,t_3,t_4)=(1,0,0,0)$, then we can deduce that $A_{3}=A_{1}+A_{2}$, $A_{5}=A_{1}+A_{4}$, $A_{6}=A_{2}+A_{4}$ and $A_{7}=A_{7}$ by the definitions of $A_i$ with $1\leq i\leq 7$ and $t_j$ with $1\leq j\leq 4$. This means that each $A_i$ can be represented by $A_{1},A_{2},A_{4}$ and $A_{7}$. Moreover, we always assume that the number of linearly independent elements in the set  $\big\{\lambda^{-1}_1, \lambda^{-1}_2, \lambda^{-1}_3, (\lambda_1+\lambda_3)^{-1}, (\lambda_1+\lambda_2)^{-1}, (\lambda_2+\lambda_3)^{-1}, (\lambda_1+\lambda_2+\lambda_3)^{-1}\big\}$ over $\mathbb{F}_2$ equals $\wt(t_1,t_2,t_3,t_4)+3$. From this assumption, it follows that $m\ge\wt(t_1,t_2,t_3,t_4)+3$. In order to further clarify the possible values of $\wt(t_1,t_2,t_3,t_4)$, the following lemmas are needed.

\begin{lemma}\label{lemma-j}\cite{1967-Berlekamp} Let $m$ be a positive integer and $a,b\in \mathbb{F}_{2^m}$ with $a\ne 0$. Then the quadratic equation $x^2+ax+b=0$ has solutions in $\mathbb{F}_{2^m}$ if and only if $\mathrm{Tr}^m_1(\frac{b}{a^2})=0$.
\end{lemma}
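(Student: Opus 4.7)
The plan is to reduce the general quadratic to Artin--Schreier form $y^2 + y = c$ and then identify the image of the Artin--Schreier map with the kernel of the absolute trace via a dimension count. Since $a \ne 0$, the substitution $x = ay$ transforms $x^2 + ax + b = 0$ into $a^2 y^2 + a^2 y + b = 0$, which (after dividing by $a^2$) reads $y^2 + y = c$ with $c = b/a^2$. Thus the original equation has a root in $\mathbb{F}_{2^m}$ if and only if $c$ lies in the image of the map $L\colon \mathbb{F}_{2^m} \to \mathbb{F}_{2^m}$ defined by $L(y) = y^2 + y$.

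Next, I would observe that in characteristic $2$ the map $L$ is $\mathbb{F}_2$-linear because Frobenius is additive. Its kernel consists of the roots of $y(y+1)=0$, which is exactly $\mathbb{F}_2$, so $\dim_{\mathbb{F}_2} \ker(L) = 1$ and rank--nullity gives $\dim_{\mathbb{F}_2} \mathrm{Im}(L) = m - 1$. On the other hand, for every $y \in \mathbb{F}_{2^m}$ one has $\mathrm{Tr}_1^m(y^2) = \mathrm{Tr}_1^m(y)$ (shifting the Frobenius orbit), so
\[
\mathrm{Tr}_1^m(L(y)) \;=\; \mathrm{Tr}_1^m(y^2) + \mathrm{Tr}_1^m(y) \;=\; 2\,\mathrm{Tr}_1^m(y) \;=\; 0.
\]
Hence $\mathrm{Im}(L) \subseteq \ker(\mathrm{Tr}_1^m)$.

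To finish, I would note that $\mathrm{Tr}_1^m$ is a surjective $\mathbb{F}_2$-linear functional onto $\mathbb{F}_2$: it cannot vanish identically, for otherwise the polynomial $T + T^2 + \cdots + T^{2^{m-1}}$ of degree $2^{m-1}$ would have $2^m$ roots. Therefore $\ker(\mathrm{Tr}_1^m)$ has $\mathbb{F}_2$-dimension $m - 1$, and the containment $\mathrm{Im}(L) \subseteq \ker(\mathrm{Tr}_1^m)$ must be an equality. Consequently $c \in \mathrm{Im}(L)$ iff $\mathrm{Tr}_1^m(c) = 0$, which is the desired equivalence. There is no real obstacle in this argument — the only mildly subtle point is the surjectivity of the trace, which is handled by the degree argument above.
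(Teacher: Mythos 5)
Your proof is correct and complete: the substitution $x=ay$ reducing to the Artin--Schreier equation $y^2+y=b/a^2$, the dimension count $\dim_{\mathbb{F}_2}\mathrm{Im}(L)=m-1$ via rank--nullity, and the matching count for $\ker(\mathrm{Tr}_1^m)$ via surjectivity of the trace together give exactly the stated equivalence. The paper itself offers no proof --- it cites the result to Berlekamp, Rumsey and Solomon --- and your argument is precisely the standard one found in that literature, so there is nothing to flag.
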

\begin{lemma}\label{lemma-jl} Let $\lambda_i~(1\leq i\leq3)$ be any three linearly independent elements in $\mathbb{F}_{2^m}^{\star}$. Let $\theta$ be a primitive element in $\mathbb{F}_{2^2}$ and $t_1,t_2,t_3,t_4$ be defined as in \textnormal{Definition \ref{zht}}. Then we have $t_1=1$ and $\wt(t_1,t_2,t_3,t_4)\geq3$.
\end{lemma}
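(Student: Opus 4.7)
The strategy is to translate each equation $t_i = 0$ into a polynomial identity among $\lambda_1, \lambda_2, \lambda_3$ and then use $\mathbb{F}_2$-linear independence to rule out the resulting configurations. To prove $t_1 = 1$, I would clear denominators in the defining equation $(\lambda_1 + \lambda_2 + \lambda_3)^{-1} = \lambda_1^{-1} + \lambda_2^{-1} + \lambda_3^{-1}$; writing $e_1, e_2, e_3$ for the elementary symmetric polynomials in $\lambda_1, \lambda_2, \lambda_3$, this reduces to $e_1 e_2 + e_3 = 0$. The key observation is that in characteristic two, $e_1 e_2 + e_3$ is exactly the value at $x = e_1$ of the monic cubic $(x + \lambda_1)(x + \lambda_2)(x + \lambda_3) = x^3 + e_1 x^2 + e_2 x + e_3$ (the leading $e_1^3$ and $e_1 \cdot e_1^2$ cancel modulo $2$). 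Hence $e_1 e_2 + e_3 = 0$ forces $e_1 = \lambda_1 + \lambda_2 + \lambda_3 \in \{\lambda_1, \lambda_2, \lambda_3\}$, which in turn gives $\lambda_j + \lambda_k = 0$ for some distinct $j, k$, contradicting the $\mathbb{F}_2$-linear independence of the $\lambda_j$'s. Therefore $t_1 = 1$.

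For the weight bound, since $t_1 = 1$ it suffices to show that at most one of $t_2, t_3, t_4$ vanishes. I would first observe that the equation $t_{j_1 + j_2 - 1} = 0$, after clearing denominators, becomes $\lambda_{j_1} \lambda_{j_2} = (\lambda_{j_1} + \lambda_{j_2})^2$, equivalently $(\lambda_{j_1}/\lambda_{j_2})^2 + (\lambda_{j_1}/\lambda_{j_2}) + 1 = 0$. By Lemma \ref{lemma-j} applied with $a = b = 1$, this quadratic admits solutions in $\mathbb{F}_{2^m}$ precisely when $m$ is even, in which case the roots are exactly $\theta$ and $\theta^2 = 1 + \theta$. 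Suppose for contradiction that two of $t_2, t_3, t_4$ are zero. I would treat the representative case $t_2 = t_3 = 0$ explicitly: then both $\lambda_1/\lambda_2$ and $\lambda_1/\lambda_3$ lie in $\{\theta, \theta^2\}$. If the two ratios coincide, then $\lambda_2 = \lambda_3$; if they differ, then using $\theta^3 = 1$ and $\theta^2 + \theta + 1 = 0$ I would compute $\lambda_1 + \lambda_2 + \lambda_3 = \lambda_1(1 + \theta + \theta^2) = 0$. Either outcome contradicts $\mathbb{F}_2$-linear independence. The pairs $\{t_2, t_4\}$ and $\{t_3, t_4\}$ are handled by the same dichotomy (a coincidence of ratios produces $\lambda_j = \lambda_k$, while distinct ratios produce $\lambda_1 + \lambda_2 + \lambda_3 = 0$ through the identity $1 + \theta + \theta^2 = 0$). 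Combined with $t_1 = 1$, this yields $\wt(t_1, t_2, t_3, t_4) \geq 3$.

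The main obstacle I foresee is the bookkeeping in the second part: the indexing $t_{j_1 + j_2 - 1}$ is not manifestly symmetric in the three pairs, so I would verify each of the three pairs of sub-cases in turn rather than simply invoke symmetry. A secondary subtlety is the odd-$m$ corner case: when $m$ is odd the quadratic $x^2 + x + 1$ has no roots in $\mathbb{F}_{2^m}$ by Lemma \ref{lemma-j}, so $t_2 = t_3 = t_4 = 1$ holds automatically and the weight is $4$; I would flag this before entering the case analysis for even $m$, so that both parities are covered uniformly.
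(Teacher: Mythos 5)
Your proposal is correct and takes essentially the same route as the paper: the paper also clears denominators to show $t_1=0$ forces $(\lambda_1+\lambda_2)(\lambda_1+\lambda_3)=0$ (your evaluation of the cubic at $e_1$ is an equivalent packaging), and for the weight bound it likewise invokes Lemma \ref{lemma-j} to conclude that $t_j=0$ ($j\in\{2,3,4\}$) forces the relevant ratio to lie in $\{\theta,\theta^2\}$, whence two vanishing $t_j$'s yield linear dependence via $1+\theta+\theta^2=0$ (with the odd-$m$ case dispatched immediately since $x^2+x+1$ then has no roots). One minor caveat: for the pairs involving $t_4$ the ratios are $\lambda_1/\lambda_2$ and $\lambda_2/\lambda_3$ (rather than two ratios with a common numerator), so your dichotomy is swapped there --- equal ratio values give $\lambda_1+\lambda_2+\lambda_3=0$ while distinct values give $\lambda_1=\lambda_3$ --- but every sub-case still produces linear dependence, so the argument goes through as you planned.
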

\begin{proof} If $t_1=0$, then the equation $(\lambda_1+\lambda_2+\lambda_3)^{-1}=\lambda_1^{-1}+\lambda_2^{-1}+\lambda_3^{-1}$ is equivalent to $\lambda_1^2+\lambda_1(\lambda_2+\lambda_3)+\lambda_2\lambda_3=0$. From this equation, it easily follows that $\lambda_1=\lambda_2$ or $\lambda_1=\lambda_3$, which contradicts the linear independence of $\lambda_i~(1\leq i\leq3)$. Hence $t_1=1$. Next we prove that $\wt(t_1,t_2,t_3,t_4)\geq3$. If $m$ is odd, then it follows directly from Lemma \ref{lemma-j} that $\wt(t_2,t_3,t_4)=3$. If $m$ is even, then for any $u\in\mathbb{F}_{2^m}^{\star}$, the equation $(u+\lambda_i)^{-1}=u^{-1}+\lambda_i^{-1}$ is equivalent to $(\frac{u}{\lambda_i})^2+\frac{u}{\lambda_i}+1=0$. So Lemma \ref{lemma-j} and $1+\theta+\theta^2=0$ yield that the solutions are
\begin{equation}\label{ttlcode_1}
u_{i1}=\lambda_i\theta~\text{and}~u_{i2}=\lambda_i\theta^2.
\end{equation}
For the set $\{t_1,t_2,t_3,t_4\}$, we assert that if any two elements are zero, then $\lambda_i~(1\leq i\leq3)$ are linearly dependent. Without loss of generality, we assume that $t_2=t_3=0$, then by Eq. (\ref{ttlcode_1}) we get that $\lambda_1=\lambda_2\theta(\lambda_2\theta^2)$ and $\lambda_1=\lambda_3\theta(\lambda_3\theta^2)$, which clearly implies that $\lambda_i~(1\leq i\leq3)$ are linearly dependent.
\hfill $\square$
\end{proof}

For the new defined function $f(x)$ in Eq. (\ref{f(x)}), we have the following lemma based on Lemma \ref{lemma-jl}, which will be used in the sequel.
\begin{lemma}\label{t_1}
Let notation be the same as before and $\mathrm{i}\in\{0,1\}$. Let $t_1,t_2,t_3,t_4$ be given in \textnormal{Definition \ref{zht}} and $n=2m$ be an even positive integer with $m\ge \wt(t_1,t_2,t_3,t_4)+3$. Then $\mathcal{W}_{f}(0)=3\cdot2^{n-2}-2^{m-2}$ and the following statements hold.

$(1)$ If $(t_1,t_2,t_3,t_4)=(1,1,1,1),$ then for any $\beta\in\mathbb{F}^{\star}_{2^{n}},$
\begin{eqnarray*}\label{httL}
\mathcal{W}_{f}(\beta)
&=&\left\{ \begin{array}{ll}
(-1)^{\mathrm{i}}2^{m-2}, & \ 35\cdot\left(2^{n-7}+2^{m-7}\right)-\mathrm{i}\left(2^m+1\right) \ \mathrm{times}, \\
(-1)^{\mathrm{i}}3\cdot2^{m-2}, & \ 21\cdot\left(2^{n-7}+2^{m-7}\right) \ \mathrm{times}, \\
(-1)^{\mathrm{i}}5\cdot2^{m-2}, & \ 7\cdot\left(2^{n-7}+2^{m-7}\right) \ \mathrm{times}, \\
(-1)^{\mathrm{i}}7\cdot2^{m-2}, & \ 2^{n-7}+2^{m-7} \ \mathrm{times}.
\end{array} \right.
\end{eqnarray*}

$(2)$ If $t_1=1$ and $\wt(t_2,t_3,t_4)=2$, then for any $\beta\in\mathbb{F}^{\star}_{2^{n}},$
\begin{eqnarray*}\label{3httL}
\mathcal{W}_{f}(\beta)
&=&\left\{ \begin{array}{ll}
7\cdot2^{m-2}, & \ 2^{n-6}+2^{m-6} \ \mathrm{times}, \\
(-1)^{\mathrm{i}}2^{m-2}, & \ (16+3\mathrm{i})\cdot\left(2^{n-6}+2^{m-6}\right)-\mathrm{i}\left(2^m+1\right)\ \mathrm{times}, \\
(-1)^{\mathrm{i}}3\cdot2^{m-2}, & \ (9+3\mathrm{i})\cdot\left(2^{n-6}+2^{m-6)}\right) \ \mathrm{times}, \\
(-1)^{\mathrm{i}}5\cdot2^{m-2}, & \ (4-\mathrm{i})\cdot\left(2^{n-6}+2^{m-6}\right) \ \mathrm{times}.
\end{array} \right.
\end{eqnarray*}
\end{lemma}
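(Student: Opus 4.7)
The value $\mathcal{W}_{f}(0)=3\cdot 2^{n-2}-2^{m-2}$ reads off immediately from the expansion in Eq.~(\ref{ht}) together with Lemma~\ref{function g}: at $\beta=0$ we have $\mathcal{W}_{f_{0}}(0)=2^{n}$ and $\mathcal{W}_{f_{\mu}}(0)=-2^{m}$ for every $\mu\in\mathbb{F}_{2^{m}}^{\star}$. For $\beta\in\mathbb{F}_{2^{n}}^{\star}$ the plan is to recast the Walsh coefficient as $\mathcal{W}_{f}(\beta)=2^{m-2}S_{\beta}$, where
\begin{equation*}
S_{\beta}:=-(-1)^{A_{7}}-\sum_{j\in\{1,2,4\}}(-1)^{A_{j}}+\sum_{j\in\{3,5,6\}}(-1)^{A_{j}},
\end{equation*}
and to tabulate the distribution of $S_{\beta}$. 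Because the norm map $\beta\mapsto y:=\beta^{2^{m}+1}$ is a $(2^{m}+1)$-to-$1$ surjection from $\mathbb{F}_{2^{n}}^{\star}$ onto $\mathbb{F}_{2^{m}}^{\star}$, it suffices to count the values of $S_{\beta}$ for $y\in\mathbb{F}_{2^{m}}$, multiply each count by $2^{m}+1$, and subtract $2^{m}+1$ from whichever class contains the tuple $(A_{1},\ldots,A_{7})=\mathbf{0}$, since $y=0$ is excluded.

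\emph{Case (1): $(t_{1},t_{2},t_{3},t_{4})=(1,1,1,1)$.} By Definition~\ref{zht} and the standing hypothesis $m\ge \wt(t_{1},t_{2},t_{3},t_{4})+3=7$, the seven inverses $\lambda_{1}^{-1},\lambda_{2}^{-1},\lambda_{3}^{-1},(\lambda_{1}+\lambda_{2})^{-1},(\lambda_{1}+\lambda_{3})^{-1},(\lambda_{2}+\lambda_{3})^{-1},(\lambda_{1}+\lambda_{2}+\lambda_{3})^{-1}$ are $\mathbb{F}_{2}$-linearly independent. Hence $y\mapsto(A_{1}(y),\ldots,A_{7}(y))$ is a surjective $\mathbb{F}_{2}$-linear map onto $\mathbb{F}_{2}^{7}$ whose fibres all have size $2^{m-7}$, and the variables $X_{j}:=(-1)^{A_{j}}$ are uniform on $\{\pm 1\}^{7}$. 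A binomial count by the number of $+1$'s identifies the multiplicities of $S_{\beta}$ as $(1,7,21,35,35,21,7,1)$ at values $(-7,-5,-3,-1,1,3,5,7)$, and a direct substitution shows that the zero tuple $X_{1}=\cdots=X_{7}=1$ sits in the $S_{\beta}=-1$ class. Multiplying each binomial count by $2^{m-7}(2^{m}+1)=2^{n-7}+2^{m-7}$ and subtracting $2^{m}+1$ from the $S_{\beta}=-1$ contribution produces the four stated frequencies.

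\emph{Case (2): $t_{1}=1$ and $\wt(t_{2},t_{3},t_{4})=2$.} Since $S_{\beta}$ is symmetric under any permutation of $\{1,2,3\}$ acting on the indices of $\lambda_{i}$, I may assume without loss of generality $t_{2}=0$, whence $(\lambda_{1}+\lambda_{2})^{-1}=\lambda_{1}^{-1}+\lambda_{2}^{-1}$ and consequently $A_{3}=A_{1}+A_{2}$. The remaining six forms $A_{1},A_{2},A_{4},A_{5},A_{6},A_{7}$ are linearly independent (the hypothesis supplies $m\ge 6$), so $(X_{1},X_{2},X_{4},X_{5},X_{6},X_{7})$ is uniform on $\{\pm 1\}^{6}$ with each tuple attained $2^{m-6}$ times. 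Substituting $X_{3}=X_{1}X_{2}$ splits $S_{\beta}$ into three stochastically independent blocks,
\begin{equation*}
S_{\beta}=(-X_{1}-X_{2}+X_{1}X_{2})+(-X_{4}-X_{7})+(X_{5}+X_{6}),
\end{equation*}
whose value distributions are $\{-1,-1,-1,3\}$, $\{-2,0,0,2\}$, $\{-2,0,0,2\}$ over the four equally likely inputs each. Convolving these distributions gives the multiplicities $(3,12,19,16,9,4,1)$ for $S_{\beta}\in\{-5,-3,-1,1,3,5,7\}$; a direct check shows the zero tuple again lies in the $S_{\beta}=-1$ class. Multiplying by $2^{m-6}(2^{m}+1)=2^{n-6}+2^{m-6}$ and correcting the $S_{\beta}=-1$ count by $-(2^{m}+1)$ recovers the frequencies listed in the statement.

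\emph{Main obstacle.} The delicate step is Case~(2): the bilinear interaction $X_{3}=X_{1}X_{2}$ destroys the naive binomial count that works in Case~(1), and one must justify both the three-block decomposition of $S_{\beta}$ and the invariance of the distribution under permutations of $\{t_{2},t_{3},t_{4}\}$ (so that the cases $t_{3}=0$ and $t_{4}=0$ yield exactly the same multiplicity list). A secondary but essential check is that the tuple $(A_{1},\ldots,A_{7})=\mathbf{0}$ falls into the $S_{\beta}=-1$ class in both cases, which is an immediate substitution.
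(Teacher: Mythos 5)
Your proposal is correct and follows essentially the same route as the paper's proof: both reduce $\mathcal{W}_{f}(\beta)$ for $\beta\neq 0$ to the signed sum $2^{m-2}\bigl(-(-1)^{A_7}-(-1)^{A_1}-(-1)^{A_2}-(-1)^{A_4}+(-1)^{A_3}+(-1)^{A_5}+(-1)^{A_6}\bigr)$ from Eq.~(\ref{ht}), invoke the standing assumption that the relevant inverses are $\mathbb{F}_2$-linearly independent to conclude that each attainable tuple $(A_1,\dots,A_7)$ occurs $2^{n-7}+2^{m-7}$ (resp.\ $2^{n-6}+2^{m-6}$) times, with the paper obtaining this count by the inclusion--exclusion character sum of Eq.~(\ref{2htL}) while you obtain it by factoring through the $(2^m+1)$-to-$1$ norm map and the uniformity of fibres of a surjective linear map, and both finally subtract $2^m+1$ from the class of the all-zero tuple. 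Your only substantive addition is the explicit three-block convolution handling the dependency $X_3=X_1X_2$ in case (2), which the paper compresses into ``proceeding as in the proof of (1)''; your resulting multiplicities $(1,16,19,9,12,4,3)$ agree with Table~\ref{teven-5}.
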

\begin{proof} It follows immediately from Eq. (\ref{ht}) that $\mathcal{W}_{f}(0)=3\cdot2^{n-2}-2^{m-2}$. In what follows, we shall calculate the value of $\mathcal{W}_{f}(\beta)$ for any $\beta\in\mathbb{F}^{\star}_{2^{n}}$.

$(1)$ If $(t_1,t_2,t_3,t_4)=(1,1,1,1)$, then by $(-1)^{A_j}=1-2A_j$ with $1\leq j\leq 7$, Eq. (\ref{ht}) can be written as
\begin{eqnarray}\label{htL2}
\mathcal{W}_{f}(\beta)&=&2^{m-2}(2(A_7+A_1+A_2+A_4)-2(A_3+A_5+A_6)-1).
\end{eqnarray}
For the convenience of calculation, we let $\wt(A_7,A_1,A_2,A_4)=\wt_{1}$ and $\wt(A_3,A_5,A_6)=\wt_{2}$. In light of Eq. (\ref{htL2}), we can arrive at the following table.
\begin{table}[H]
\vspace{-0.28cm}
\begin{center}
\caption{~The value of $\mathcal{W}_{f}(\beta)$ when $(t_1,t_2,t_3,t_4)=(1,1,1,1)$~}\label{Ltabel2}
\begin{tabular}{|l|l|l|}
\hline
\makecell[c]{$\mathrm{The}~\mathrm{value}~\mathrm{of}$~$\mathcal{W}_{f}(\beta)$}  &  \makecell[c]{Condition} \\
\hline
$2^{m-2}$  &  $\wt_{1}=j_1, ~\wt_{2}=j_1-1,~ 1\leq j_1\leq 4  $ \\
\hline
$3\cdot2^{m-2}$  &  $\wt_{1}=j_1,~ \wt_{2}=j_1-2,~ 2\leq j_1\leq 4  $      \\
\hline
$5\cdot2^{m-2}$  &  $\wt_{1}=3,~ \wt_{2}=0; ~\wt_{1}=4,~ \wt_{2}=1$  \\
\hline
$7\cdot2^{m-2}$  &  $\wt_{1}=4,~ \wt_{2}=0$ \\
\hline
$-2^{m-2}$  &  $\wt_{1}=\wt_{2}=j_1,~ 0\leq j_1\leq 3  $  \\
\hline
$-3\cdot2^{m-2}$  &  $\wt_{1}=j_1,~ \wt_{2}=j_1+1, ~0\leq j_1\leq 2  $ \\
\hline
$-5\cdot2^{m-2}$  &  $\wt_{1}=0, ~\wt_{2}=2;~ \wt_{1}=1,~ \wt_{2}=3$ \\
\hline
$-7\cdot2^{m-2}$  &  $\wt_{1}=0,~ \wt_{2}=3$ \\
\hline
\end{tabular}
\end{center}
\end{table}
Next we only calculate the number of $\beta$ that satisfies $\mathcal{W}_{f}(\beta)=-7\cdot2^{m-2}$ in $\mathbb{F}^{\star}_{2^{n}}$, as the other cases can be obtained similarly. By using Lemma \ref{function g}, we can easily deduce that
\begin{eqnarray}\label{2htL}
N_{1}&=&\frac{1}{2^7}\sum\limits_{\beta \in \mathbb{F}_{2^n}}\left(1+(-1)^{A_7}\right)\left(1+(-1)^{A_1}\right)\left(1+(-1)^{A_2}\right)\left(1+(-1)^{A_4}\right)\left(1-(-1)^{A_3}\right)\nonumber\\&&\left(1-(-1)^{A_5}\right)\left(1-(-1)^{A_6}\right)\nonumber\\
&=& 2^{n-7}+2^{m-7}.
\end{eqnarray}

Similar to the calculation of Eq. (\ref{2htL}), it follows from Table \ref{Ltabel2} that
\begin{eqnarray*}\label{httL}
\mathcal{W}_{f}(\beta)
&=&\left\{ \begin{array}{ll}
(-1)^{\mathrm{i}}2^{m-2}, & \ 35\cdot\left(2^{n-7}+2^{m-7}\right)-\mathrm{i}\left(2^m+1\right) \ \mathrm{times}, \\
(-1)^{\mathrm{i}}3\cdot2^{m-2}, & \ 21\cdot\left(2^{n-7}+2^{m-7}\right) \ \mathrm{times}, \\
(-1)^{\mathrm{i}}5\cdot2^{m-2}, & \ 7\cdot\left(2^{n-7}+2^{m-7}\right) \ \mathrm{times}, \\
(-1)^{\mathrm{i}}7\cdot2^{m-2}, & \ 2^{n-7}+2^{m-7} \ \mathrm{times}.
\end{array} \right.
\end{eqnarray*}

(2) If $t_1=1$ and $\wt(t_2,t_3,t_4)=2$, then proceeding as in the proof of (1), the desired conclusion follows easily.\hfill $\square$
\end{proof}
\begin{remark}\label{remark1}
According to the condition in Lemma \ref{t_1} that $m\ge \wt(t_1,t_2,t_3,t_4)+3$, we can deduce that $m\ge 6$ in Lemma \ref{t_1}. By the values of $\mathcal{W}_{f}(\beta)$ in Lemma \ref{t_1}, we observe that $\mathcal{W}_{f}(\beta)\pm\mathcal{W}_{f}(0)\equiv 0 \pmod 8$ for all $\beta\in\mathbb{F}_{2^n}$. Furthermore,  using Theorem \ref{lfunction1}, we see that the linear codes $\mathcal{C}^{\star}_{f}$ designed based on Lemma \ref{t_1} are self-orthogonal.
\end{remark}
\begin{lemma}\label{llt}\textnormal{\cite{DHZ18}}
The binary linear code $\mathcal{C}^{\star}_{f}$ in \textnormal{Eq. (\ref{code_1})} has length $2^n-1$ and dimension $n+1.$ In addition, the weight distribution of $\mathcal{C}^{\star}_{f}$ is given by the following multiset union:
$$\left\{\left\{\frac{2^n -\mathcal{W}_{f}(\beta)}{2}~|~\beta\in \mathbb{F}_{2^n}\right\}\cup\left\{2^{n-1}~|~\omega \in \mathbb{F}_{2^n}^{\star}\right\} \cup \{0\}\right\}.$$
Furthermore, $\mathcal{C}^{\star}_{f}$ is a minimal linear code if and only if
$\mathcal{W}_{f}({h})\pm \mathcal{W}_{f}({l})\neq 2^n, \ \mathrm{where} \ h,l\in \mathbb{F}_{2^n} \ \mathrm{and} \ h\neq l.$
\end{lemma}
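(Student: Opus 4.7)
My plan is to break the proof into three pieces (length/dimension, weight formula, minimality criterion), each of which follows the same philosophy: reduce every claim to a statement about the Walsh transform $\mathcal{W}_f$ and then invoke one of the general tools already laid out in the preliminaries.

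First, for the parameters, length is immediate from the definition in \textnormal{Eq. (\ref{code_1})}, which uses $x\in\mathbb{F}_{2^n}^\star$. For dimension, I would show that the evaluation map $\Phi\colon\mathbb{F}_2\times\mathbb{F}_{2^n}\to\mathcal{C}^\star_f$, $(\alpha,\beta)\mapsto\mathbf{c}_{\alpha,\beta}$, is $\mathbb{F}_2$-linear and injective. Linearity is obvious. For injectivity, suppose $\mathbf{c}_{\alpha_1,\beta_1}=\mathbf{c}_{\alpha_2,\beta_2}$, i.e.\ $(\alpha_1+\alpha_2)f(x)=\mathrm{Tr}_1^n((\beta_1+\beta_2)x)$ for every $x\in\mathbb{F}_{2^n}^\star$. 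If $\alpha_1=\alpha_2$, the right-hand side vanishes on $\mathbb{F}_{2^n}^\star$, hence on all of $\mathbb{F}_{2^n}$, forcing $\beta_1=\beta_2$; if $\alpha_1\neq\alpha_2$, then $f(x)=\mathrm{Tr}_1^n((\beta_1+\beta_2)x)$ on $\mathbb{F}_{2^n}^\star$, which (using $f(0)=0$, inherited from the constructions being applied) contradicts the standing hypothesis that $f$ is not a linear trace. Therefore $|\mathcal{C}^\star_f|=2^{n+1}$.

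Next I would compute the Hamming weight of a generic codeword by the standard character-sum indicator trick: write
\[
\wt(\mathbf{c}_{\alpha,\beta})=(2^n-1)-\sum_{x\in\mathbb{F}_{2^n}^\star}\tfrac{1+(-1)^{\alpha f(x)+\mathrm{Tr}_1^n(\beta x)}}{2},
\]
split off the $x=0$ contribution (which is $\tfrac12(1+(-1)^{\alpha f(0)})$, and simplifies via $f(0)=0$), and recognise the remaining sum as $\tfrac12\mathcal{W}_{\alpha f}(\beta)$. This gives $\wt(\mathbf{c}_{0,\beta})=2^{n-1}$ for $\beta\neq0$, $\wt(\mathbf{c}_{0,0})=0$, and $\wt(\mathbf{c}_{1,\beta})=\frac{2^n-\mathcal{W}_f(\beta)}{2}$ for all $\beta\in\mathbb{F}_{2^n}$. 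Taking the union over $\alpha\in\{0,1\}$ and $\beta\in\mathbb{F}_{2^n}$ yields exactly the multiset displayed in the statement. The total count $2^n+(2^n-1)+1=2^{n+1}$ confirms the dimension found above.

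Finally, for minimality I would apply Lemma \ref{nsc}, which for $p=2$ reduces to: $\mathcal{C}^\star_f$ is minimal iff $\wt(\mathbf{a}+\mathbf{b})\neq\wt(\mathbf{a})-\wt(\mathbf{b})$ for every ordered pair of linearly independent codewords $\mathbf{a},\mathbf{b}$. Parameterising $\mathbf{a}=\mathbf{c}_{\alpha_1,\beta_1}$, $\mathbf{b}=\mathbf{c}_{\alpha_2,\beta_2}$ with $(\alpha_1,\beta_1)\neq(\alpha_2,\beta_2)$ (and neither zero), I would run through the three essentially distinct configurations of $(\alpha_1,\alpha_2)$: $(0,0)$ gives a trivial non-equality $2^{n-1}\neq 0$; $(1,1)$ transforms the forbidden equality into $\mathcal{W}_f(\beta_2)-\mathcal{W}_f(\beta_1)=2^n$ with $\beta_1\neq\beta_2$; and the two orderings of $(1,0)$ yield $\mathcal{W}_f(\beta_1)-\mathcal{W}_f(\beta_1+\beta_2)=2^n$ and $\mathcal{W}_f(\beta_2)+\mathcal{W}_f(\beta_1+\beta_2)=2^n$, each with the two arguments distinct. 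Re-labelling $h,l$ collects these into the single condition $\mathcal{W}_f(h)\pm\mathcal{W}_f(l)=2^n$ with $h\neq l$. Forbidding this for every such pair is precisely the stated necessary and sufficient criterion.

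The only mildly delicate point is the bookkeeping in the minimality step: one must be careful that the $(1,0)$ and $(0,1)$ orderings are genuinely different (they give the $-$ and $+$ signs respectively), and that in the $(1,1)$ case linear independence forces $\beta_1\neq\beta_2$ so that $h\neq l$ survives. Everything else is essentially a direct translation of character-sum identities into the weight formula and an invocation of Lemma \ref{nsc}.
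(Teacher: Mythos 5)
Your proposal is correct and follows essentially the route the paper itself relies on: the lemma is quoted from \cite{DHZ18} without proof, but its content is exactly Lemma \ref{code-lemma} specialised to $p=2$ (your character-sum computation of the weights is the same calculation, and you rightly flag that $f(0)=0$ is needed to identify the weights of $\mathcal{C}^{\star}_{f}$ with those of $\mathcal{C}_{f}$) combined with Lemma \ref{nsc}. Your case analysis over $(\alpha_1,\alpha_2)$ correctly produces both the $+$ and $-$ sign conditions with $h\neq l$, so nothing is missing.
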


Using Lemma \ref{t_1} into Lemma \ref{llt}, we get the following result immediately.
\begin{theorem}\label{lt_1}
Let notation and conditions be the same as in \textnormal{Lemma \ref{t_1}}. Let $\mathcal{C}^{\star}_{f}$ be the linear code derived from $f(x)$ in \textnormal{Eq. (\ref{f(x)})}. Then $\mathcal{C}^{\star}_{f}$ is a $\big[2^{n}-1,n+1,2^{n-3}+2^{m-3}\big]$ minimal code with $\frac{w_{\min}}{w_{\max}}<\frac{1}{2}$ and the weight distributions of $\mathcal{C}^{\star}_{f}$ are given by the following \textnormal{Tables \ref{teven-2}} and $\ref{teven-5}$, respectively.
\vspace{0.6cm}
\begin{table}
\begin{center}
\caption{The weight distribution of $\mathcal{C}^{\star}_{f}$ when $(t_1,t_2,t_3,t_4)=(1,1,1,1)$}\label{teven-2}
\begin{tabular}{ll}
\hline\noalign{\smallskip}
Weight  &  Multiplicity   \\
\noalign{\smallskip}
\hline\noalign{\smallskip}
$0$  &  1 \\
$ 2^{n-3}+2^{m-3}$    &  $  1$   \\
$ 2^{n-1}-(-1)^{\mathrm{i}}2^{m-3}$  &  $ 35\cdot\left(2^{n-7}+2^{m-7}\right)-\mathrm{i}\left(2^m+1\right)$     \\
$2^{n-1}-(-1)^{\mathrm{i}}3\cdot2^{m-3}$  &  $21\cdot\left(2^{n-7}+2^{m-7}\right)$    \\
$2^{n-1}-(-1)^{\mathrm{i}}5\cdot2^{m-3}$  &  $7\cdot\left(2^{n-7}+2^{m-7}\right)$    \\
$2^{n-1}-(-1)^{\mathrm{i}}7\cdot2^{m-3}$  &  $2^{n-7}+2^{m-7}$    \\
$2^{n-1}$  &  $2^{n}-1$    \\
\noalign{\smallskip}
\hline
\end{tabular}
\end{center}
\end{table}
\vspace{-0.5cm}
\begin{table}
\begin{center}
\caption{The weight distribution of $\mathcal{C}^{\star}_{f}$ when $t_1=1, \wt(t_2,t_3,t_4)=2$}\label{teven-5}
\begin{tabular}{ll}
\hline\noalign{\smallskip}
Weight  &  Multiplicity   \\
\noalign{\smallskip}
\hline\noalign{\smallskip}
$0$  &  1 \\
$ 2^{n-3}+2^{m-3}$    &  $  1$   \\
$ 2^{n-1}-7\cdot2^{m-3}$  &  $ 2^{n-6}+2^{m-6}$     \\
$2^{n-1}-(-1)^{\mathrm{i}}2^{m-3}$  &  $\left(16+3\mathrm{i}\right)\cdot\left(2^{n-6}+2^{m-6}\right)-\mathrm{i}\left(2^m+1\right)$    \\
$2^{n-1}-(-1)^{\mathrm{i}}3\cdot2^{m-3}$  &  $\left(9+3\mathrm{i}\right)\cdot\left(2^{n-6}+2^{m-6}\right)$    \\
$2^{n-1}-(-1)^{\mathrm{i}}5\cdot2^{m-3}$  &  $\left(4-\mathrm{i}\right)\cdot\left(2^{n-6}+2^{m-6}\right)$    \\
$2^{n-1}$  &  $2^{n}-1$    \\
\noalign{\smallskip}
\hline
\end{tabular}
\end{center}
\end{table}
\vspace{-0.5cm}
\end{theorem}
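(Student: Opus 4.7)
The plan is to invoke Lemma \ref{llt} as the bridge: once the Walsh spectrum of $f$ is known (it is supplied by Lemma \ref{t_1}), the weight distribution of $\mathcal{C}^{\star}_{f}$ is obtained by translating each Walsh value $\mathcal{W}_f(\beta)$ into the weight $(2^n-\mathcal{W}_f(\beta))/2$, and then adjoining the $2^n-1$ copies of the weight $2^{n-1}$ coming from the codewords with $\alpha=0,\beta\neq 0$, together with the zero codeword. The length $2^n-1$ and dimension $n+1$ are inherited directly from Lemma \ref{llt}, since $f$ is not of the form $\mathrm{Tr}_1^n(\omega x)$.

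First I would identify the extremal weights. Since $\mathcal{W}_f(0)=3\cdot 2^{n-2}-2^{m-2}$ strictly dominates the largest value $7\cdot 2^{m-2}$ attained on $\mathbb{F}_{2^n}^{\star}$ (as $m\ge 6$ and $n=2m$), the minimum weight equals
\[
\frac{2^n-\mathcal{W}_f(0)}{2}=\frac{2^{n-2}+2^{m-2}}{2}=2^{n-3}+2^{m-3}.
\]
The maximum weight comes from the most negative Walsh value, which is $-7\cdot 2^{m-2}$ in case (1) and $-5\cdot 2^{m-2}$ in case (2), yielding $w_{\max}=2^{n-1}+7\cdot 2^{m-3}$ or $2^{n-1}+5\cdot 2^{m-3}$, respectively. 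In either case $w_{\min}/w_{\max}$ is of order $2^{n-3}/2^{n-1}=1/4<1/2$, so the AB condition is violated.

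Next I would populate Tables \ref{teven-2} and \ref{teven-5} by a term-by-term substitution of the multiplicities supplied by Lemma \ref{t_1} into the formula $(2^n-\mathcal{W}_f(\beta))/2$; the interplay of the parameter $\mathrm{i}\in\{0,1\}$ accounts for the positive/negative pairing of Walsh values with matching multiplicity.

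Finally, I would verify the minimality criterion of Lemma \ref{llt}, namely $\mathcal{W}_f(h)\pm\mathcal{W}_f(l)\neq 2^n$ for all distinct $h,l\in\mathbb{F}_{2^n}$. Split into two subcases. If $0\notin\{h,l\}$, then each Walsh value is of absolute value at most $7\cdot 2^{m-2}$, so $|\mathcal{W}_f(h)\pm\mathcal{W}_f(l)|\le 7\cdot 2^{m-1}<2^{2m}=2^n$ whenever $2^{m+1}>7$, i.e. for all $m\geq 3$. If exactly one of $h,l$ is $0$, say $h=0$, then
\[
\mathcal{W}_f(0)\pm\mathcal{W}_f(l)=3\cdot 2^{n-2}-2^{m-2}\pm k\cdot 2^{m-2}
\]
for some integer $k$ with $|k|\le 7$; setting this equal to $2^n=4\cdot 2^{n-2}$ gives $(\pm k-1)\cdot 2^{m-2}=2^{n-2}$, i.e. $\pm k-1=2^{m}$, which is impossible for $m\ge 6$ since $|\pm k-1|\le 8$. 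Hence the minimality follows from Lemma \ref{llt}. The main obstacle will be the careful bookkeeping needed to align the pairings $(-1)^{\mathrm{i}}\cdot j\cdot 2^{m-2}$ with the corresponding multiplicities when transcribing the Walsh spectrum into the weight tables, rather than any deep new idea; the two essential inputs, namely the Walsh spectrum of $f$ and the weight-distribution formula of Lemma \ref{llt}, are already in place.
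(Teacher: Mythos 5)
Your proposal is correct and follows essentially the same route as the paper, which likewise obtains the theorem by feeding the Walsh spectrum of Lemma \ref{t_1} into the weight-distribution formula and minimality criterion of Lemma \ref{llt}; the paper simply states this substitution without spelling out the case analysis. Your explicit verification of $\mathcal{W}_f(h)\pm\mathcal{W}_f(l)\neq 2^n$ (bounding $|\mathcal{W}_f(\beta)|\le 7\cdot 2^{m-2}$ for $\beta\neq 0$ and handling $\beta=0$ separately) is exactly the argument the paper leaves implicit, and it goes through since $m\ge 6$.
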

\begin{example}\label{lexamples}
Let $\lambda_1=\xi^{2^7+1}$, $\lambda_2=\xi^{2^8+2}$ and $\lambda_3=\xi^{2^9+4}$, where $\xi$ is a primitive element in $\mathbb{F}_{2^{14}}$ such that $\xi^{14}+\xi^{7}+\xi^{5}+\xi^{3}+1=0$. Then we have $\wt(t_1,t_2,t_3,t_4)=4$ and the elements $\lambda^{-1}_1, \lambda^{-1}_2, \lambda^{-1}_3, (\lambda_1+\lambda_3)^{-1}, (\lambda_1+\lambda_3)^{-1}, (\lambda_2+\lambda_3)^{-1}, (\lambda_1+\lambda_2+\lambda_3)^{-1}$ are linearly independent over $\mathbb{F}_2$. Furthermore, we can verify by Magma program that the linear code $\mathcal{C}^{\star}_{f}$ has the parameters $[16383, 15, 2064]$ and the weight enumerator is $1+z^{2064}+129z^{8080}+903z^{8112}+2709z^{8144}+4515z^{8176}+16383z^{8192}+4386z^{8208}+2709z^{8240}+903z^{8272}+129z^{8304}$ with $\frac{w_{\min}}{w_{\max}}=\frac{2064}{8304}<\frac{1}{2}$. This is consistent with Theorem \ref{lt_1} (Table \ref{teven-2}).
\end{example}
\begin{example}\label{tlexamples}
Let $\lambda_1=\xi^{2^6+1}$, $\lambda_2=\xi^{(2^6+1)\cdot21}$ and $\lambda_3=1$, where $\xi$ is a primitive element in $\mathbb{F}_{2^{12}}$ such that $\xi^{12}+\xi^{7}+\xi^{6}+\xi^{5}+\xi^{3}+\xi+1=0$. Then we have $t_1=1, \wt(t_2,t_3,t_4)=2$ and the elements $\lambda^{-1}_1, \lambda^{-1}_2, \lambda^{-1}_3, (\lambda_1+\lambda_3)^{-1}, (\lambda_1+\lambda_3)^{-1}, (\lambda_1+\lambda_2+\lambda_3)^{-1}$ are linearly independent over $\mathbb{F}_2$. Furthermore, we can verify by Magma program that the linear code $\mathcal{C}^{\star}_{f}$ has the parameters $[4095, 13, 520]$ and the weight enumerator is $1+z^{520}+65z^{1992}+260z^{2008}+585z^{2024}+1040z^{2040}+4095z^{2048}+1170z^{2056}+780z^{2072}+195z^{2088}$ with $\frac{w_{\min}}{w_{\max}}=\frac{520}{2088}<\frac{1}{2}$. This is consistent with Theorem \ref{lt_1} (Table \ref{teven-5}).
\end{example}

The next result follows by Remark \ref{remark1} or \cite[Theorem 1.4.8]{Huffman-Pless-2003}.

\begin{corollary}\label{p-ary1}
Let notation and conditions be the same as in \textnormal{Theorem \ref{lt_1}}. Then the $\mathcal{C}^{\star}_f$ in \textnormal{Tables \ref{teven-2}} and $\ref{teven-5}$ are self-orthogonal minimal linear codes that violate the AB condition.
\end{corollary}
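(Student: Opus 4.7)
The plan is to observe that Theorem \ref{lt_1} already does most of the heavy lifting: it shows that each $\mathcal{C}^{\star}_f$ listed in Tables \ref{teven-2} and \ref{teven-5} is a minimal linear code and that its weight spectrum satisfies $w_{\min}/w_{\max} < 1/2$. Since the Ashikhmin-Barg sufficient condition for $p=2$ reads $1/2 < w_{\min}/w_{\max}$, the displayed inequality is exactly the statement that the AB condition fails. So the minimality and ``violates AB'' clauses of the corollary require no new argument, and all that remains is to establish self-orthogonality.

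For self-orthogonality I would follow the route flagged in Remark \ref{remark1}, invoking Theorem \ref{lfunction1}: $\mathcal{C}^{\star}_f$ is self-orthogonal iff $\mathcal{W}_f(\beta)\pm\mathcal{W}_f(0)\equiv 0\pmod 8$ for every $\beta\in\mathbb{F}_{2^n}$. From Lemma \ref{t_1}, $\mathcal{W}_f(0)=3\cdot 2^{n-2}-2^{m-2}=2^{m-2}(3\cdot 2^m-1)$, and for $\beta\ne 0$ the Walsh transform takes values only of the shape $\pm k\cdot 2^{m-2}$ with $k\in\{1,3,5,7\}$. Hence every one of the quantities $\mathcal{W}_f(\beta)\pm\mathcal{W}_f(0)$ is an integer multiple of $2^{m-2}$. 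The hypothesis $m\ge\wt(t_1,t_2,t_3,t_4)+3$ of Lemma \ref{t_1}, combined with the lower bound $\wt(t_1,t_2,t_3,t_4)\ge 3$ proved in Lemma \ref{lemma-jl}, forces $m\ge 6$, so $2^{m-2}$ is divisible by $16$ and \emph{a fortiori} by $8$. Theorem \ref{lfunction1} then delivers self-orthogonality at once.

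As a parallel sanity check, I would also carry out the alternative route using \cite[Theorem 1.4.8]{Huffman-Pless-2003}: each nonzero weight in Tables \ref{teven-2} and \ref{teven-5} has the form $2^{n-3}+2^{m-3}$, $2^{n-1}-k\cdot 2^{m-3}$ for odd $k$, or $2^{n-1}$, and from $n=2m\ge 12$ and $m\ge 6$ each of the atomic powers $2^{m-3}, 2^{n-3}, 2^{n-1}$ is divisible by $4$. Thus every codeword weight is divisible by $4$, so $\mathcal{C}^{\star}_f$ is doubly-even and hence self-orthogonal. I do not anticipate any real obstacle in the argument: the only place one has to be slightly attentive is in propagating the bound $m\ge 6$ (which is not stated in Theorem \ref{lt_1} itself but follows by chaining its hypothesis with Lemma \ref{lemma-jl}), since this is precisely what upgrades $2$-divisibility to $4$- or $8$-divisibility and thereby unlocks either Theorem \ref{lfunction1} or \cite[Theorem 1.4.8]{Huffman-Pless-2003}.
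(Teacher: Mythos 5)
Your proposal is correct and matches the paper's own argument, which derives the corollary from Remark \ref{remark1} (the mod-$8$ Walsh criterion of Theorem \ref{lfunction1}, using $m\ge 6$ from Lemma \ref{lemma-jl}) or, alternatively, from the doubly-even criterion of \cite[Theorem 1.4.8]{Huffman-Pless-2003} — you present both routes, and your divisibility bookkeeping is accurate. The minimality and AB-violation clauses are indeed already contained in Theorem \ref{lt_1}, exactly as you observe.
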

\subsection{Singly-even self-orthogonal minimal codes violating the AB condition}\label{MM-B1-class}
In this subsection we always assume that $\lambda_1$ and $\lambda_2$ are any two linearly independent elements in $\mathbb{F}_{2^m}^{\star}$ such that $\lambda_1+\lambda_2=\lambda_3$ and $(\lambda_1+\lambda_2)^{-1}\ne \lambda^{-1}_1+\lambda^{-1}_2$. Let $n=2m\geq6$ be an even positive integer. Then for any two distinct elements $w_1,w_2\in\mathbb{F}^{\star}_{2^{n}}$,  we define a new Boolean function as follows.
\begin{eqnarray}\label{thf(x)}
f(x)=\mathrm{Tr}_1^m\left(\lambda_1 x^{2^m+1}\right)\mathrm{Tr}_1^m\left(\lambda_2 x^{2^m+1}\right)+\sum_{i=1}^2(x+w_i)^{2^n-1},x\in\mathbb{F}_{2^{n}},
\end{eqnarray}
where for $i,j\in\{1,2,3\}$, $w_1,w_2$ and $w_3=w_1+w_2$ satisfying
\begin{eqnarray}\label{llthf(x)}
&&\mathrm{Tr}_1^m\left(\lambda_j w_i^{2^m+1}\right)\nonumber\\
&=&\mathrm{Tr}_1^m\left(\left(\lambda^{-1}_1+\lambda^{-1}_2\right)^{-1} w_i^{2^m+1}\right)=\mathrm{Tr}_1^m\left(\left(\lambda^{-1}_1+\lambda^{-1}_3\right)^{-1} w_i^{2^m+1}\right)\nonumber\\
&=&\mathrm{Tr}_1^m\left(\left(\lambda^{-1}_2+\lambda^{-1}_3\right)^{-1} w_i^{2^m+1}\right)=\mathrm{Tr}_1^m\left(\left(\sum_{j=1}^3\lambda^{-1}_j\right)^{-1}w_i^{2^m+1}\right)=0.
\end{eqnarray}
Furthermore, by the Walsh transform of $f(x)$ at $\beta\in\mathbb{F}_{2^n}$ and Lemma \ref{function g}, we have
\begin{eqnarray}\label{cht}
\mathcal{W}_{f}(\beta)&=&\sum\limits_{x\in \mathbb{F}_{2^{n}}\setminus\{w_1,w_2\}}(-1)^{f(x)+\mathrm{Tr}_1^n\left(\beta x\right)}-\left((-1)^{\mathrm{Tr}_1^n\left(\beta w_1\right)}+(-1)^{\mathrm{Tr}_1^n\left(\beta w_2\right)}\right) \nonumber \\
&=&\sum\limits_{x\in \mathbb{F}_{2^{n}}}(-1)^{f(x)+\mathrm{Tr}_1^n\left(\beta x\right)}-2\left((-1)^{\mathrm{Tr}_1^n\left(\beta w_1\right)}+(-1)^{\mathrm{Tr}_1^n\left(\beta w_2\right)}\right) \nonumber \\
&=&\sum\limits_{x \in \mathbb{F}_{2^n}}\sum\limits_{\mathrm{i} \in \mathbb{F}_{2}}(-1)^{\mathrm{Tr}^m_1\left(\mathrm{i}\lambda_1x^{2^m+1}\right)+\mathrm{Tr}_1^n(\beta x)}\frac{1+(-1)^{\mathrm{i}+\mathrm{Tr}^m_1\left(\lambda_2 x^{2^m+1}\right)}}{2}\nonumber \\&&-2\left((-1)^{\mathrm{Tr}_1^n\left(\beta w_1\right)}+(-1)^{\mathrm{Tr}_1^n\left(\beta w_2\right)}\right)\nonumber\\
&=&\frac{1}{2}\sum\limits_{x \in \mathbb{F}_{2^n}}(-1)^{\mathrm{Tr}^m_1(\beta x)}-2^{m-1}\left((-1)^{A'_1}+(-1)^{A'_2}-(-1)^{A'_3}\right)\nonumber \\&&-2\left((-1)^{\mathrm{Tr}_1^n\left(\beta w_1\right)}+(-1)^{\mathrm{Tr}_1^n\left(\beta w_2\right)}\right),
\end{eqnarray}
where $A'_{i}=\mathrm{Tr}_1^m\left(\lambda_{i}^{-1}\beta^{2^m+1}\right)$ with $i\in\{1,2,3\}$.

In the following, for a set $\left\{\delta_1, \delta_2,\cdots,\delta_r\right\}$ of $\mathbb{F}_{2^n},$ we use the symbol $\mathrm{Span}(\delta_1, \delta_2,\\\cdots,\delta_r)$ to denote the $\mathbb{F}_{2}$-subspace spanned by $\left\{\delta_1, \delta_2,\cdots,\delta_r\right\}$.
\begin{definition}\label{defacc1}
Let $V=\mathrm{Span}(\delta_1, \delta_2,\cdots,\delta_{n-2})$, where $\{\delta_1, \delta_2,\cdots,\delta_{n-2}\}$ is a set of linearly independent elements in $\mathbb{F}_{2^n}$ satisfying $\mathrm{Tr}_1^n\left(w_1\delta_j\right) = \mathrm{Tr}_1^n\left(w_2\delta_j\right) = 0$ for all $1 \leq j \leq n-2$.
\end{definition}

According to the above statements, we have the following result.
\begin{lemma}\label{st_1}
Let notation be the same as before and $f(x)$ be defined in \textnormal{Eq. (\ref{thf(x)})}. Let $V$ be defined in \textnormal{Definition \ref{defacc1}} and $\mathrm{i}\in\{0,1\}$. Then for any $\beta\in V$, we have
\begin{eqnarray*}\label{httL}
\mathcal{W}_{f}(\beta)
&=&\left\{ \begin{array}{ll}
2^{n-1}-2^{m-1}-4, & \ 1 \ \mathrm{time}, \\
(-1)^{\mathrm{i}}2^{m-1}-4, & \ 3\cdot2^{n-5}+(3-8\mathrm{i})2^{m-3}-\mathrm{i} \ \mathrm{times}, \\
(-1)^{\mathrm{i}}3\cdot2^{m-1}-4, & \ 2^{n-5}+2^{m-3} \ \mathrm{times}.
\end{array} \right.
\end{eqnarray*}
\end{lemma}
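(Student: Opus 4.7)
The plan is to pick up the Walsh-transform identity already derived in Eq. (\ref{cht}) and specialize it to $\beta \in V$. Because $V$ is cut out by the two trace conditions $\mathrm{Tr}_1^n(\beta w_1) = \mathrm{Tr}_1^n(\beta w_2) = 0$, each character $(-1)^{\mathrm{Tr}_1^n(\beta w_i)}$ evaluates to $1$ on $V$, while the orthogonality sum $\sum_{x \in \mathbb{F}_{2^n}}(-1)^{\mathrm{Tr}_1^n(\beta x)}$ is $2^n$ when $\beta = 0$ and vanishes otherwise. Plugging $\beta = 0$ (so that $A'_1 = A'_2 = A'_3 = 0$) into Eq. (\ref{cht}) produces $\mathcal{W}_f(0) = 2^{n-1} - 2^{m-1} - 4$, accounting for the first line of the table, while for $\beta \in V \setminus \{0\}$ the identity collapses to
\begin{equation*}
\mathcal{W}_f(\beta) \;=\; -2^{m-1}\bigl[(-1)^{A'_1} + (-1)^{A'_2} - (-1)^{A'_3}\bigr] - 4.
\end{equation*}
Running through the eight sign patterns in $\{\pm 1\}^3$ shows that the bracket takes only the four values $\pm 1, \pm 3$, so $\mathcal{W}_f(\beta)$ is confined to the four shifted values $\pm 2^{m-1} - 4$ and $\pm 3 \cdot 2^{m-1} - 4$ appearing in the table.

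The substantive work is counting how many $\beta \in V$ realize each triple $(A'_1, A'_2, A'_3) \in \mathbb{F}_2^3$. I would use Fourier inversion over $\mathbb{F}_2^3$, which reduces the multiplicity of a fixed triple $(a_1, a_2, a_3)$ to the character sums
\begin{equation*}
S(\alpha) \;=\; \sum_{\beta \in V}(-1)^{\alpha_1 A'_1 + \alpha_2 A'_2 + \alpha_3 A'_3} \;=\; \sum_{\beta \in V}(-1)^{\mathrm{Tr}_1^m(\mu_\alpha \beta^{2^m+1})}, \qquad \alpha \in \mathbb{F}_2^3,
\end{equation*}
where $\mu_\alpha = \alpha_1 \lambda_1^{-1} + \alpha_2 \lambda_2^{-1} + \alpha_3 \lambda_3^{-1}$. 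Inserting the characteristic function of $V$ in the form $\tfrac{1}{4}\bigl[1 + (-1)^{\mathrm{Tr}_1^n(w_1 \beta)} + (-1)^{\mathrm{Tr}_1^n(w_2 \beta)} + (-1)^{\mathrm{Tr}_1^n(w_3 \beta)}\bigr]$, with $w_3 = w_1 + w_2$, rewrites $S(\alpha)$ as a quarter of the sum $\mathcal{W}_{g_{\mu_\alpha}}(0) + \mathcal{W}_{g_{\mu_\alpha}}(w_1) + \mathcal{W}_{g_{\mu_\alpha}}(w_2) + \mathcal{W}_{g_{\mu_\alpha}}(w_3)$ of Walsh coefficients of the monomial bent function $g_{\mu_\alpha}$ from Lemma \ref{function g}.

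The hypothesis $(\lambda_1 + \lambda_2)^{-1} \neq \lambda_1^{-1} + \lambda_2^{-1}$ combined with the linear independence of $\lambda_1, \lambda_2$ forces $\lambda_1^{-1}, \lambda_2^{-1}, \lambda_3^{-1}$ to be linearly independent over $\mathbb{F}_2$, so $\mu_\alpha \neq 0$ for every nonzero $\alpha \in \mathbb{F}_2^3$. A direct case check matches each $\mu_\alpha^{-1}$, for the seven nonzero $\alpha$, to one of the seven elements $\lambda_1, \lambda_2, \lambda_3, (\lambda_1^{-1}+\lambda_2^{-1})^{-1}, (\lambda_1^{-1}+\lambda_3^{-1})^{-1}, (\lambda_2^{-1}+\lambda_3^{-1})^{-1}, (\lambda_1^{-1}+\lambda_2^{-1}+\lambda_3^{-1})^{-1}$ appearing in Eq. (\ref{llthf(x)}). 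Consequently $\mathrm{Tr}_1^m(\mu_\alpha^{-1} w_j^{2^m+1}) = 0$ for $j \in \{1, 2, 3\}$, Lemma \ref{function g} gives each of the four Walsh coefficients value $-2^m$, and hence $S(\alpha) = -2^m$ uniformly for $\alpha \neq 0$, while $S(0) = |V| = 2^{n-2}$.

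Fourier inversion then yields $\#\{\beta \in V : (A'_1, A'_2, A'_3) = (a_1, a_2, a_3)\} = 2^{n-5} + 2^{m-3}$ for each nonzero triple and $2^{n-5} - 7 \cdot 2^{m-3}$ for $(0, 0, 0)$. Subtracting $1$ from the zero-triple count to remove $\beta = 0$ (already tallied) and aggregating the three triples $(0,0,0), (0,1,1), (1,0,1)$ that each deliver Walsh value $-2^{m-1} - 4$, the three triples $(0,1,0), (1,0,0), (1,1,1)$ that each deliver $2^{m-1} - 4$, and the single triples $(1,1,0)$ and $(0,0,1)$ that deliver $\pm 3 \cdot 2^{m-1} - 4$, produces exactly the multiplicities $3\cdot 2^{n-5} + 3\cdot 2^{m-3}$ and $3 \cdot 2^{n-5} - 5 \cdot 2^{m-3} - 1$ (matching $(-1)^{\mathrm{i}} 2^{m-1} - 4$ at $\mathrm{i} = 0, 1$) and $2^{n-5} + 2^{m-3}$ for both $\pm 3 \cdot 2^{m-1} - 4$ claimed by the lemma. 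The principal obstacle is the tight book-keeping required by Eq. (\ref{llthf(x)}): one must check that \emph{every} nonzero $\alpha \in \mathbb{F}_2^3$ lands in the prescribed list of seven inverse combinations, since a single missing relation would cause the corresponding $S(\alpha)$ not to simplify to $-2^m$ and thereby destroy the clean $2^{n-5} \pm 2^{m-3}$ counting formula.
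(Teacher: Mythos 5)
Your proposal is correct and follows essentially the same route as the paper: specialize Eq.~(\ref{cht}) to $\beta\in V$ to reduce everything to the triple $(A'_1,A'_2,A'_3)$, then count each triple by expanding the indicator of $V$ into characters and evaluating the resulting Walsh coefficients of the bent monomials $g_{\mu_\alpha}$ at $0,w_1,w_2,w_3$ via Lemma~\ref{function g} and the conditions~(\ref{llthf(x)}). Your Fourier-inversion bookkeeping over $\mathbb{F}_2^3$ is exactly the paper's indicator-product computation in Eq.~(\ref{foc}), just carried out uniformly for all eight triples rather than for one representative case.
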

\begin{proof}
For any $\beta\in V$, by Lemma \ref{function g} and Definition \ref{defacc1}, Eq. (\ref{cht}) can be reduced as
\begin{eqnarray}\label{vhta}
\mathcal{W}_{f}(\beta)
&=&\left\{ \begin{array}{ll}
2^{n-1}-2^{m-1}-4, & \mathrm{if}~\beta=0, \\
(-1)^{1+\mathrm{i}}2^{m-1}-4, & \mathrm{if}~\beta\ne0,\left(A'_1, A'_2, A'_3\right)\in\{\left(\mathrm{i},\mathrm{i},\mathrm{i}\right),\\&\left(1+\mathrm{i},\mathrm{i},1+\mathrm{i}\right) ,\left(\mathrm{i},1+\mathrm{i},1+\mathrm{i}\right)\},\\
(-1)^{1+\mathrm{i}}3\cdot2^{m-1}-4, & \mathrm{if}~\beta\ne0,(A'_1, A'_2, A'_3)=(\mathrm{i},\mathrm{i},1+\mathrm{i}).
\end{array} \right.
\end{eqnarray}
Denote by $N$ the number of $\beta$ satisfying $\mathcal{W}_{F}(\beta)=-3\cdot2^{m-1}-4$ in $V$, we now proceed to calculate $N$. Incorporated with Eqs. (\ref{vhta}), (\ref{llthf(x)}) and Lemma \ref{function g}, it follows that
\begin{eqnarray}\label{foc}
N&=&\sum\limits_{\beta \in V}\left(\frac{1+(-1)^{A'_1}}{2}\right)\left(\frac{1+(-1)^{A'_2}}{2}\right)\left(\frac{1-(-1)^{A'_3}}{2}\right)\nonumber\\
&=&\sum\limits_{\beta \in \mathbb{F}_{2^n}}\left(\frac{1+(-1)^{A'_1}}{2}\right)\left(\frac{1+(-1)^{A'_2}}{2}\right)\left(\frac{1-(-1)^{A'_3}}{2}\right)\prod^2_{i=1}\left(\frac{1+(-1)^{\mathrm{Tr}_1^n\left(\beta w_i\right)}}{2}\right)\nonumber\\
&=&2^{n-5}+2^{m-3}\nonumber.
\end{eqnarray}
The frequency of the other cases in Eq. (\ref{vhta}) can be obtained by the method analogous to that used above. \hfill $\square$
\end{proof}
\begin{remark}\label{lremark1} (1) Since we focus on constructing self-orthogonal minimal linear codes  violating the AB condition in this section, we observe that if $(\lambda_1+\lambda_2)^{-1}=\lambda^{-1}_1+\lambda^{-1}_2$, then proceeding as in the proof of Lemma \ref{st_1}, again by Lemma \ref{llt}, the self-orthogonal minimal linear codes satisfying the AB condition can be obtained. Moreover, if $m=2$, then $(\lambda_1+\lambda_2)^{-1}=\lambda^{-1}_1+\lambda^{-1}_2$ holds. Therefore, throughout this subsection, we always assume $(\lambda_1+\lambda_2)^{-1}\ne\lambda^{-1}_1+\lambda^{-1}_2$ and $n=2m\geq6$.

(2) In fact, we note that $w_i~(1\leq i\leq 3)$ only need to satisfy $\mathrm{Tr}_1^m\left(\lambda_1 w_i^{2^m+1}\right)=0$ or $\mathrm{Tr}_1^m\left(\lambda_2 w_i^{2^m+1}\right)=0$ in Eq. (\ref{llthf(x)}), however, variations in the other terms of  Eq. (\ref{llthf(x)}) will lead to changes for the Walsh transform coefficient frequencies in Eq. (\ref{vhta}). Hence we  consider $w_i~(1\leq i\leq 3)$ satisfy Eq. (\ref{llthf(x)}) without loss of generality.

(3) By the values of $\mathcal{W}_{f}(\beta)$ in Lemma \ref{t_1}, we observe that $\mathcal{W}_{f}(\beta)\pm\mathcal{W}_{f}(0)\equiv 0 \pmod 8$ for all $\beta\in V$. Furthermore,  using Theorem \ref{lfunction1}, we see that the linear codes $\mathcal{C}^{\star}_{f}$ designed based on Lemma \ref{st_1} are self-orthogonal.
\end{remark}

Using Lemma \ref{st_1} into Lemma \ref{llt}, and by Remark \ref{lremark1}(3), we get the following result immediately.
\begin{theorem}\label{ltt_1}
Let notation and conditions be the same as in \textnormal{Lemma \ref{st_1}}. Let $\mathcal{C}^{\star}_{f}$ be the linear code derived from $f(x)$ in \textnormal{Eq. (\ref{thf(x)})}. Then $\mathcal{C}^{\star}_{f}$ is a $\big[2^{n}-1,n-1, 2^{n-2}+2^{m-2}+2\big]$ self-orthogonal minimal code with $\frac{w_{\min}}{w_{\max}}<\frac{1}{2}$ and the weight distribution of $\mathcal{C}^{\star}_{f}$ is given by the following \textnormal{Table \ref{teven-l}.
\vspace{-0.2cm}
\begin{table}[H]
\begin{center}
\caption{The weight distribution of $\mathcal{C}^{\star}_{f}$ }\label{teven-l}
\begin{tabular}{ll}
\hline\noalign{\smallskip}
Weight  &  Multiplicity   \\
\noalign{\smallskip}
\hline\noalign{\smallskip}
$0$  &  1 \\
$ 2^{n-2}+2^{m-2}+2$    &  $  1$   \\
$ 2^{n-1}-(-1)^{\mathrm{i}}2^{m-2}+2$  &  $ 3\cdot2^{n-5}+(3-8\mathrm{i})2^{m-3}-\mathrm{i}$     \\
$2^{n-1}-(-1)^{\mathrm{i}}3\cdot2^{m-2}+2$  &  $ 2^{n-5}+2^{m-3}$    \\
$2^{n-1}$  &  $2^{n-2}-1$    \\
\noalign{\smallskip}
\hline
\end{tabular}
\end{center}
\end{table}}
\end{theorem}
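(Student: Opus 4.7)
The plan is to assemble the weight distribution of $\mathcal{C}^{\star}_{f}$ by feeding the Walsh values of Lemma \ref{st_1} into the general formula of Lemma \ref{llt}, and then to verify the three remaining claims---self-orthogonality, minimality, and the violation of the AB threshold---separately.

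First I would apply $\wt(\mathbf{c}_{1,\beta})=(2^{n}-\mathcal{W}_{f}(\beta))/2$ to each Walsh value in Lemma \ref{st_1}: the value $\mathcal{W}_{f}(0)=2^{n-1}-2^{m-1}-4$ yields $w_{\min}=2^{n-2}+2^{m-2}+2$, while $(-1)^{\mathrm{i}}2^{m-1}-4$ and $(-1)^{\mathrm{i}}3\cdot 2^{m-1}-4$ yield $2^{n-1}-(-1)^{\mathrm{i}}2^{m-2}+2$ and $2^{n-1}-(-1)^{\mathrm{i}}3\cdot 2^{m-2}+2$ respectively, each inheriting the multiplicity tabulated in Lemma \ref{st_1}. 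The remaining codewords $\mathbf{c}_{0,\beta}$ with $\beta\in V\setminus\{0\}$ have flat weight $2^{n-1}$ by balancedness of the trace, and there are $2^{n-2}-1$ of them. Together with the zero codeword this tallies $2^{n-1}$ codewords in all, producing the dimension $n-1$ advertised in the statement and reproducing every entry of Table \ref{teven-l}.

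Once the table is in place, reading $w_{\min}=2^{n-2}+2^{m-2}+2$ and $w_{\max}=2^{n-1}+3\cdot 2^{m-2}+2$ off it gives the one-line calculation $w_{\max}-2w_{\min}=2^{m-2}-2$, which is positive for $m\ge 3$, so $w_{\min}/w_{\max}<1/2=(p-1)/p$ and the AB threshold is violated. For self-orthogonality I would invoke Theorem \ref{lfunction1}, whose hypothesis $\mathcal{W}_{f}(\beta)\pm\mathcal{W}_{f}(0)\equiv 0\pmod{8}$ on $V$ is exactly the content of Remark \ref{lremark1}(3): every Walsh value here is of the form $c\cdot 2^{m-1}-4$ with $c$ an odd integer of absolute value at most $3$ and $\mathcal{W}_{f}(0)$ shares this shape, so the constants $-4$ cancel in both $\pm$-combinations, leaving a multiple of $2^{m-1}$, which is divisible by $8$ whenever $m\ge 3$.

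Minimality is checked through the criterion in Lemma \ref{llt}: it suffices to show $\mathcal{W}_{f}(h)\pm\mathcal{W}_{f}(\ell)\ne 2^{n}$ for all distinct $h,\ell\in V$. The only sizeable Walsh value on $V$ is $\mathcal{W}_{f}(0)=2^{n-1}-2^{m-1}-4$; every other $\mathcal{W}_{f}(\ell)$ has absolute value at most $3\cdot 2^{m-1}+4$, so the estimate $|\mathcal{W}_{f}(h)|+|\mathcal{W}_{f}(\ell)|<2^{n}$ is elementary for $m\ge 3$ and the criterion is satisfied. The main obstacle will be the bookkeeping required to keep the two sign-parities $\mathrm{i}\in\{0,1\}$ of Lemma \ref{st_1} consistent with the multiplicity entries in Table \ref{teven-l} and to carry out the mod-$8$ verification across all four Walsh values simultaneously; once this bookkeeping is organised, the remaining steps reduce to straightforward arithmetic.
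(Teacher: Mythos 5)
Your proposal follows the same route as the paper: the paper's entire proof of this theorem is the one sentence that one substitutes Lemma~\ref{st_1} into Lemma~\ref{llt} and invokes Remark~\ref{lremark1}(3), and your write-up simply carries out those substitutions explicitly (weights $\tfrac{2^n-\mathcal{W}_f(\beta)}{2}$ with $\beta$ restricted to $V$, the mod-$8$ test of Theorem~\ref{lfunction1}, and the criterion $\mathcal{W}_f(h)\pm\mathcal{W}_f(\ell)\neq 2^n$ for minimality), so the approach and the main steps match.

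Two arithmetic points deserve care. First, your computation $w_{\max}-2w_{\min}=2^{m-2}-2$ is correct, but this quantity equals $0$ at $m=3$, so it is \emph{not} positive for all $m\ge 3$: the strict inequality $w_{\min}/w_{\max}<\tfrac12$ only follows for $m\ge 4$. (At $m=3$ the table also degenerates, since $\mathcal{W}_f(\beta)=2^{m-1}-4=0$ collides with the weight $2^{n-1}$.) This boundary defect is inherited from the theorem as stated, which allows $n=2m\ge 6$, but your assertion as written is false at $m=3$ and should be restricted or the edge case discussed. Second, in the self-orthogonality check the quantity remaining after the constants $-4$ are handled is $(c_1\pm c_2)2^{m-1}$, possibly together with $-8$; since $2^{m-1}=4$ when $m=3$, divisibility by $8$ is not a property of $2^{m-1}$ alone but of the fact that $c_1\pm c_2$ is even (both coefficients being odd, including $c=2^m-1$ for $\beta=0$). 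You state the oddness but should draw the evenness conclusion explicitly rather than attributing the divisibility to $2^{m-1}$ itself. With these two touches the argument is complete and coincides with the paper's.
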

\begin{example}\label{tlexamples}
Let $\lambda_1=\xi^{2^8+1}$, $\lambda_2=\xi^{2^9+2}$, $w_1=\xi^{(2^8+1)\cdot12}$ and $w_2=\xi^{(2^8+1)\cdot164}$, where $\xi$ is a primitive element in $\mathbb{F}_{2^{16}}$ such that $\xi^{16}+\xi^{5}+\xi^{3}+\xi^{2}+1=0$. Then we have $(\lambda_1+\lambda_2)^{-1}\ne\lambda^{-1}_1+\lambda^{-1}_2$ and the elements $w_1,w_2$ satisfy Eq. (\ref{llthf(x)}). Furthermore, we can verify by Magma program that the linear code $\mathcal{C}^{\star}_{f}$ has the parameters $[65535, 15, 16450]$ and the weight enumerator is $1+z^{16450}+2080z^{32578}+6240z^{32706}+16383z^{32768}+5983z^{32834}+2080z^{32962}$ with $\frac{w_{\min}}{w_{\max}}=\frac{16450}{32962}<\frac{1}{2}$. This is consistent with Theorem \ref{ltt_1}.
\end{example}
\begin{remark}\label{remark2}
(1) In \cite[Theorem 1.4.8]{Huffman-Pless-2003}, the authors provided a sufficient condition for determining whether a binary linear code $\mathcal{C}$ is self-orthogonal, that is, the code $\mathcal{C}$ is doubly-even. However, it fails to determine whether a singly-even code $\mathcal{C}$ is self-orthogonal. In this section, we have partially solved the aforementioned issue through Theorem \ref{lfunction1}. To be specific, it follows obviously from Table \ref{teven-l} that the code $\mathcal{C}^{\star}_{f}$ in Theorem \ref{ltt_1} is singly-even; therefore, \cite[Theorem 1.4.8]{Huffman-Pless-2003} cannot be used to determine whether it is self-orthogonal. However, it follows from Theorem \ref{lfunction1} and Lemma \ref{st_1} that the code $\mathcal{C}^{\star}_{f}$ in Theorem \ref{ltt_1} is self-orthogonal.

(2) We compared our work with a substantial body of literature on binary linear codes, while it is not possible to list all the references compared here. Therefore, we have selected only some minimal codes violating the AB condition, such as \cite{Bonini-2021,Chang-2018,DHZ18,Qu-Li-Jin-2025,Li-Yue-2020,Li-2023,2020-Mesnager-Qi-Ru-Tang,Pasalic-2021,Tao-Feng-2021,2020-Xu-Qu,Zhang-Yan-2021,Zhang-2021,Zhang-2022}, for discussion. Through comparison, we found that the weight distributions in those references differ from these of our constructed codes in Theorems \ref{lt_1} and \ref{ltt_1}.
\end{remark}
\section{Self-orthogonal $p$-ary linear codes}\label{M-B-class}
In this section, we always assume that $p$ is an odd prime and that $n$ is a positive integer. First, a necessary and sufficient condition for a $p$-ary linear code to be self-orthogonal shall be acquired by the Walsh coefficient of a $p$-ary function. Then, we shall design several classes of linear codes based on two classes of $p$-ary functions, which contains one class of optimal codes and two classes of minimal codes violating the AB condition. Using the acquired conditions, we shall determine in which cases these codes are self-orthogonal and under which cases they are not. In addition, the following two lemmas are needed in the subsequent proofs.
\begin{lemma}\textnormal{\cite{Huffman-Pless-2003}}\label{3-ary}
$\mathcal{C}$ is a ternary self-orthogonal code over $\mathbb{F}_{3}$ if and only if the weight of each codeword in $\mathcal{C}$ is divisible by three.
\end{lemma}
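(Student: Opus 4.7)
The plan is to leverage the characteristic-$3$ identity that for any $c \in \mathbb{F}_3$ one has $c^2 = 0$ if $c = 0$ and $c^2 = 1$ otherwise. Consequently, for every vector $\mathbf{c} = (c_0,\ldots,c_{n-1}) \in \mathbb{F}_3^n$ the self dot-product satisfies
\[
\mathbf{c} \cdot \mathbf{c} \;=\; \sum_{i=0}^{n-1} c_i^2 \;\equiv\; \wt(\mathbf{c}) \pmod{3},
\]
viewed as an equation in $\mathbb{F}_3$. This identity is the engine of both directions.

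For the "only if" direction, assume $\mathcal{C} \subseteq \mathcal{C}^{\perp}$. Then for every $\mathbf{c} \in \mathcal{C}$ we have $\mathbf{c} \cdot \mathbf{c} = 0$ in $\mathbb{F}_3$, and the display above instantly gives $\wt(\mathbf{c}) \equiv 0 \pmod 3$.

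For the "if" direction, I would use polarization over $\mathbb{F}_3$. Let $\mathbf{a}, \mathbf{b} \in \mathcal{C}$ be arbitrary. Since $\mathcal{C}$ is linear, $\mathbf{a} + \mathbf{b} \in \mathcal{C}$, so by hypothesis $\wt(\mathbf{a})$, $\wt(\mathbf{b})$ and $\wt(\mathbf{a}+\mathbf{b})$ are all divisible by three. Applying the identity above, $\mathbf{a} \cdot \mathbf{a} = \mathbf{b} \cdot \mathbf{b} = (\mathbf{a}+\mathbf{b}) \cdot (\mathbf{a}+\mathbf{b}) = 0$ in $\mathbb{F}_3$. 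Expanding the last quantity bilinearly yields
\[
0 \;=\; (\mathbf{a}+\mathbf{b}) \cdot (\mathbf{a}+\mathbf{b}) \;=\; \mathbf{a} \cdot \mathbf{a} + 2\,\mathbf{a}\cdot\mathbf{b} + \mathbf{b} \cdot \mathbf{b} \;=\; 2\,\mathbf{a}\cdot\mathbf{b},
\]
and since $2$ is invertible in $\mathbb{F}_3$, $\mathbf{a} \cdot \mathbf{b} = 0$. Hence $\mathcal{C} \subseteq \mathcal{C}^{\perp}$.

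The argument has essentially no obstacle; the only subtle point is that the polarization step requires $2$ to be invertible, which holds in $\mathbb{F}_3$. This is precisely why the analogous equivalence fails in characteristic $2$ and has to be replaced by the more delicate even/doubly-even dichotomy invoked earlier in the paper (e.g.\ in Theorem \ref{lfunction1}).
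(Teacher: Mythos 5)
Your proof is correct. The paper states this lemma without proof, citing \cite{Huffman-Pless-2003}, and your argument is exactly the standard one behind that citation: the identity $c^2=1$ for $c\in\mathbb{F}_3^{\star}$ gives $\mathbf{c}\cdot\mathbf{c}\equiv\wt(\mathbf{c})\pmod 3$, and your polarization step in the converse direction is precisely the content of Lemma~\ref{p-ary} (quoted from Wan) that the paper uses alongside this one. Nothing is missing.
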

\begin{lemma}\textnormal{\cite{Wan-1998}}\label{p-ary}
Let $p$ be an odd prime and $\mathcal{C}$ be a $p$-ary linear code. Then $\mathcal{C}$ is self-orthogonal if and only if $\mathbf{c}\cdot\mathbf{c}=0$ for all $\mathbf{c}\in \mathcal{C}$.
\end{lemma}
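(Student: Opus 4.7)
The plan is to establish the equivalence by a standard polarization argument, exploiting that $2$ is a unit in $\mathbb{F}_p$ when $p$ is odd.

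The forward direction is immediate from the definition: if $\mathcal{C}$ is self-orthogonal, then by definition $\mathbf{a}\cdot\mathbf{b}=0$ for all $\mathbf{a},\mathbf{b}\in\mathcal{C}$, so in particular the self-inner product $\mathbf{c}\cdot\mathbf{c}=0$ for every $\mathbf{c}\in\mathcal{C}$. This requires no use of the hypothesis that $p$ is odd.

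For the reverse direction, I would fix arbitrary $\mathbf{a},\mathbf{b}\in\mathcal{C}$ and use linearity of $\mathcal{C}$ to note that $\mathbf{a}+\mathbf{b}\in\mathcal{C}$. Applying the hypothesis to this sum and expanding via bilinearity of the inner product yields
\[
0 = (\mathbf{a}+\mathbf{b})\cdot(\mathbf{a}+\mathbf{b}) = \mathbf{a}\cdot\mathbf{a} + 2\,\mathbf{a}\cdot\mathbf{b} + \mathbf{b}\cdot\mathbf{b}.
\]
Since $\mathbf{a}\cdot\mathbf{a}=\mathbf{b}\cdot\mathbf{b}=0$ by the hypothesis, this collapses to $2\,\mathbf{a}\cdot\mathbf{b}=0$ in $\mathbb{F}_p$. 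Because $p$ is odd, $2$ is a unit in $\mathbb{F}_p$, so multiplying by $2^{-1}$ gives $\mathbf{a}\cdot\mathbf{b}=0$, completing the argument.

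The only subtle point — and the reason the lemma fails for $p=2$ — is precisely the invertibility of $2$; over $\mathbb{F}_2$ the cross term disappears and no conclusion can be drawn. There is no real obstacle here beyond noting that bilinearity of the Euclidean inner product holds over any field, so the polarization identity is available verbatim.
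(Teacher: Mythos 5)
Your proof is correct: the forward direction is immediate from the definition of self-orthogonality, and the polarization identity $(\mathbf{a}+\mathbf{b})\cdot(\mathbf{a}+\mathbf{b})=\mathbf{a}\cdot\mathbf{a}+2\,\mathbf{a}\cdot\mathbf{b}+\mathbf{b}\cdot\mathbf{b}$ together with the invertibility of $2$ in $\mathbb{F}_p$ for odd $p$ settles the converse. The paper does not prove this lemma at all --- it simply cites it from Wan (1998) --- so there is no in-paper argument to compare against; your polarization argument is the standard proof of this fact, and your remark about why it breaks down over $\mathbb{F}_2$ is exactly the right observation.
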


Below we present an important result of this section, which will be frequently used to determine the self-orthogonality of the codes constructed subsequently.
\begin{theorem}\label{bent function1}
Let notation be the same as before and $p>3$. Let $S=\Big\{\overline{\beta}=(\beta_0,\beta_1,\cdots,\beta_{p-1})\in\mathbb{Z}\ | \  \mathcal{W}_{f}(\beta)=\sum_{a\in\mathbb{F}_{p}}\beta_a\zeta_{p}^{a}, \beta\in\mathbb{F}_{p^n}\Big\}$. Let $\mathcal{C}_{f}$ and $\overline{\mathcal{C}_{f}}$ be defined in \textnormal{Eqs. (\ref{vzcode_1})} and \textnormal{(\ref{lcode_1})}, respectively. Then the following statements hold.

$(1)$ The code $\mathcal{C}_{f}$ is self-orthogonal if and only if $\sum_{a\in\mathbb{F}_{p}}\beta_{a}a^2\equiv 0 \pmod p$ for all $\overline{\beta}\in S$.

$(2)$ The code $\overline{\mathcal{C}_{f}}$ is self-orthogonal if and only if $\sum_{a\in\mathbb{F}_{p}}\beta_{a}a^2\equiv 0 \pmod p$ and $\sum_{a\in\mathbb{F}_{p}}\beta_{a}a\equiv 0 \pmod p$ for all $\overline{\beta}\in S$.
\end{theorem}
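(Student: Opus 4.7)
My plan is to apply Lemma~\ref{p-ary}, which for odd $p$ reduces self-orthogonality to checking $\mathbf{c}\cdot\mathbf{c}\equiv 0\pmod p$ for every codeword, and then to evaluate these self inner-products via character orthogonality on $\mathbb{F}_p$ together with the Walsh expansion of $f$.

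For part~(1), I would fix a codeword $\mathbf{c}_{\alpha,\beta}\in\mathcal{C}_f$ and write $\mathbf{c}_{\alpha,\beta}\cdot\mathbf{c}_{\alpha,\beta}=\sum_{a\in\mathbb{F}_p}a^2 N_a\pmod p$, where $N_a=\#\{x\in\mathbb{F}_{p^n}:\alpha f(x)-\mathrm{Tr}_1^n(\beta x)=a\}$. Using $N_a=\tfrac{1}{p}\sum_{z\in\mathbb{F}_p}\sum_x\zeta_p^{z(\alpha f(x)-\mathrm{Tr}_1^n(\beta x)-a)}$ together with $\sum_x\zeta_p^{z\alpha f(x)-\mathrm{Tr}_1^n(z\beta x)}=\sigma_{z\alpha}(\mathcal{W}_f(\alpha^{-1}\beta))$ for $z\alpha\ne 0$, I would substitute $\mathcal{W}_f(\alpha^{-1}\beta)=\sum_b\beta_b\zeta_p^b$ and swap the order of summation. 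The key telescoping is the identity $\sum_{z\in\mathbb{F}_p^{\star}}\zeta_p^{z(\alpha b-a)}=p\delta_{a,\alpha b}-1$, which collapses the triple sum over $(a,b,z)$ into $p\alpha^2\sum_b\beta_b b^2-(\sum_a a^2)(\sum_b\beta_b)$. After dividing by $p$ and reducing modulo $p$, the two residual pieces $p^{n-1}\sum_a a^2$ (from the leading constant of $N_a$) and $\tfrac{1}{p}(\sum_a a^2)(\sum_b\beta_b)$ both vanish: $\sum_{a\in\mathbb{F}_p}a^2=\tfrac{(p-1)p(2p-1)}{6}$ is divisible by $p$ whenever $p>3$, and $\sum_b\beta_b=p^n$. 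What remains is $\mathbf{c}_{\alpha,\beta}\cdot\mathbf{c}_{\alpha,\beta}\equiv\alpha^2\sum_b\beta_b b^2\pmod p$; the case $\alpha=0$ is handled separately and vanishes outright. Since $\alpha\in\mathbb{F}_p^{\star}$, this yields the claimed equivalence.

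For part~(2), I would set $y(x)=\alpha f(x)-\mathrm{Tr}_1^n(\beta x)$ and expand
\[
\mathbf{c}_{\alpha,\beta,c}\cdot\mathbf{c}_{\alpha,\beta,c}=\sum_x y(x)^2-2c\sum_x y(x)+p^n c^2.
\]
The first term reproduces the condition from~(1), the last term is $\equiv 0\pmod p$, and since $\sum_x\mathrm{Tr}_1^n(\beta x)=0$ the middle term reduces to $-2c\alpha\sum_x f(x)$. Interpreting $\mathcal{W}_f(\gamma)$ via its counting form $\sum_a N_a^{(\gamma)}\zeta_p^a$ with $N_a^{(\gamma)}=\#\{x:f(x)-\mathrm{Tr}_1^n(\gamma x)=a\}$, I would show that $\sum_a\beta_a a\equiv\sum_x(f(x)-\mathrm{Tr}_1^n(\gamma x))\equiv\sum_x f(x)\pmod p$, which is independent of $\gamma$ and of the choice of representative. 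Requiring the cross term to vanish for all $\alpha\in\mathbb{F}_p^{\star}$ and all $c\in\mathbb{F}_p$ is therefore exactly $\sum_a\beta_a a\equiv 0\pmod p$ for every $\overline{\beta}\in S$, and combined with~(1) this gives~(2).

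The main obstacle I anticipate is the careful mod-$p$ bookkeeping: verifying that both stated conditions are well-defined despite the non-unique representation $\mathcal{W}_f(\beta)=\sum_a\beta_a\zeta_p^a$ (two representatives differ by a constant multiple of $(1,\ldots,1)$ because $1+\zeta_p+\cdots+\zeta_p^{p-1}=0$), and confirming that every residual term produced by the Walsh expansion is divisible by $p$. Both tasks hinge on the congruences $\sum_{a\in\mathbb{F}_p}a\equiv 0$ and $\sum_{a\in\mathbb{F}_p}a^2\equiv 0\pmod p$, the second of which requires $\gcd(p,6)=1$; this is precisely why the hypothesis $p>3$ is imposed, while the ternary case is handled separately by Lemma~\ref{3-ary}.
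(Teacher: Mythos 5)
Your proposal is correct and follows essentially the same route as the paper: both reduce self-orthogonality to Lemma \ref{p-ary}, compute the component frequencies $N_a$ of a codeword via character orthogonality together with the Walsh expansion $\mathcal{W}_f(\alpha^{-1}\beta)=\sum_a\beta_a\zeta_p^a$, and exploit $p\mid\sum_{a\in\mathbb{F}_p}a^2$ (hence the hypothesis $p>3$) to isolate the quadratic form $\alpha^2\sum_a\beta_a a^2-2c\alpha\sum_a\beta_a a+c^2\sum_a\beta_a \pmod p$. Your part (2) merely reorganizes the paper's shifted-index computation $\beta_{(c+a)\alpha^{-1}}$ by expanding $(y(x)-c)^2$, and your observation that $\sum_a\beta_a a\equiv\sum_x f(x)\pmod p$ independently of $\beta$ is a correct aside that the paper does not make explicit.
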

\begin{proof}Assume that $\alpha=0$ and $(c, \beta) \in \mathbb{F}_{p}\times\mathbb{F}_{p^n}$. Then we have codeword $\mathbf{c}_{0,\beta, c}=(-c-\mathrm{Tr}^n_1(\beta x))_{x\in\mathbb{F}^{\star}_{p^n}}\in\overline{\mathcal{C}_{f}}$. The definition of $\mathrm{Tr}^n_1(\beta x)$ yields that
\begin{eqnarray}\label{azero}
\hspace{-0.5cm}\mathbf{c}_{0,\beta,c}\cdot\mathbf{c}_{0,\beta,c}&=&\sum\limits_{x\in \mathbb{F}_{p^n}}\left(c+\mathrm{Tr}^n_1(\beta x)\right)^2=\sum\limits_{x\in \mathbb{F}_{p^n}}\left(c^2+2c\mathrm{Tr}^n_1(\beta x)+\mathrm{Tr}^n_1(\beta x)^2\right)\nonumber\\
&=&p^{n}c^2+2cp^{n-1}\sum\limits_{t\in \mathbb{F}_{p}}\beta^{p^n-1} t+p^{n-1}\sum\limits_{t\in \mathbb{F}_{p}}\beta^{p^n-1} t^2\equiv0\hspace{-0.18cm}\pmod p.
\end{eqnarray}
Next, let $N_a$ denote the frequency of each codeword $\mathbf{c}_{\alpha,\beta, c}=(c_0,c_1,\cdots,c_{p^n-1})\in \overline{\mathcal{C}_{f}}$ with component $a\in\mathbb{F}^{\star}_{p}$. Then for any $(\alpha, \beta, c) \in \mathbb{F}^{\star}_{p}\times\mathbb{F}_{p^n}\times\mathbb{F}_{p}$, it follows from Eq. (\ref{lcode_1}) that
\begin{eqnarray}\label{hatf{c}}
N_a&=&\frac{1}{p}\sum\limits_{z\in \mathbb{F}_{p}}\sum\limits_{x\in \mathbb{F}_{p^n}}\zeta_{p}^{\left(\alpha f(x)-\mathrm{Tr}^n_1(\beta x)-c-a\right)z}\nonumber \\
&=&p^{n-1}+\frac{1}{p}\sum\limits_{z\in \mathbb{F}^{\star}_{p}}\sigma_{z\alpha}\left(\mathcal{W}_{f}(\alpha^{-1}\beta)\zeta_{p}^{-(c+a)\alpha^{-1}}\right)\nonumber\\
&=&p^{n-1}+\frac{1}{p}\sum\limits_{z\in \mathbb{F}^{\star}_{p}}\left(\sum\limits_{t\in \mathbb{F}_{p}}\beta_t\zeta_{p}^{(t\alpha-(c+a))z}\right)=p^{n-1}+\frac{1}{p}\sum\limits_{t\in\mathbb{F}_{p}}\left(\sum\limits_{z\in \mathbb{F}^{\star}_{p}}\beta_t\zeta_{p}^{(t\alpha-(c+a))z}\right)\nonumber\\
&=&p^{n-1}-\frac{1}{p}\sum\limits_{t\in\mathbb{F}_{p}}\beta_t+\beta_{(c+a)\alpha^{-1}}.
\end{eqnarray}

Note that $N_a$ is an integer, it yields that $p$ divides $\sum_{t\in\mathbb{F}_{p}}\beta_t$.
Then for any $(\alpha, \beta, c) \in \mathbb{F}^{\star}_{p}\times\mathbb{F}_{p^n}\times\mathbb{F}_{p}$, in light of Eq. (\ref{hatf{c}}) and $\sum_{a\in\mathbb{F}_{p}}a^2=2\frac{w^{2\frac{p-1}{2}}-1}{w^2-1}=2(1+w^2+\cdots+w^{2(\frac{p-1}{2}-1)}\equiv 0 \pmod p$, where $w$ is a primitive element in $\mathbb{F}_{p}$, it follows that
\begin{eqnarray}\label{ha}
\mathbf{c}_{\alpha,\beta,c}\cdot\mathbf{c}_{\alpha,\beta,c}&=&\sum\limits_{a\in\mathbb{F}_{p}}N_aa^2=p^{n-1}\sum\limits_{a\in\mathbb{F}_{p}}a^2-\frac{1}{p}\sum\limits_{t\in\mathbb{F}_{p}}\beta_t\sum\limits_{a\in\mathbb{F}_{p}}a^2
+\sum\limits_{a\in\mathbb{F}_{p}}\beta_{(c+a)\alpha^{-1}}a^2\nonumber\\
&=&\sum\limits_{a\in\mathbb{F}_{p}}\beta_{(c+a)\alpha^{-1}}a^2\equiv\sum\limits_{a\in\mathbb{F}_{p}}\beta_{a}(\alpha a-c)^2\nonumber\\
&=&\sum\limits_{a\in\mathbb{F}_{p}}\left(\alpha^{2}\beta_{a}a^2-2c\alpha\beta_{a}a+c^{2}\beta_{a}\right)\nonumber\\
&=&\alpha^{2}\left(\sum\limits_{a\in\mathbb{F}_{p}}\beta_{a}a^2\right)-2c\alpha\left(\sum\limits_{a\in\mathbb{F}_{p}}\beta_{a}a\right)
+c^{2}\sum\limits_{a\in\mathbb{F}_{p}}\beta_{a}\pmod p.
\end{eqnarray}
Combining Eq. (\ref{azero}), Eq. (\ref{ha}) and Lemma \ref{p-ary}, we can deduce that $\overline{\mathcal{C}_{f}}$ is self-orthogonal if and only if
\begin{align}\label{lha}
\alpha^{2}\left(\sum_{a\in\mathbb{F}_{p}}\beta_{a}a^2\right)-2c\alpha\left(\sum_{a\in\mathbb{F}_{p}}\beta_{a}a\right)+c^{2}\sum\limits_{a\in\mathbb{F}_{p}}\beta_{a}\equiv 0\pmod p,~\mathrm{for ~all}~\overline{\beta}\in \mathrm{S}.
\end{align}

Finally, the first point can be obtained immediately from Eq. (\ref{lha}) with $c=0$; the second point can be derived easily from Eq. (\ref{lha}) and $p$ divides $\sum_{a\in\mathbb{F}_{p}}\beta_a$, and the fact that $\mathcal{C}_{f}$ is a subcode of $\overline{\mathcal{C}_{f}}$.   \hfill $\square$
\end{proof}
\begin{remark}\label{remark2} (1) Note that if $p=3$, then $\sum_{a\in\mathbb{F}_{3}}a^2\equiv 2 \pmod 3$, which fails to satisfy the key point $\sum_{a\in\mathbb{F}_{p}}a^2\equiv 0 \pmod p$ in the proof of Theorem \ref{bent function1}. Consequently, the conclusions of Theorem \ref{bent function1} do not cover the case $p=3$.

(2) It is evident that although the scope of application for \cite[Theorem 3.2]{Heng-Li2024} is broad, it has many restrictions. Since it cannot be used to determine whether a linear code that does not contain all-$1$ vector or is not $p$-divisible is self-orthogonal. However, Theorem \ref{bent function1} can clearly determine whether the codes obtained based on Construction 1 are self-orthogonal without the restrictions of having all-$1$ vectors and $p$-divisible.
\end{remark}
\subsection{Optimal (almost optimal) self-orthogonal linear codes with respect to the Griesmer bound (Singleton bound) from special $p$-ary functions }\label{l-M-B-class}
In this subsection, we first consider linear codes derived from $p$-ary functions. Then, using Theorem \ref{bent function1}, we shall determine under what conditions the constructed codes are self-orthogonal or not. To achieve this, the following two lemmas are necessary.

\begin{lemma}\label{lemma-15}
Let $n$ be a positive integer and $t\in \mathbb{F}^{\star}_{p}$. Let $f(x)$ be a $p$-ary function defined on $\mathbb{F}_{p^n}$ and given by
\begin{eqnarray}\label{g1}
f_{t}(x)
&=&\left\{ \begin{array}{ll}
t, &\mathrm{if}~ x=0, \\
x^{\frac{p^n-1}{2}}, &\mathrm{if}~x\in\mathbb{F}^{\star}_{p^n}.
\end{array} \right.
\end{eqnarray} Then the Walsh transform coefficient of $f_{t}(x)$ at $\beta\in \mathbb{F}_{p^n}$ is equal to
\begin{eqnarray}\label{lta3}
\mathcal{W}_{f_{t}}(\beta)
&=&\left\{ \begin{array}{ll}
\frac{p^n-1}{2}\left(\zeta_p+\zeta^{-1}_p\right)+\zeta^{t}_p, &\mathrm{if}~ \beta=0, \\
\frac{(-1)^{n-1}\sqrt{p^{*}}^{n}-1}{2}\zeta_p-\frac{(-1)^{n-1}\sqrt{p^{*}}^{n}+1}{2}\zeta^{-1}_p+\zeta^{t}_p, &\mathrm{if}~\eta(-\beta)=1,\\
\frac{(-1)^{n}\sqrt{p^{*}}^{n}-1}{2}\zeta_p-\frac{(-1)^{n}\sqrt{p^{*}}^{n}+1}{2}\zeta^{-1}_p+\zeta^{t}_p, &\mathrm{if}~\eta(-\beta)=-1.
\end{array} \right.
\end{eqnarray}
\end{lemma}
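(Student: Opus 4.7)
The plan is to unfold the Walsh transform directly from its definition, separating the contribution of $x=0$ and reducing the remaining sum over $\mathbb{F}_{p^n}^\star$ to standard quadratic character sums. First, I would observe that for $x\in\mathbb{F}_{p^n}^\star$ the value $x^{(p^n-1)/2}\in\{1,-1\}\subseteq\mathbb{F}_p$ coincides with $\eta(x)$, so
\[
\mathcal{W}_{f_t}(\beta)=\zeta_p^{t}+\sum_{x\in\mathbb{F}_{p^n}^\star}\zeta_p^{\eta(x)-\mathrm{Tr}_1^n(\beta x)}
=\zeta_p^{t}+\zeta_p\cdot S_+(\beta)+\zeta_p^{-1}\cdot S_-(\beta),
\]
where $S_\pm(\beta):=\sum_{x\in\mathbb{F}_{p^n}^\star,\,\eta(x)=\pm 1}\zeta_p^{-\mathrm{Tr}_1^n(\beta x)}$. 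Writing the indicator of squares (resp.\ non-squares) as $\tfrac12(1\pm\eta(x))$, I would split each $S_\pm(\beta)$ into two pieces, namely
\[
S_\pm(\beta)=\tfrac12\sum_{x\in\mathbb{F}_{p^n}^\star}\zeta_p^{-\mathrm{Tr}_1^n(\beta x)}\pm\tfrac12\sum_{x\in\mathbb{F}_{p^n}^\star}\eta(x)\zeta_p^{-\mathrm{Tr}_1^n(\beta x)}.
\]

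For $\beta=0$, both sums collapse immediately: the first gives $p^n-1$ and the second vanishes because $\#\textnormal{SQ}=\#\textnormal{NSQ}=(p^n-1)/2$ in $\mathbb{F}_{p^n}^\star$. Hence $S_+(0)=S_-(0)=(p^n-1)/2$, yielding the first case. For $\beta\ne 0$, the first sum equals $-1$ (complete additive orthogonality over $\mathbb{F}_{p^n}$), while the second is the quadratic Gauss sum $G_{p^n}(\eta,\phi_{-\beta})=(-1)^{n-1}\eta(-\beta)\sqrt{p^{*}}^{n}$ by Lemma~\ref{lequations}. Substituting,
\[
S_\pm(\beta)=\tfrac12\bigl(-1\pm(-1)^{n-1}\eta(-\beta)\sqrt{p^{*}}^{n}\bigr).
\]

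Finally, I would plug these expressions back into $\zeta_p^{t}+\zeta_pS_+(\beta)+\zeta_p^{-1}S_-(\beta)$ and split on the value of $\eta(-\beta)$. When $\eta(-\beta)=1$ the formula of the statement is read off directly; when $\eta(-\beta)=-1$ the sign of the Gauss-sum contribution flips, which combined with $-(-1)^{n-1}=(-1)^{n}$ gives the third case. The computation is essentially linear once the character-sum identities are in place, so I do not expect a real obstacle; the only place requiring care is the bookkeeping of signs, in particular keeping $\eta(-\beta)$ rather than $\eta(\beta)$ and correctly absorbing the $(-1)^{n-1}$ from Lemma~\ref{lequations} into $(-1)^{n}$ in the non-square case.
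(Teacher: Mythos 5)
Your proposal is correct and follows essentially the same route as the paper's proof: isolate the $x=0$ term, encode $x^{(p^n-1)/2}$ via the indicators $\tfrac12(1\pm\eta(x))$, and evaluate the resulting additive character sum and quadratic Gauss sum $G_{p^n}(\eta,\phi_{-\beta})=(-1)^{n-1}\eta(-\beta)\sqrt{p^*}^{\,n}$ from Lemma~\ref{lequations}. The only cosmetic difference is that the paper extends the split sum over all of $\mathbb{F}_{p^n}$ and subtracts the $x=0$ correction $\tfrac12(\zeta_p+\zeta_p^{-1})$, whereas you keep the sums over $\mathbb{F}_{p^n}^\star$ throughout; the sign bookkeeping you flag is handled identically in both.
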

\begin{proof} By the definition of Walsh transform of $p$-ary functions and Lemma \ref{lequations}, we have

\begin{eqnarray*}\label{ct}
\mathcal{W}_{f_{t}}(\beta)
&=&\sum\limits_{x\in \mathbb{F}_{p^n}}\zeta_p^{f(x)-\mathrm{Tr}^n_1(\beta x)}
=\zeta^{t}_p+\sum\limits_{x\in \mathbb{F}^{\star}_{p^n}}\zeta_p^{x^{\frac{p^n-1}{2}}-\mathrm{Tr}^n_1(\beta x)}\nonumber\\
&=&\zeta^{t}_p+\sum\limits_{\eta(x)=1}\zeta_p^{1-\mathrm{Tr}^n_1(\beta x)}+\sum\limits_{\eta(x)=-1}\zeta_p^{-1-\mathrm{Tr}^n_1(\beta x)}\nonumber\\
&=&\zeta^{t}_p+\sum\limits_{x\in \mathbb{F}_{p^n}}\left(\zeta_p^{1-\mathrm{Tr}^n_1(\beta x)}\frac{1+\eta(x)}{2}+\zeta_p^{-1-\mathrm{Tr}^n_1(\beta x)}\frac{1-\eta(x)}{2}\right)-\frac{1}{2}\zeta_p-\frac{1}{2}\zeta_p^{-1}\nonumber\\
&=&\zeta_p-\frac{1}{2}\left(\zeta_p+\zeta_p^{-1}\right)+\frac{1}{2}\sum\limits_{x\in \mathbb{F}_{p^n}}\zeta_p^{-\mathrm{Tr}^n_1(\beta x)}\left(\zeta_p+\zeta_p^{-1}\right)\nonumber\\&&+\frac{1}{2}\sum\limits_{x\in \mathbb{F}_{p^n}}\eta(x)\zeta_p^{-\mathrm{Tr}^n_1(\beta x)}\left(\zeta_p-\zeta_p^{-1}\right)\nonumber\\
&=&\left\{ \begin{array}{ll}
\frac{p^n-1}{2}\left(\zeta_p+\zeta^{-1}_p\right)+\zeta^{t}_p, &\mathrm{if}~ \beta=0, \\
\frac{(-1)^{n-1}\eta(-\beta)\sqrt{p^{*}}^{n}-1}{2}\zeta_p-\frac{(-1)^{n-1}\eta(-\beta)\sqrt{p^{*}}^{n}+1}{2}\zeta^{-1}_p+\zeta^{t}_p, &\mathrm{if}~\beta\ne 0.
\end{array} \right.
\end{eqnarray*}
This completes the proof. \hfill $\square$
\end{proof}

Let $S=\left\{\overline{\beta}=(\beta_0,\beta_1,\cdots,\beta_{p-1})\in\mathbb{Z}\ | \ \mathcal{W}_{f_t}(\beta)=\sum_{a\in\mathbb{F}_{p}}\beta_a\zeta_{p}^{a}, \beta\in\mathbb{F}_{p^n}\right\}$. Then by using Lemma \ref{lemma-15}, we see that there exists $\overline{\beta}\in S$ such that $\sum_{a\in\mathbb{F}_{p}}\beta_{a}a\ne 0 \pmod p$. This means that the linear code $\overline{\mathcal{C}_{f_{t}}}$ designed based on the function in Eq. (\ref{g1}) is not self-orthogonal by Theorem \ref{bent function1}. Applying Lemma \ref{lemma-15} to Lemma \ref{1code-lemma}, a simple calculation reveals that $\overline{\mathcal{C}_{f_{t}}}$ is not $p$-divisible. Thus, we also can not determine whether $\overline{\mathcal{C}_{f_{t}}}$ is self-orthogonal through \cite[Theorem 3.2]{Heng-Li2024}. Since our focus is on self-orthogonal linear codes, it suffices to provide the weight distribution of $\mathcal{C}_{f_{t}}$.

To determine under what conditions the $\mathcal{C}_{f_{t}}$ is self-orthogonal, the following lemma is needed.
\begin{lemma}\label{tlemma-8}
Let notation be the same as in \textnormal{Lemma \ref{lemma-15}} and $p>3$. Let $S=\Big\{\overline{\beta}=(\beta_0,\beta_1,\cdots, \beta_{p-1})\in\mathbb{Z}\ | \ \mathcal{W}_{f_{t}}(\alpha^{-1}\beta)=\sum_{z\in\mathbb{F}_{p}}\beta_z\zeta_{p}^{z}, (\alpha,\beta) \in\mathbb{F}^{\star}_{p}\times\mathbb{F}_{p^n}\Big\}$. Then $\sum_{z\in\mathbb{F}_{p}}\beta_{z}z^2\equiv 0 \pmod p$ for all $\overline{\beta}\in S$ if and only if $t\in\{1,-1\}$.
\end{lemma}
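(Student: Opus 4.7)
The plan is to show that for every $\gamma=\alpha^{-1}\beta\in\mathbb{F}_{p^n}$, the residue $\sum_{z\in\mathbb{F}_p}\beta_z z^2\pmod p$ extracted from any representation $\mathcal{W}_{f_t}(\gamma)=\sum_z\beta_z\zeta_p^z$ equals $t^2-1\pmod p$. Once this uniform formula is established, both directions of the equivalence follow at once, since $t^2\equiv 1\pmod p$ is precisely $t\in\{1,-1\}$. A preliminary observation is that this residue is well-defined: two $\mathbb{Z}$-representations of the same element of $\mathbb{Z}[\zeta_p]$ differ only by adding a uniform integer $k$ to every $\beta_z$ (using the cyclotomic relation $1+\zeta_p+\cdots+\zeta_p^{p-1}=0$), and the resulting change in the sum is $k\sum_{a=0}^{p-1}a^2=k(p-1)p(2p-1)/6$, which is divisible by $p$ because $p>3$.

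For $\gamma=0$, I would substitute directly into $\mathcal{W}_{f_t}(0)=\frac{p^n-1}{2}(\zeta_p+\zeta_p^{-1})+\zeta_p^t$ from Lemma~\ref{lemma-15}. Using $(p-1)^2\equiv 1\pmod p$, a short calculation gives $\sum_z\beta_z z^2\equiv t^2-1\pmod p$. This single case already yields necessity: if the lemma's condition is to hold for all $\overline{\beta}\in S$, it must hold for the representation coming from $\gamma=0$, forcing $t^2\equiv 1\pmod p$ and hence $t\in\{1,-1\}$.

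For sufficiency with $\gamma\neq 0$, I split on the parity of $n$. Setting $\epsilon=(-1)^{n-1}\eta(-\gamma)$ and rewriting the formula of Lemma~\ref{lemma-15} as $\mathcal{W}_{f_t}(\gamma)=\zeta_p^t-\frac{1}{2}(\zeta_p+\zeta_p^{-1})+\frac{\epsilon\sqrt{p^*}^{n}}{2}(\zeta_p-\zeta_p^{-1})$, I consider the $\mathbb{Z}$-linear operator $T\colon\sum_z c_z\zeta_p^z\mapsto\sum_z c_z z^2\pmod p$ on the doubled element $2\mathcal{W}_{f_t}(\gamma)\in\mathbb{Z}[\zeta_p]$. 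When $n$ is even, $\sqrt{p^*}^{n}=(p^*)^{n/2}\in\mathbb{Z}$, and $T(2\mathcal{W}_{f_t}(\gamma))\equiv 2t^2-(1+(p-1)^2)+\epsilon(p^*)^{n/2}(1-(p-1)^2)\equiv 2(t^2-1)\pmod p$; dividing by the invertible factor $2$ gives the claim.

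The main obstacle is the case $\gamma\neq 0$ with $n$ odd, for which $\sqrt{p^*}^{n}=(p^*)^{(n-1)/2}G$ is irrational and must be handled via the quadratic Gauss sum $G=\sum_{c\in\mathbb{F}_p^\star}\eta_0(c)\zeta_p^c$. The approach is to expand
\begin{equation*}
G(\zeta_p-\zeta_p^{-1})=\sum_{b\in\mathbb{F}_p}\bigl[\eta_0(b-1)-\eta_0(b+1)\bigr]\zeta_p^b,
\end{equation*}
then apply $T$ and reindex to obtain $T(G(\zeta_p-\zeta_p^{-1}))=\sum_c\eta_0(c)\bigl[(c+1)^2-(c-1)^2\bigr]=4\sum_{c\in\mathbb{F}_p^\star}c\,\eta_0(c)$. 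The decisive identity is $\sum_{c\in\mathbb{F}_p^\star}c\,\eta_0(c)=\sum_{c\in\mathbb{F}_p^\star}c^{(p+1)/2}\equiv 0\pmod p$, which follows from the standard power-sum vanishing $\sum_{c\in\mathbb{F}_p^\star}c^k\equiv 0\pmod p$ whenever $(p-1)\nmid k$; the hypothesis $p>3$ guarantees $2(p-1)>p+1$, so $(p-1)\nmid(p+1)/2$. Consequently the Gauss-sum contribution to $T(2\mathcal{W}_{f_t}(\gamma))$ vanishes mod $p$, the same reduction as in the even case applies, and the uniform formula $\sum_z\beta_z z^2\equiv t^2-1\pmod p$ is established in all cases, completing the proof.
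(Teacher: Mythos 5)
Your proof is correct and follows essentially the same route as the paper: both compute directly from Lemma \ref{lemma-15} that $\sum_{z}\beta_z z^2\equiv t^2-1\pmod p$ in every case, splitting on the parity of $n$ and showing the quadratic Gauss-sum contribution vanishes modulo $p$ via the identity $\sum_{c\in\mathbb{F}_p^\star}c\,\eta_0(c)=\sum_{c}c^{(p+1)/2}\equiv 0\pmod p$ for $p>3$. Your treatment is marginally tidier in that you verify the residue is independent of the chosen $\mathbb{Z}$-representation in $\mathbb{Z}[\zeta_p]$ (a point the paper leaves implicit) and you handle all odd $n$ uniformly rather than separating $n=1$ from $n\ge 3$, but these are presentational refinements of the same argument.
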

\begin{proof}
If $n$ is even, then it follows immediately from Eq. (\ref{lta3}) that $\sum_{z\in\mathbb{F}_{p}}\beta_{z}z^2\equiv -1+t^2 \pmod p$ for all $\overline{\beta}\in S$.

If $n$ is odd, then from Eq. (\ref{lta3}) it is easy to show that $\sum_{z\in\mathbb{F}_{p}}0_{z}z^2\equiv -1+t^2 \pmod p$ when $\beta=0$. For $\beta\in \mathbb{F}_{p^n}^{\star}$, since $\sum_{z\in\mathbb{F}_{p}^{\star}}\eta_0(z)\zeta_{p}^{z}=\sqrt{p^{*}}$, Eq. (\ref{lta3}) can be written as
\begin{eqnarray}\label{ta3}
\mathcal{W}_{f_{t}}(\beta)
&=&\frac{\eta(-\beta)p^{*\frac{n-1}{2}}}{2}\left(\sum\limits_{z\in\mathbb{F}_{p}^{\star}}\eta_0(z)\zeta_{p}^{z+1}-\sum\limits_{z\in\mathbb{F}_{p}^{\star}}\eta_0(z)\zeta_{p}^{z-1}
\right)-\frac{1}{2}\left(\zeta_p+\zeta_p^{-1}\right)+\zeta^{t}_p\nonumber\\
&=&\frac{\eta(-\beta)p^{*\frac{n-1}{2}}}{2}\left(\sum\limits^{p-1}_{z=3}(\eta_0(z-2)-\eta_0(z))\zeta_{p}^{z-1}+\eta_0(-1)-\eta_0(1)\right)\nonumber\\&&
+\frac{\eta(-\beta)p^{*\frac{n-1}{2}}\eta_0(-2)-1}{2}\zeta_p^{-1}-\frac{\eta(-\beta)p^{*\frac{n-1}{2}}\eta_0(2)+1}{2}\zeta_p+\zeta^{t}_p.
\end{eqnarray}

Below, Eq. (\ref{ta3}) yields that
\begin{eqnarray}\label{tta3}
\sum\limits_{a\in\mathbb{F}_{p}}\beta_{z}z^2
&=&\frac{\eta(-\beta)p^{*\frac{n-1}{2}}}{2}\left(\sum\limits^{p-1}_{z=3}\left(\eta_0(z-2)-\eta_0(z)\right)(z-1)^2\right)\nonumber\\&&+\frac{\eta(-\beta)p^{*\frac{n-1}{2}}}{2}(\eta_0(-2)-\eta_0(2))-1+t^2\nonumber\\
&\equiv&\left\{ \begin{array}{ll}
-1+t^2\pmod p, &\mathrm{if}~ n\ge3, \\
\frac{\eta(-\beta)}{2}\sum\limits^{p-1}_{z=3}(\eta_0(z-2)-\eta_0(z))(z-1)^2\\+\frac{\eta(-\beta)}{2}(\eta_0(-2)-\eta_0(2))-1+t^2\pmod p, &\mathrm{if}~n=1.
\end{array} \right.
\end{eqnarray}

For $n=1$, Eq. (\ref{tta3}) can be written as
\begin{eqnarray*}\label{ttla3}
\sum\limits_{a\in\mathbb{F}_{p}}\beta_{z}z^2
&=&\frac{\eta(-\beta)\sum_{z\in \mathbb{F}^{\star}_{p}}\eta_0(z-2)\left((z-2)^2+2(z-2)+1\right)-\eta(-\beta)\eta_0(2)}{2}\nonumber\\&&-\frac{\eta(-\beta)\sum_{z\in \mathbb{F}^{\star}_{p}}\eta_0(z)\left(z^2-2z+1\right)}{2}+\frac{\eta(-\beta)(\eta_0(-2)-\eta_0(2))}{2}-1+t^2\nonumber\\
&=&\frac{-\eta(-\beta)\left(\eta_0(-2)-\eta_0(2)\right)}{2}+\frac{\eta(-\beta)\left(\eta_0(-2)-\eta_0(2)\right)}{2}-1+t^2\nonumber\\
&=&-1+t^2\pmod p,
\end{eqnarray*}
where the second equation holds by $\sum_{z\in \mathbb{F}^{\star}_{p}}\eta_0(z)z^2=\sum_{z\in \mathbb{F}^{\star}_{p}}\eta_0(z)z\equiv0\pmod p$.

Based on the above analysis, the desired results follow immediately. \hfill $\square$
\end{proof}

With the help of the above arguments, we now proceed to prove the following result.
\begin{theorem}\label{ccwwl}
Let $n$ be a positive integer and $\mathcal{C}_{f_{t}}$ be a linear code derived from the $f_{t}(x)$ in \textnormal{Eq. (\ref{g1})}. Then the following statements hold.

$(1)$ If $p^{n}\equiv1\pmod4$, then $\mathcal{C}_{f_{t}}$ is a $\left[p^{n},n+1, p^{n}-p^{n-1}\right]$ code with weight distribution in \textnormal{Table \ref{tteven}}. Moreover, $\mathcal{C}_{f_{t}}$ is self-orthogonal if and only if $t\in\{1,-1\}$.

$(2)$ If $p^{n}\equiv3\pmod4$ with $p>3$, then $\mathcal{C}_{f_{t}}$ is a $\left[p^{n},n+1, (p-1)p^{n-1}-p^{\frac{n-1}{2}}\right]$ code with weight distribution in \textnormal{Table \ref{tleven}}. Furthermore, $\mathcal{C}_{f_{t}}$ is self-orthogonal if and only if $t\in\{1,-1\}$. For $p=3$, the code $\mathcal{C}_{f_{t}}$ is self-orthogonal if and only if $n>2$.
\begin{table}[H]
\begin{center}
\caption{The weight distribution of $\mathcal{C}_{f_{t}}$ when $p^{n}\equiv1\pmod4$}\label{tteven}
\begin{tabular}{ll}
\hline\noalign{\smallskip}
Weight  &  Multiplicity   \\
\noalign{\smallskip}
\hline\noalign{\smallskip}
$0$  &  1 \\
$ p^{n}$    &  $  p-1$   \\
$ (p-1)p^{n-1}$  &  $ p^{n+1}-p$     \\
\noalign{\smallskip}
\hline
\end{tabular}
\end{center}
\end{table}
\vspace{-0.5cm}
\begin{table}[H]
\begin{center}
\caption{The weight distribution of $\mathcal{C}_{f_{t}}$ when $p^{n}\equiv3\pmod4$}\label{tleven}
\begin{tabular}{ll}
\hline\noalign{\smallskip}
Weight  &  Multiplicity   \\
\noalign{\smallskip}
\hline\noalign{\smallskip}
$0$  &  1 \\
$ p^{n}$    &  $  p-1$   \\
$ (p-1)p^{n-1}$  &  $ p^{n}-1$     \\
$(p-1)p^{n-1}-p^{\frac{n-1}{2}}$  &  $(p-1)\frac{p^n-1}{2}$    \\
$(p-1)p^{n-1}+p^{\frac{n-1}{2}}$  &  $(p-1)\frac{p^n-1}{2}$  \\
\noalign{\smallskip}
\hline
\end{tabular}
\end{center}
\end{table}
\vspace{-0.5cm}
\end{theorem}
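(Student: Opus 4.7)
The plan is to derive the weight distribution of $\mathcal{C}_{f_t}$ from the Walsh spectrum of $f_t$ supplied by Lemma~\ref{lemma-15}, and then to settle self-orthogonality by invoking Theorem~\ref{bent function1}(1) together with Lemma~\ref{tlemma-8} for $p>3$, and Lemma~\ref{3-ary} for $p=3$. First I would specialise the Hamming-weight formula: adapting the calculation in the proof of Lemma~\ref{1code-lemma} to the code $\mathcal{C}_{f_t}$ (which corresponds to taking $c=0$), a codeword $\mathbf{c}_{\alpha,\beta}\in\mathcal{C}_{f_t}$ has weight $0$ when $\alpha=\beta=0$, weight $p^n-p^{n-1}$ when $\alpha=0\ne\beta$, and
\begin{equation*}
\wt(\mathbf{c}_{\alpha,\beta})=p^n-p^{n-1}-\frac{1}{p}\sum_{z\in\mathbb{F}_p^\star}\sigma_{z\alpha}\bigl(\mathcal{W}_{f_t}(\alpha^{-1}\beta)\bigr)
\end{equation*}
for $\alpha\ne 0$. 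I would then substitute the three branches of $\mathcal{W}_{f_t}$ from Lemma~\ref{lemma-15} into this Galois sum, using $\sigma_a(\zeta_p)=\zeta_p^a$ and $\sigma_a(\sqrt{p^*}^n)=\eta_0(a)^n\sqrt{p^*}^n$ on each piece.

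For $\beta=0$ and $\alpha\ne 0$, the $\zeta_p^t$ part contributes $-1$ and the symmetric $(\zeta_p+\zeta_p^{-1})$-part contributes $-(p^n-1)$, so the weight equals $p^n$ with multiplicity $p-1$. For $\beta\ne 0$ I would split on $\eta(-\alpha^{-1}\beta)=\pm1$. When $p^n\equiv 1\pmod 4$, the element $\sqrt{p^*}^n$ is a rational integer and is fixed by every $\sigma_a$; the two branches then collapse into the single weight $(p-1)p^{n-1}$, which together with the $\alpha=0,\beta\ne 0$ codewords gives the $p^{n+1}-p$ entry of Table~\ref{tteven}. When $p^n\equiv 3\pmod 4$, one has $n$ odd and $\sigma_{z\alpha}(\sqrt{p^*}^n)=\eta_0(z\alpha)\sqrt{p^*}^n$; a reindexing $y=z\alpha$ combined with Lemma~\ref{equations}(1) reduces the surviving sum to $\frac{\sqrt{p^*}^n}{2}\cdot\sqrt{p^*}\bigl(1-\eta_0(-1)\bigr)=\pm\sqrt{p^*}^{n+1}$, which rationalises to $\pm(-1)^{(n+1)/2}p^{(n+1)/2}$ and therefore yields the extra weights $(p-1)p^{n-1}\pm p^{(n-1)/2}$, each with multiplicity $(p-1)(p^n-1)/2$ as in Table~\ref{tleven}.

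For self-orthogonality when $p>3$, I would apply Theorem~\ref{bent function1}(1): writing $\mathcal{W}_{f_t}(\alpha^{-1}\beta)=\sum_a\beta_a\zeta_p^a$, self-orthogonality is equivalent to $\sum_a\beta_a a^2\equiv 0\pmod p$ for every $\overline{\beta}\in S$, and this is exactly the condition that Lemma~\ref{tlemma-8} characterises as $t\in\{1,-1\}$. For $p=3$ I would instead invoke Lemma~\ref{3-ary}, so self-orthogonality is equivalent to $3$-divisibility of every weight. In case~(1) (where $n$ is even) the weights are $0,p^n,(p-1)p^{n-1}$, all divisible by $3$; in case~(2) (where $n$ is odd) the additional weights $(p-1)p^{n-1}\pm p^{(n-1)/2}$ force $3$-divisibility of the whole list to be equivalent to $3\mid p^{(n-1)/2}$, i.e.\ $n\geq 3$. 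For $n=1$ one already reads off the weights $1$ and $2$ from Table~\ref{tleven}, so the code fails to be self-orthogonal, giving the ``iff $n>2$'' clause.

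The main obstacle I anticipate is the cancellation analysis in case~(2): the three pieces of $\mathcal{W}_{f_t}$ (the two $\zeta_p^{\pm 1}$ terms and the $\zeta_p^t$ term) must conspire so that, after Galois averaging, only the quadratic-character portion survives and the $\zeta_p^t$-coming contribution exactly cancels against $-\frac{1}{2}\bigl(\sum_z\zeta_p^{z\alpha}+\sum_z\zeta_p^{-z\alpha}\bigr)$. Tracking signs and making explicit use of $\eta_0(-1)=-1$ (which is where the hypothesis $p\equiv 3\pmod 4$ is really used) is what produces the clean value $\pm p^{(n-1)/2}$, and a mis-step here would simultaneously corrupt Table~\ref{tleven} and the $p=3$ divisibility analysis underpinning the ``$n>2$'' clause.
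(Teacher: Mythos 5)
Your route is essentially the paper's: substitute the Walsh values of Lemma~\ref{lemma-15} into the weight formula of Lemma~\ref{code-lemma}, evaluate the Galois sum using $\sigma_a(\zeta_p)=\zeta_p^a$, $\sigma_a(\sqrt{p^*}^n)=\eta_0^n(a)\sqrt{p^*}^n$ and Lemma~\ref{equations} to get Tables~\ref{tteven} and \ref{tleven}, then settle self-orthogonality via Theorem~\ref{bent function1}(1) with Lemma~\ref{tlemma-8} for $p>3$ and via Lemma~\ref{3-ary} (divisibility of the weights) for $p=3$; your case~(2) cancellation analysis and the $p=3$, $n>2$ clause are exactly right.

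One local slip in case~(1): when $p\equiv 1\pmod 4$ and $n$ is odd (e.g.\ $p=5$, $n=1$), $\sqrt{p^*}^n=p^{(n-1)/2}\sqrt{p}$ is \emph{not} a rational integer and is not fixed by $\sigma_a$, so your stated reason for the collapse to the single weight $(p-1)p^{n-1}$ does not apply there. The collapse still happens, but for the same reason as in case~(2) with the sign flipped: after reindexing, the surviving term is proportional to $\sum_{y\in\mathbb{F}_p^\star}\eta_0(y)\left(\zeta_p^{y}-\zeta_p^{-y}\right)=\left(1-\eta_0(-1)\right)\sqrt{p^*}$, which vanishes precisely because $\eta_0(-1)=1$ when $p\equiv 1\pmod 4$. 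Folding that subcase into your case-(2) computation (rather than appealing to rationality of $\sqrt{p^*}^n$) closes the gap and leaves the tables and the self-orthogonality conclusions unchanged.
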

\begin{proof} By substituting Eq. (\ref{lta3}) into Eq. (\ref{codea3}), and then using $\sigma_z\left(\sqrt {p^*}\right)=\eta_0(z)\sqrt{p^*}$ with $z\in\mathbb{F}^{\star}_{p}$ and Lemma \ref{equations}, we can obtain Tables \ref{tteven} and \ref{tleven} after simple calculations. Hence the desired results follow by combining Theorem \ref{bent function1}, Lemmas \ref{3-ary} and \ref{tlemma-8}.   \hfill $\square$
\end{proof}
\begin{example}\label{lexamples linear code}
(1) Let $p=3$, $n=4$, $t=1$ and $\xi$ be a primitive element in $\mathbb{F}_{3^4}$ such that $\xi^4+2\xi^3+2=0$. Then we can verify by Magma program that the linear code $\mathcal{C}_{f_t}$ has the parameters $[243, 6, 54]$ and the weight enumerator is $1+240z^{54}+2z^{81}$. Furthermore, let $p=5$, $n\in\{1,2\}$ and $t\in\{1,-1\}$. Then we can verify by Magma program that the linear code $\mathcal{C}_{f_t}$ is indeed self-orthogonal. However, for $t\in\mathbb{F}^{\star}_{5}\setminus\{1,-1\}$, the code $\mathcal{C}_{f_t}$ is not self-orthogonal. This is consistent with Theorem \ref{ccwwl}(1).

(2) Let $p=3$, $n=5$, $t=1$ and $\xi$ be a primitive element in $\mathbb{F}_{3^5}$ such that $\xi^5+2\xi+1=0$. Then we can verify by Magma program that the linear code $\mathcal{C}_{f_t}$ has the parameters $[243, 6, 153]$ and the weight enumerator is $1+242z^{153}+242z^{162}+242z^{171}+2z^{243}$. Furthermore, let $p=7$, $n\in\{1,3\}$ and $t\in\{1,-1\}$. Then we can verify by Magma program that the linear code $\mathcal{C}_{f_t}$ is indeed self-orthogonal. However, for $t\in\mathbb{F}^{\star}_{7}\setminus\{1,-1\}$, the code $\mathcal{C}_{f_t}$ is not self-orthogonal. This is consistent with Theorem \ref{ccwwl}(2).
\end{example}
\begin{remark}\label{ttremark} (1) It is obvious that the code $\mathcal{C}_{f_{t}}$ in Theorem \ref{ccwwl} is not self-orthogonal for $t \in\mathbb{F}^{\star}_{p}\backslash\{1,-1\}$. However, \cite[Theorem 3.2]{Heng-Li2024} can not be used to determine under what conditions $\mathcal{C}_{f_{t}}$ is self-orthogonal. Moreover, it needs to be noted that, according to Lemma \ref{2bt}, the code $\mathcal{C}_{f_{t}}$ in Theorem \ref{ccwwl}$(1)$ is optimal with respect to the Griesmer bound. Particularly, in light of Lemma \ref{2b}, the code $\mathcal{C}_{f_{t}}$ in Theorem \ref{ccwwl}$(1)$ (resp., Theorem \ref{ccwwl}$(2)$) is optimal (resp., almost optimal) with respect to the Singleton bound for $n=1$.

(2) Since Theorem \ref{bent function1} is derived from Lemma \ref{p-ary}, it is also feasible to directly apply Lemma \ref{p-ary} to determine whether the code in Theorem \ref{ccwwl} is self-orthogonal. However, it would still require going through the proof process of Theorem \ref{bent function1}. Therefore, using Theorem \ref{bent function1} directly to verify the self-orthogonality of the code in Theorem \ref{ccwwl} is more straightforward and simpler.
\end{remark}
\subsection{Self-orthogonal minimal linear codes violating the AB condition from the unbalanced plateaued functions}\label{p-p-class}
In this subsection, let $n$ and $s$ be positive integers and $0<s<n$. We shall apply the unbalanced weakly regular plateaued functions to construct several classes of minimal linear codes violating the AB condition and determine their weight distributions completely. Then, we shall investigate the self-orthogonality of the constructed codes by Theorem \ref{bent function1}. To this end, we let
\begin{eqnarray}\label{pfunction}
f_{a}(x)=f(x)+af(x)^{p-1},
\end{eqnarray}
where $f(x)$ is a unbalanced weakly regular plateaued function from $\mathbb{F}_{p^{n}}$ to $\mathbb{F}_{p}$ and $a\in\mathbb{F}^{\star}_{p}$.
\begin{lemma}\label{lemma-8}
Let $f_{a}(x)$ be the function defined in \textnormal{Eq. (\ref{pfunction})} and $\epsilon\in\{1,-1\}$.
Then the Walsh transform coefficient of $f_{a}(x)$ at $\beta\in \mathbb{F}_{p^n}$ is
\begin{eqnarray*}\label{ca3}
\mathcal{W}_{f_{a}}(\beta)
&=&\left\{ \begin{array}{ll}
0, &\mathrm{if}~ \beta\in\mathbb{F}_{p^{n}}\setminus\supp(\mathcal{W}_{f}), \\
\left(p^{n-1}+\epsilon\frac{p-1}{p}\sqrt{p^{*}}^{n+s}\right)\left(1-\zeta^{a}_p\right)+\epsilon\sqrt{p^{*}}^{n+s}\zeta^{a}_p, &\mathrm{if}~\beta=0, \\
\epsilon\sqrt{p^{*}}^{n+s}\frac{1}{p}\left(\zeta^{a}_p+p-1\right), & \mathrm{if}~f^*(\beta)\in\{-a,0\}, \\
\epsilon\sqrt{p^{*}}^{n+s}\frac{1}{p}\left(p\zeta_p^{f^{*}(\beta)+a}+\zeta^{a}_p-1\right), &  \mathrm{if}~f^*(\beta)\in\mathbb{F}^{\star}_p\setminus\{-a\},
\end{array} \right.
\end{eqnarray*}
if $n+s$ is even, and
\begin{eqnarray*}\label{a3}
\mathcal{W}_{f_{a}}(\beta)
&=&\left\{ \begin{array}{ll}
0, &\mathrm{if}~\beta\in\mathbb{F}_{p^{n}}\setminus\supp(\mathcal{W}_{f}), \\
p^{n-1}\left(1-\zeta^{a}_p\right)+\epsilon\sqrt{p^{*}}^{n+s}\zeta^{a}_p, & \mathrm{if}~\beta=0, \\
\epsilon\sqrt{p^{*}}^{n+s}\zeta^{a}_p, &\mathrm{if}~ f^*(\beta)=0, \\
\epsilon\sqrt{p^{*}}^{n+s}\frac{1}{p}\left(\left(1-\zeta^{a}_p\right)\eta_{0}(-a)\sqrt{p^{*}}+p\right), &\mathrm{if}~  f^*(\beta)=-a,\\
\epsilon\sqrt{p^{*}}^{n+s}\left(\frac{1}{p}\eta_{0}\left(f^{*}(\beta)\right)\sqrt{p^{*}}\left(1-\zeta^{a}_p\right)+\zeta_p^{f^*(\beta)+a}\right), & \mathrm{if}~ f^*(\beta)\in\mathbb{F}^{\star}_p\setminus\{-a\},
\end{array} \right.
\end{eqnarray*}
if $n+s$ is odd.
\end{lemma}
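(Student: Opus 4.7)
The plan is to exploit Fermat's identity $y^{p-1} = 1$ for $y \in \mathbb{F}_p^{\star}$ to rewrite $f_a$ piecewise: $f_a(x) = 0$ on $f^{-1}(0)$ and $f_a(x) = f(x) + a$ elsewhere. Splitting the Walsh sum accordingly yields the key identity
$$\mathcal{W}_{f_a}(\beta) = (1 - \zeta_p^a)\, N_0(\beta) + \zeta_p^a\, \mathcal{W}_f(\beta), \qquad N_0(\beta) = \sum_{x \in f^{-1}(0)} \zeta_p^{-\mathrm{Tr}_1^n(\beta x)},$$
so the whole task reduces to evaluating $N_0(\beta)$ and then substituting the explicit Walsh spectrum of the weakly regular $s$-plateaued function $f$ into the second term.

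Using the indicator $\mathbf{1}_{f(x)=0} = \frac{1}{p}\sum_{t \in \mathbb{F}_p} \zeta_p^{tf(x)}$, I would rewrite
$$N_0(\beta) = \frac{1}{p}\Bigl(p^n\, \delta_{\beta,0} + \sum_{t \in \mathbb{F}_p^{\star}} \mathcal{W}_{tf}(\beta)\Bigr),$$
and then invoke the Galois relation $\mathcal{W}_{tf}(\beta) = \sigma_t\bigl(\mathcal{W}_f(t^{-1}\beta)\bigr)$. The hypothesis $f^{*} \in \mathcal{RF}$ supplies the homogeneity $f^{*}(t^{-1}\beta) = t^{-l} f^{*}(\beta)$ with $\gcd(l-1, p-1) = 1$, which together with $\sigma_t(\sqrt{p^{*}}^{\,n+s}) = \eta_0(t)^{n+s}\sqrt{p^{*}}^{\,n+s}$ produces, for $\beta \in \mathrm{supp}(\mathcal{W}_f)$,
$$\mathcal{W}_{tf}(\beta) = \epsilon\,\eta_0(t)^{n+s}\,\sqrt{p^{*}}^{\,n+s}\,\zeta_p^{\,t^{1-l}\, f^{*}(\beta)}.$$

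The computation then splits on the parity of $n+s$. When $n+s$ is even, $\eta_0(t)^{n+s}=1$, and because $\gcd(l-1,p-1)=1$ makes $t \mapsto t^{1-l}$ a bijection of $\mathbb{F}_p^{\star}$, the sum $\sum_{t \neq 0}\mathcal{W}_{tf}(\beta)$ collapses to $(p-1)\epsilon\sqrt{p^{*}}^{\,n+s}$ or $-\epsilon\sqrt{p^{*}}^{\,n+s}$ according as $f^{*}(\beta) = 0$ or $f^{*}(\beta) \neq 0$. When $n+s$ is odd, since $p-1$ even forces $l-1$ odd, the identity $\eta_0(t^{1-l}) = \eta_0(t)$ holds and Lemma \ref{equations}(1) evaluates the resulting quadratic Gauss-type sum to $\eta_0(f^{*}(\beta))\sqrt{p^{*}}$ when $f^{*}(\beta) \neq 0$ (and to $0$ when $f^{*}(\beta)=0$). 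Substituting the value of $N_0(\beta)$ back into the key identity and isolating the three subcases $f^{*}(\beta) = 0$, $f^{*}(\beta) = -a$, and $f^{*}(\beta) \in \mathbb{F}_p^{\star} \setminus \{-a\}$, which arise upon simplifying $\zeta_p^a \zeta_p^{f^{*}(\beta)}$, delivers the four lines of each of the two displayed formulas.

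The main obstacle lies in the case $\beta \notin \mathrm{supp}(\mathcal{W}_f)$, where $\mathcal{W}_f(\beta) = 0$ is immediate but one still owes $N_0(\beta) = 0$. Via the Galois identity $|\mathcal{W}_{tf}(\beta)| = |\mathcal{W}_f(t^{-1}\beta)|$, this reduces to showing that $\mathrm{supp}(\mathcal{W}_f)$ is invariant under scalar multiplication by $\mathbb{F}_p^{\star}$, i.e., $t^{-1}\beta \notin \mathrm{supp}(\mathcal{W}_f)$ for every $t \neq 0$. This stability is precisely the strongest form in which the assumption $f^{*} \in \mathcal{RF}$ is used: the homogeneity $f^{*}(c\beta) = c^l f^{*}(\beta)$ is incompatible with $c\beta$ lying outside the support while $\beta$ lies inside with $f^{*}(\beta) \neq 0$, and a separate argument (standard for weakly regular plateaued functions in $\mathcal{RF}$) upgrades this to full $\mathbb{F}_p^{\star}$-invariance. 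Once the invariance is in hand, each $\mathcal{W}_{tf}(\beta)$ vanishes, forcing $N_0(\beta) = 0$ and closing the last case.
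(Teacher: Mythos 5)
Your proposal is correct and follows essentially the same route as the paper's proof: the same splitting of $f_a$ into $0$ on $f^{-1}(0)$ and $f+a$ elsewhere, the same orthogonality/Galois evaluation of $N_0(\beta)$ via $\mathcal{W}_{tf}(\beta)=\sigma_t\bigl(\mathcal{W}_f(t^{-1}\beta)\bigr)$ combined with the homogeneity $f^*(t^{-1}\beta)=t^{-l}f^*(\beta)$, and the same parity split on $n+s$ resolved by the quadratic Gauss sum. The only substantive difference is that you explicitly flag and justify the $\mathbb{F}_p^{\star}$-invariance of $\supp(\mathcal{W}_f)$ needed to conclude $N_0(\beta)=0$ when $\beta\notin\supp(\mathcal{W}_f)$, a point the paper passes over silently.
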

\begin{proof} For any $b\in \mathbb{F}_p$, we define the set $T_b=\{x\in \mathbb{F}_{ p^{n}} ~|~ f(x)=b\}.$
Then for any $\beta\in \mathbb{F}_{p^{n}}$, the Walsh transform coefficient of $f_{a}(x)$ equals
\begin{eqnarray}\label{hatf{pb0}}
\mathcal{W}_{f_{a}}(\beta)&=&\sum\limits_{x\in \mathbb{F}_{p^{n}}}\zeta_p^{f_{a}(x)-\mathrm{Tr}_1^n(\beta x)}=\sum\limits_{b \in \mathbb{F}_p}\sum\limits_{x\in T_b}\zeta_p^{b+ab^{p-1}-\mathrm{Tr}_1^n(\beta x)}\nonumber \\
&=&\sum\limits_{x\in T_0}\zeta_p^{-\mathrm{Tr}^n_1(\beta x)}+\sum\limits_{b\in\mathbb{F}^{\star}_p}\sum\limits_{x\in T_b}\zeta_p^{f(x)+a-\mathrm{Tr}^n_1(\beta x)}\nonumber \\
&=&\sum\limits_{x\in T_0}\zeta_p^{-\mathrm{Tr}^n_1(\beta x)}\left(1-\zeta^{a}_p\right)+\sum\limits_{x\in \mathbb{F}_{p^{n}}}\zeta_p^{f(x)+a-\mathrm{Tr}^n_1(\beta x)}\nonumber\\
&=&\left\{ \begin{array}{ll}
\#T_0\left(1-\zeta^{a}_p\right)+\epsilon\sqrt{p^{*}}^{n+s}\zeta_p^{a}, & \mathrm{if}~\beta=0, \\
\#0, & \mathrm{if}~\beta\in\mathbb{F}_{p^{n}}\setminus \supp(\mathcal{W}_{f}), \\
\phi+\epsilon\sqrt{p^{*}}^{n+s}\zeta_p^{f^*(\beta)+a}, & \mathrm{otherwise},
\end{array} \right.
\end{eqnarray}
where $\phi=\sum\limits_{x\in T_0}\zeta_p^{-\mathrm{Tr}^n_1(\beta x)}\left(1-\zeta^{a}_p\right)$.

We first calculate the value of $\#T_0$. It follows from the definition of unbalanced weakly regular plateaued functions that
\begin{eqnarray}\label{hatf{pb1}}
\#T_0&=&\frac{1}{p}\sum\limits_{z\in \mathbb{F}_{p}}\sum\limits_{x\in \mathbb{F}_{p^{n}}}\zeta_p^{f(x)z}
=p^{n-1}+\frac{1}{p}\sum\limits_{z\in \mathbb{F}^{\star}_p}\sigma_{z}\left(\sum\limits_{x\in \mathbb{F}_{p^{n}}}\zeta_p^{f(x)}\right)\nonumber\\
&=&\left\{ \begin{array}{ll}
p^{n-1}+\epsilon\frac{p-1}{p}\sqrt{p^{*}}^{n+s}, & \mathrm{if}~n+s~ \mathrm{is}~ \mathrm{even}, \\
p^{n-1}, & \mathrm{if}~n+s~ \mathrm{is}~ \mathrm{odd}.
\end{array} \right.
\end{eqnarray}

Next we calculate the $\mathcal{W}_{f_a}(\beta)$ for any $\beta\in \supp(\mathcal{W}_{f})$. Eq. (\ref{hatf{pb0}}) yields that
\begin{eqnarray}\label{equation-w1}
\phi&=&\sum\limits_{x\in T_{0}}\left(\zeta_p^{-\mathrm{Tr}^n_1(\beta x)}-\zeta_p^{a-\mathrm{Tr}^n_1(\beta x)}\right)\nonumber \\
&=&\frac{1}{p}\sum\limits_{z\in \mathbb{F}^{\star}_p}\sum\limits_{x\in \mathbb{F}_{p^{n}}}\zeta_p^{f(x)z-\mathrm{Tr}^n_1(\beta x)}\left(1-\zeta^{a}_p\right)+\frac{1}{p}\sum\limits_{x\in \mathbb{F}_{p^{n}}}\zeta_p^{-\mathrm{Tr}^n_1(\beta x)}\left(1-\zeta^{a}_p\right)\nonumber \\
&=&\frac{1}{p}\sum\limits_{z\in \mathbb{F}^{\star}_p}\sigma_{z}\left(\sum\limits_{x\in \mathbb{F}_{p^{n}}}\zeta_p^{f(x)-\mathrm{Tr}^n_1(z^{-1}\beta x)}\right)\left(1-\zeta^{a}_p\right)\nonumber \\
&=&\frac{1}{p}\sum\limits_{z\in \mathbb{F}^{\star}_p}(-1)^{n-1}\sigma_{z}\left(\epsilon\sqrt{p^{*}}^{n+s}\zeta_p^{f^{*}(z^{-1}\beta)}\right)\left(1-\zeta^{a}_p\right).
\end{eqnarray}

Below we shall divide the discussion into two cases according to the values of $n$. Moreover, we always assume that $l$ is a positive integer and $(l-1,p-1)=1$.

$(\mathrm{i})$ If $n+s$ is even, then it follows from Eq. (\ref{othor-prop-wp}) and $f^*(z^{-1}\beta)=z^{-l}f^*(\beta)$ that
\begin{eqnarray*}\label{hatf{pb3}}
\mathcal{W}_{f_{a}}(\beta)&=&\frac{\epsilon\sqrt{p^{*}}^{n+s}}{p}\left(\left(1-\zeta^{a}_p\right)\sum\limits_{z\in \mathbb{F}^{\star}_p}\zeta_p^{z^{1-l}f^{*}(\beta)}+p\zeta_p^{f^*(\beta)+a}\right)\nonumber\\
&=&\left\{ \begin{array}{ll}
\epsilon\sqrt{p^{*}}^{n+s}\frac{1}{p}\left(\zeta^{a}_p+p-1\right), & \mathrm{if}~f^*(\beta)\in\{-a,0\}, \\
\epsilon\sqrt{p^{*}}^{n+s}\frac{1}{p}\left(p\zeta_p^{f^{*}(\beta)+a}+\zeta^{a}_p-1\right), &  \mathrm{if}~f^*(\beta)\in\mathbb{F}^{\star}_p\setminus\{-a\}.
\end{array} \right.
\end{eqnarray*}

$(\mathrm{ii})$ If $n+s$ is odd, then it follows from Eq. (\ref{othor-prop-wp}) and Lemma \ref{equations} that
\begin{eqnarray*}\label{hatf{pb2}}
\mathcal{W}_{f_{a}}(\beta)&=&\epsilon\sqrt{p^{*}}^{n+s}\frac{1}{p}\left(\left(1-\zeta^{a}_p\right)\sum\limits_{z\in \mathbb{F}^{\star}_p}\eta_{0}(z)\zeta_p^{z^{1-l}f^{*}(\beta)}+p\zeta_p^{f^*(\beta)+a}\right)\nonumber\\
&=&\epsilon\sqrt{p^{*}}^{n+s}\frac{1}{p}\left(\left(1-\zeta^{a}_p\right)\eta_{0}(f^{*}(\beta))\sqrt{p^{*}}+p\zeta_p^{f^*(\beta)+a}\right)\nonumber\\
&=&\left\{ \begin{array}{ll}
\epsilon\sqrt{p^{*}}^{n+s}\zeta^{a}_p, &\mathrm{if}~ f^*(\beta)=0, \\
\epsilon\sqrt{p^{*}}^{n+s}\frac{1}{p}\left(\left(1-\zeta^{a}_p\right)\eta_{0}(-a)\sqrt{p^{*}}+p\right), &\mathrm{if}~  f^*(\beta)=-a,\\
\epsilon\sqrt{p^{*}}^{n+s}\left(\frac{1}{p}\eta_{0}(f^{*}(\beta))\sqrt{p^{*}}(1-\zeta^{a}_p)+\zeta_p^{f^*(\beta)+a}\right), & \mathrm{if}~ f^*(\beta)\in\mathbb{F}^{\star}_p\setminus\{-a\}.
\end{array} \right.
\end{eqnarray*}
Hence the desired results follow by the above arguments. \hfill $\square$
\end{proof}

The next result plays an initial role in calculating the parameters of $\mathcal{C}^{\star}_{f_a}$ when $n+s$ is an odd positive integer.
\begin{lemma}\label{sqnsql} Let $c$ run through all elements of $\mathbb{F}^{\star}_{p}\setminus\{-a\}$. Then we have
\begin{eqnarray*}\label{sqin}
\eta_0(c)+\eta_0(c+a)
=\left\{ \begin{array}{ll}
2, &\frac{p-3-\eta_0(a)(1+\eta_0(-1))}{4} ~\mathrm{times},\\
0_{\textnormal{SQ}}, &\frac{p-1+\eta_0(a)\left(1-\eta_0(-1)\right)}{4} ~\mathrm{times},\\
0_{\textnormal{NSQ}}, &\frac{p-1-\eta_0(a)(1-\eta_0(-1))}{4} ~\mathrm{times},\\
-2, &\frac{p-3+\eta_0(a)(1+\eta_0(-1))}{4} ~\mathrm{times},
\end{array} \right.
\end{eqnarray*}
where $\eta_0(c)+\eta_0(c+a)=0_{\textnormal{SQ}}$ means that $c$ belongs to $\textnormal{SQ}$ and satisfies the equation  $\eta_0(c)+\eta_0(c+a)=0$, and $\eta_0(c)+\eta_0(c+a)=0_{\textnormal{NSQ}}$ means that $c$ belongs to $\textnormal{NSQ}$ and satisfies the equation $\eta_0(c)+\eta_0(c+a)=0$.
\end{lemma}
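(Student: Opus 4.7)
The plan is to encode the four cases via the standard quadratic character indicators. For each sign pattern $(\epsilon_1,\epsilon_2)\in\{\pm1\}^2$ define
\[
N_{\epsilon_1,\epsilon_2}=\sum_{c\in\mathbb{F}_p^\star\setminus\{-a\}}\frac{(1+\epsilon_1\eta_0(c))(1+\epsilon_2\eta_0(c+a))}{4},
\]
so that $N_{+,+}$, $N_{+,-}$, $N_{-,+}$, $N_{-,-}$ are exactly the frequencies associated with the values $2$, $0_{\textnormal{SQ}}$, $0_{\textnormal{NSQ}}$, $-2$, respectively. Expanding the product reduces the problem to evaluating four elementary sums over $c\in\mathbb{F}_p^\star\setminus\{-a\}$: the constant sum (which gives $p-2$), the two linear character sums $\sum\eta_0(c)$ and $\sum\eta_0(c+a)$, and the bilinear sum $\sum\eta_0(c(c+a))$.

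Next I would evaluate the three character sums. Since $\eta_0$ is balanced on $\mathbb{F}_p^\star$,
\[
\sum_{c\in\mathbb{F}_p^\star\setminus\{-a\}}\eta_0(c)=-\eta_0(-a),\qquad \sum_{c\in\mathbb{F}_p^\star\setminus\{-a\}}\eta_0(c+a)=-\eta_0(a),
\]
after reindexing via $d=c+a$. For the bilinear one I would invoke Lemma~\ref{squar-Fact} applied to $g(x)=x(x+a)=x^2+ax$ (so $a_2=1$, $a_1=a$, $a_0=0$, $d=a^2\neq 0$), which yields $\sum_{x\in\mathbb{F}_p}\eta_0(x(x+a))=-\eta_0(1)=-1$; since the excluded points $x=0$ and $x=-a$ contribute $\eta_0(0)=0$, restricting to $\mathbb{F}_p^\star\setminus\{-a\}$ leaves the value $-1$ unchanged.

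Plugging these evaluations back in gives
\[
4N_{\epsilon_1,\epsilon_2}=(p-2)-\epsilon_1\eta_0(-a)-\epsilon_2\eta_0(a)-\epsilon_1\epsilon_2,
\]
and using the factorization $\eta_0(-a)=\eta_0(-1)\eta_0(a)$ one reads off the four cases of the lemma: $N_{+,+}=\tfrac{p-3-\eta_0(a)(1+\eta_0(-1))}{4}$, $N_{+,-}=\tfrac{p-1+\eta_0(a)(1-\eta_0(-1))}{4}$, $N_{-,+}=\tfrac{p-1-\eta_0(a)(1-\eta_0(-1))}{4}$, and $N_{-,-}=\tfrac{p-3+\eta_0(a)(1+\eta_0(-1))}{4}$.

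There is no real obstacle here; the only point requiring attention is that in the sum $\sum\eta_0(c(c+a))$ the two excluded values $c=0$ and $c=-a$ both make the summand vanish, so Lemma~\ref{squar-Fact}'s value $-1$ transfers directly to the restricted range, and one should double-check the sign bookkeeping for $\eta_0(-a)$ vs.\ $\eta_0(a)$ when $p\equiv 1\pmod 4$ versus $p\equiv 3\pmod 4$ (this is precisely why the factor $\eta_0(-1)$ appears in the final formulas).
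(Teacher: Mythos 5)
Your proposal is correct and follows essentially the same route as the paper: the paper computes $A=\sum_{c}\frac{1+\eta_0(c)}{2}\cdot\frac{1+\eta_0(c+a)}{2}$ for the case $\eta_0(c)+\eta_0(c+a)=2$, expands into the constant, linear, and bilinear character sums, and applies Lemma~\ref{squar-Fact} to $\eta_0(c^2+ac)$ exactly as you do, leaving the other three cases to analogous computations. Your uniform $(\epsilon_1,\epsilon_2)$ parametrization is just a tidier way of writing out all four cases at once, and your bookkeeping (including the treatment of the excluded points $c=0,-a$ and the factor $\eta_0(-a)=\eta_0(-1)\eta_0(a)$) matches the paper's.
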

\begin{proof} Let $A$ be the number of $c\in\mathbb{F}^{\star}_{p}\setminus\{-a\}$ such that $\eta_0(a)+\eta_0(c+a)=2$. Then we have
\begin{eqnarray*}\label{c-number}
A&=&\sum\limits_{c\in\mathbb{F}^{\star}_{p}\setminus\{-a\}}\frac{1+\eta_0(c)}{2}\frac{1+\eta_0(c+a)}{2}\nonumber\\
&=&\frac{1}{4}(p-2+\sum\limits_{c\in\mathbb{F}^{\star}_{p}\setminus\{-a\}}\eta_0(c)+
\sum\limits_{c\in\mathbb{F}^{\star}_{p}\setminus\{-a\}}\eta_0(c+a)+\sum\limits_{c\in\mathbb{F}^{\star}_{p}\setminus\{-a\}}\eta_0(c^2+ac))\nonumber\\
&=&\frac{1}{4}\left(p-2-\eta_0(-a)-\eta_0(a)+\sum\limits_{c\in\mathbb{F}_{p}}\eta_0(c^2+ac)\right)\nonumber\\
&=&\frac{p-3-\eta_0(a)(1+\eta_0(-1))}{4},
\end{eqnarray*}
where the third equality holds by $\sum_{c\in\mathbb{F}^{\star}_{p}}\eta_0(c)=0$ and the fourth equality holds by Lemma \ref{squar-Fact}. The rest of the argument can be obtained similarly. \hfill$\square$
\end{proof}

The following result from \cite[Lemma 10]{Mesnager-2020} will contribute to the determination of the weight distribution of $\mathcal{C}^{\star}_{f_a}$ in the sequal.
\begin{lemma}\label{lemma-s}
Let notation be the same as before and $f(x)$ be a weakly regular plateaued function defined on $\mathbb{F}_{p^n}$. For any $z\in\mathbb{F}_{p}$, define
$$N_{f^{*}}(z)=\#\left\{\beta\in \supp(\mathcal{W}_{f})\ | \ f^*(\beta)=z\right\}.$$
Then the following statements hold.

$(1)$ If $n-s$ is even, then we get
\begin{eqnarray*}
N_{f^{*}}(z)&=&\left\{ \begin{array}{ll}
p^{n-s-1}+\epsilon\eta^{n+\frac{n-s}{2}}_0(-1) p^{\frac{n-s-2}{2}}(p-1), &\mathrm{if}~z=0,\\
p^{n-s-1}-\epsilon\eta^{n+\frac{n-s}{2}}_0(-1) p^{\frac{n-s-2}{2}}, &\mathrm{if}~z\in \mathbb{F}^{\star}_{p}.
\end{array} \right.
\end{eqnarray*}

$(2)$ If $n-s$ is odd, then we get
\begin{eqnarray*}
N_{f^{*}}(z)&=&\left\{ \begin{array}{ll}
p^{n-s-1}, &\mathrm{if}~z=0,\\
p^{n-s-1}+\epsilon\eta^{n+\frac{n-s-1}{2}}_0(-1) p^{\frac{n-s-1}{2}}, &\mathrm{if}~z\in \textnormal{SQ},\\
p^{n-s-1}-\epsilon\eta^{n+\frac{n-s-1}{2}}_0(-1) p^{\frac{n-s-1}{2}}, &\mathrm{if}~z\in \textnormal{NSQ}.
\end{array} \right.
\end{eqnarray*}
\end{lemma}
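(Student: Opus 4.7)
The plan is to turn the counting problem into a short character-sum manipulation, using additive character orthogonality in $\mathbb{F}_p$ to peel off the condition $f^{*}(\beta)=z$, Fourier inversion of the Walsh transform at $x=0$ to evaluate the key sum $T(1):=\sum_{\beta\in\supp(\mathcal{W}_{f})}\zeta_p^{f^{*}(\beta)}$, and the Galois action $\sigma_y(\zeta_p)=\zeta_p^y$ to reduce every $T(y)$ with $y\in\mathbb{F}_p^{\star}$ to a Galois conjugate of $T(1)$. The evaluation of the remaining sum over $y$ then uses only Lemma \ref{lequations}.

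Concretely, setting $T(y):=\sum_{\beta\in\supp(\mathcal{W}_{f})}\zeta_p^{yf^{*}(\beta)}$, orthogonality of characters together with $\#\supp(\mathcal{W}_{f})=p^{n-s}$ (Lemma \ref{tffunction g}) gives
\[ N_{f^{*}}(z)=p^{n-s-1}+\frac{1}{p}\sum_{y\in\mathbb{F}_p^{\star}}\zeta_p^{-yz}T(y). \]
Applying the Fourier inversion identity $\sum_{\beta\in\mathbb{F}_{p^n}}\mathcal{W}_{f}(\beta)=p^{n}\zeta_p^{f(0)}=p^{n}$ (using $f(0)=0$ as built into the $\mathcal{RF}$ setup) together with the explicit form $\mathcal{W}_{f}(\beta)=\epsilon\sqrt{p^{*}}^{\,n+s}\zeta_p^{f^{*}(\beta)}$ on $\supp(\mathcal{W}_{f})$, I would deduce
\[ T(1)=\frac{p^{n}}{\epsilon\sqrt{p^{*}}^{\,n+s}}=\epsilon\,\eta_0(-1)^{n}\sqrt{p^{*}}^{\,n-s}. \]
Because $\supp(\mathcal{W}_{f})$ is fixed combinatorial data, the Galois automorphism $\sigma_y$ commutes with the summation in $T(1)$, so
\[ T(y)=\sigma_y(T(1))=\epsilon\,\eta_0(-1)^{n}\eta_0(y)^{n-s}\sqrt{p^{*}}^{\,n-s}, \]
where I have used $\sigma_y(\sqrt{p^{*}})=\eta_0(y)\sqrt{p^{*}}$.

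It then remains to handle the two parities of $n-s$. When $n-s$ is even, $\eta_0(y)^{n-s}=1$, so the inner sum over $y$ collapses to $p-1$ when $z=0$ and $-1$ otherwise; rewriting $\sqrt{p^{*}}^{\,n-s}=\eta_0(-1)^{(n-s)/2}p^{(n-s)/2}$ then yields the two subcases of part $(1)$. When $n-s$ is odd, $\eta_0(y)^{n-s}=\eta_0(y)$, so the sum over $y$ becomes the quadratic Gauss sum $G_p(\eta_0,\phi_{-z})=\eta_0(-z)\sqrt{p^{*}}$ for $z\neq 0$ (by Lemma \ref{lequations}) and vanishes at $z=0$; absorbing the extra factor of $\sqrt{p^{*}}$ via $(\sqrt{p^{*}})^{2}/p=\eta_0(-1)$ and splitting on $\eta_0(z)$ produces the three subcases of part $(2)$.

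The only conceptual step is the commutation of $\sigma_y$ with the sum in the evaluation of $T(y)$; everything after that is algebraic bookkeeping. The main obstacle I anticipate is reconciling the exponents of $\eta_0(-1)$ that drop out of this computation with the exponents $n+(n-s)/2$ and $n+(n-s-1)/2$ appearing in the stated formulas, because every factor of $\sqrt{p^{*}}$ contributes an additional $\eta_0(-1)$-factor whose parity must be tracked together with the $(-1)^{n(p-1)/2}=\eta_0(-1)^{n}$ arising from $p^{n}/\sqrt{p^{*}}^{\,n+s}$.
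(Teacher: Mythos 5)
The paper does not prove this lemma at all: it is imported verbatim as \cite[Lemma 10]{Mesnager-2020}, so there is no in-paper argument to compare against. Your blind proof is correct and self-contained, and I checked the bookkeeping you were worried about: from $T(1)=\epsilon\,\eta_0(-1)^{n}\sqrt{p^*}^{\,n-s}$ and $\sqrt{p^*}^{\,n-s}=\eta_0(-1)^{(n-s)/2}p^{(n-s)/2}$ the even case gives exactly the exponent $n+\frac{n-s}{2}$, and in the odd case the Gauss sum contributes $\eta_0(-z)\sqrt{p^*}$, whence the exponent $n+1+\frac{n-s+1}{2}\equiv n+\frac{n-s-1}{2}\pmod 2$, matching the statement. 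Your route (orthogonality to isolate $N_{f^*}(z)$, Fourier inversion $\sum_{\beta}\mathcal{W}_f(\beta)=p^n\zeta_p^{f(0)}$ to pin down $T(1)$, then Galois conjugation $T(y)=\sigma_y(T(1))$ applied directly to the exponential sum rather than to the Walsh transform) is the standard one in this literature, and it is arguably cleaner than the source's, since you never need the homogeneity condition (P2) on $f^*$ --- only the normalization $f(0)=0$. That normalization is genuinely indispensable (adding a nonzero constant to $f$ shifts $f^*$ by the same constant and permutes the counts $N_{f^*}(z)$), and the paper's standing hypothesis in this subsection is phrased as $f^*\in\mathcal{RF}$, i.e.\ $f^*(0)=0$, rather than $f(0)=0$; you should state explicitly that you are taking $f(0)=0$ as part of the setup, which is consistent with the framework of \cite{Mesnager-2020} but not literally written in this paper.
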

\begin{theorem}\label{ccww}
Let $n,s$ be positive integers and $n-s>2$. Let $\mathcal{C}^{\star}_{f_{a}}$ be a linear code derived from the $f_{a}(x)$ in \textnormal{Eq. (\ref{pfunction})}. Then the following statements hold.

$(1)$ If $n+s$ is even, then $\mathcal{C}^{\star}_{f_{a}}$ is a $\left[p^{n}-1,n+1, (p-2)\left(p^{n-1}-\eta^{\frac{n+s}{2}}_0(-1)p^{\frac{n+s}{2}-1}\right)\right]$ self-orthogonal minimal four-weight code with $\frac{w_{\min}}{w_{\max}}<\frac{p-1}{p}$ and the weight distribution of $\mathcal{C}^{\star}_{f_{a}}$ is given by the following \textnormal{Table \ref{leven-1}}.

$(2)$ If $n+s$ is odd, then $\mathcal{C}^{\star}_{f_{a}}$ is a $\left[p^{n},n+1, (p-2)p^{n-1}-\eta_0(a)\epsilon\eta^{\frac{n+s+1}{2}}_0(-1)p^{\frac{n+s-1}{2}}\right]$ self-orthogonal minimal code with $\frac{w_{\min}}{w_{\max}}<\frac{p-1}{p}$ and the weight distribution of $\mathcal{C}^{\star}_{f_{a}}$ is given by the following \textnormal{Table \ref{odd-1}}.
\end{theorem}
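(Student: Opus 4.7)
\begin{pf}[Proof sketch]
My plan is to derive the weights of $\mathcal{C}^{\star}_{f_a}$ by substituting the Walsh transform values of $f_a$ from Lemma \ref{lemma-8} into the weight formula of Lemma \ref{code-lemma}, to tally multiplicities via Lemma \ref{lemma-s} (supplemented by Lemma \ref{sqnsql} in the odd case), and to verify minimality and self-orthogonality through Lemma \ref{nsc} and Theorem \ref{bent function1}, respectively.

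Concretely, for $(\alpha,\beta)\in\mathbb{F}_p^{\star}\times\mathbb{F}_{p^n}$, Lemma \ref{code-lemma} reduces $\wt(\mathbf{c}_{\alpha,\beta})$ to $p^n-p^{n-1}-(1/p)\sum_{z\in\mathbb{F}_p^{\star}}\sigma_{z\alpha}(\mathcal{W}_{f_a}(\alpha^{-1}\beta))$. I would plug in the four (resp.\ five) cases of Lemma \ref{lemma-8} according to whether $\alpha^{-1}\beta\in\supp(\mathcal{W}_f)$ and the value of $f^{\ast}(\alpha^{-1}\beta)$, and evaluate the resulting expressions via $\sigma_{z\alpha}(\sqrt{p^{\ast}})=\eta_0(z\alpha)\sqrt{p^{\ast}}$, Eq. (\ref{othor-prop-wp}) and Lemma \ref{equations}. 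This will produce the four distinct weights of Table \ref{leven-1} when $n+s$ is even and up to six weights for Table \ref{odd-1} when $n+s$ is odd. The multiplicity of each weight will then be obtained by combining $N_{f^{\ast}}(z)$ from Lemma \ref{lemma-s} with, in the odd case, the four-way split of $\mathbb{F}_p^{\star}\setminus\{-a\}$ according to $\eta_0(c)+\eta_0(c+a)$ furnished by Lemma \ref{sqnsql}; the factor $p-1$ from the choice of $\alpha$ contributes an extra sign $\eta_0(\alpha)$ in the odd case.

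Once the distribution is pinned down, the minimum weight and the bound $w_{\min}/w_{\max}<(p-1)/p$ can be read off the tables, exhibiting the violation of the AB sufficient condition, so minimality has to be established directly. For this I would invoke the necessary and sufficient criterion of Lemma \ref{nsc}, verifying $\sum_{z\in\mathbb{F}_p^{\star}}\wt(\mathbf{a}+z\mathbf{b})\neq(p-1)\wt(\mathbf{a})-\wt(\mathbf{b})$ for every pair of linearly independent codewords $\mathbf{a},\mathbf{b}$, a check that reduces to the Walsh data already computed. For self-orthogonality, note that since $f\in\mathcal{RF}$ satisfies $f(0)=0$ we have $f_a(0)=0$, so the Gram structure of $\mathcal{C}^{\star}_{f_a}$ coincides with that of $\mathcal{C}_{f_a}$ and Theorem \ref{bent function1} applies; expanding each $\mathcal{W}_{f_a}(\beta)=\sum_{c\in\mathbb{F}_p}\beta_c\zeta_p^{c}$ from Lemma \ref{lemma-8} and using $\sum_{c\in\mathbb{F}_p}c^2\equiv 0\pmod p$ (valid for $p>3$) together with $\sum_{c\in\mathbb{F}_p^{\star}}\eta_0(c)c^2\equiv 0\pmod p$ would reduce the required congruence $\sum_{c}\beta_c c^2\equiv 0\pmod p$ to a short case-by-case verification.

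The hard part will be the bookkeeping in the $n+s$ odd case: one must simultaneously track the sign $\epsilon$ from $\sqrt{p^{\ast}}^{n+s}$, the Galois sign $\eta_0(z\alpha)$ from the orbit sum, the square/nonsquare status of $f^{\ast}(\alpha^{-1}\beta)$ coming from Lemma \ref{lemma-s}, and the additional four-way refinement of Lemma \ref{sqnsql}, so that a priori distinct algebraic expressions coalesce into the claimed at most six numerical weights of Table \ref{odd-1} and so that the stated minimum distance is indeed attained.
\end{pf}
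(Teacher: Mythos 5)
Your proposal follows essentially the same route as the paper: weights from Lemma \ref{lemma-8} substituted into Lemma \ref{code-lemma}, multiplicities from Lemmas \ref{lemma-s} and \ref{sqnsql}, minimality via the criterion of Lemma \ref{nsc} applied to the partition of codewords by weight class, and self-orthogonality via Theorem \ref{bent function1} after observing $f_a(0)=0$ makes $\mathcal{C}_{f_a}$ and $\mathcal{C}^{\star}_{f_a}$ equivalent in this respect. The only small caveat is that for $p=3$ the congruence $\sum_{c}c^2\equiv 0\pmod p$ fails and Theorem \ref{bent function1} does not apply, so that case must be handled separately via the ternary divisibility criterion (Lemma \ref{3-ary}), as the paper does.
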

\begin{table}[H]
\begin{center}
\caption{The weight distribution of $\mathcal{C}^{\star}_{f_{a}}$ when $n+s$ is even}\label{leven-1}
\begin{tabular}{ll}
\hline\noalign{\smallskip}
Weight  &  Multiplicity   \\
\noalign{\smallskip}
\hline\noalign{\smallskip}
$0$  &  1 \\
$ (p-1)p^{n-1}$    &  $(p-1)\left(p^n-p^{n-s}\right)+p^n-1$   \\
$ (p-2)\left(p^{n-1}-\epsilon\eta^{\frac{n+s}{2}}_0(-1)p^{\frac{n+s}{2}-1}\right)$  &  $ p-1$     \\
$ (p-1)p^{n-1}-\epsilon\eta^{\frac{n+s}{2}}_0(-1)(p-2)p^{\frac{n+s}{2}-1}$  &  $(p-1)\left(2p^{n-s-1}+\epsilon\eta^{n+\frac{n-s}{2}}_0(-1)(p-2)p^{\frac{n-s}{2}-1}-1\right) $    \\
$ (p-1)p^{n-1}+\epsilon\eta^{\frac{n+s}{2}}_0(-1)2p^{\frac{n+s}{2}-1}$  &  $ (p-1)(p-2)\left(p^{n-s-1}-\epsilon\eta^{n+\frac{n-s}{2}}_0(-1) p^{\frac{n-s}{2}-1}\right)$     \\
\noalign{\smallskip}
\hline
\end{tabular}
\end{center}
\end{table}
\vspace{-0.5cm}
\begin{table}[H]
\begin{center}
\caption{The weight distribution of $\mathcal{C}^{\star}_{f_{a}}$ when $n+s$ is odd}\label{odd-1}
\begin{tabular}{ll}
\hline\noalign{\smallskip}
Weight  &  Multiplicity   \\
\noalign{\smallskip}
\hline\noalign{\smallskip}
$0$  &  1 \\
$ (p-1)p^{n-1}$  &  $ \frac{(p-1)^{2}p^{n-s-1}+(p-1)\eta_0(a)(1-\eta_0(-1))\varepsilon_1' } {2}+p^{n+1}-(p-1)p^{n-s}-1$     \\
$ (p-2)p^{n-1}-\varepsilon'\eta_0(a)$    &  $   p-1$   \\
$ (p-1)p^{n-1}-\varepsilon'\eta_0(a)$  &  $(p-1)\left(p^{n-s-1}-1\right)$    \\
$(p-1)p^{n-1}-\varepsilon'\eta_0(-a)$  &  $ (p-1)\left(p^{n-s-1}+\varepsilon_1'\eta_0(-a)\right)$     \\
$ (p-1)p^{n-1}+2\varepsilon'\eta_0(a)$  &  $(p-1)\frac{p-3+\eta_0(a)(1+\eta_0(-1))}{4}\left(p^{n-s-1}-\varepsilon_1' \right) $     \\
$ (p-1)p^{n-1}-2\varepsilon'\eta_0(a)$  &  $(p-1)\frac{p-3-\eta_0(a)(1+\eta_0(-1))}{4}\left(p^{n-s-1}+\varepsilon_1' \right) $     \\
\noalign{\smallskip}
\hline
\end{tabular}
\end{center}
where $\varepsilon'=\epsilon\eta^{\frac{n+s+1}{2}}_0(-1)p^{\frac{n+s-1}{2}},\varepsilon_1'=\epsilon\eta^{n+\frac{n-s-1}{2}}_0(-1)p^{\frac{n-s-1}{2}}.$
\end{table}
\vspace{-0.5cm}
\begin{proof} We first prove the statement (1). By applying Lemma \ref{lemma-8} to Lemma \ref{code-lemma}, it is straightforward to show that $\wt(\mathbf{c}_{\alpha,\beta})$ belongs to
$$\Big\{(p-1)p^{n-1}, (p-2)p^{n-1}, (p-1)p^{n-1}-\epsilon\eta^{\frac{n+s}{2}}_0(-1)(p-2)p^{\frac{n+s}{2}-1},$$
$$ (p-1)p^{n-1}+\epsilon\eta^{\frac{n+s}{2}}_0(-1)2p^{\frac{n+s}{2}-1}\Big\}.$$
Obviously, the $\mathcal{C}_{f_{a}}$ is a four-weight code. It is evident to see that
\begin{eqnarray*}
\frac{w_{\min}}{w_{\max}}&=& \frac{(p-2)\left(p^{n-1}-\epsilon\eta^{\frac{n+s}{2}}_0(-1)p^{\frac{n+s}{2}-1}\right)}{(p-1)p^{n-1}+\left(\frac{1-\epsilon\eta^{\frac{n+s}{2}}_0(-1)}{2}p+\epsilon\eta^{\frac{n+s}{2}}_0(-1)2\right)p^{\frac{n+s}{2}-1}}\\
&=&1-\frac{p^{n-1}+\frac{p\left(1+\epsilon\eta^{\frac{n+s}{2}}_0(-1)\right)p^{\frac{n+s}{2}-1}}{2}}{(p-1)p^{n-1}+\left(\frac{1-\epsilon\eta^{\frac{n+s}{2}}_0(-1)}{2}p+\epsilon\eta^{\frac{n+s}{2}}_0(-1)2\right)p^{\frac{n+s}{2}-1}}<\frac{p-1}{p}.
\end{eqnarray*}

Therefore, Lemma \ref{LEM1} cannot be used to determine whether the linear code $\mathcal{C}^{\star}_{f_{a}}$ is minimal. To prove minimality, we denote by $\omega_1$ to $\omega_4$ the elements in the set above from left to right. Applying Lemma \ref{lemma-8} to Lemma \ref{code-lemma} again, we conclude that the sets of codewords are as follows.
\begin{align}
D_1&=\left\{\mathbf{c}_{0,\beta}\in\mathcal{C}_{f_{a}}~|~\alpha=0,\beta\ne0,~\mathrm{or}~\alpha\ne0,\beta\in\mathbb{F}_{p^{n}}\setminus\supp(\mathcal{W}_{f})\right\},\nonumber \\
D_2&=\left\{\mathbf{c}_{\alpha,0}\in\mathcal{C}_{f_{a}}~|~\alpha\ne0,\beta=0\right\},\nonumber\\
D_3&=\left\{\mathbf{c}_{\alpha,\beta}\in\mathcal{C}_{f_{a}}~|~\alpha \beta\ne0,~\beta\in\supp(\mathcal{W}_{f}),~  f^*(\alpha^{-1}\beta)\in\{-a,0\}\right\},\nonumber \\
D_4&=\left\{\mathbf{c}_{\alpha,\beta}\in\mathcal{C}_{f_{a}}~|~\alpha\beta\ne0,~\beta\in\supp(\mathcal{W}_{f}),~f^*(\alpha^{-1}\beta)\in\mathbb{F}^*_p\setminus\{-a\}\right\}.\nonumber \end{align}
Combining Lemma \ref{tffunction g} and Lemma \ref{lemma-s}, in light of the definitions of $D_1$ to $D_{4}$, we can arrive at Table \ref{leven-1} after a tedious calculation.
Actually, it suffices to consider the case of $\epsilon\eta^{\frac{n+s}{2}}_0(-1)=1$, since the case of $\epsilon\eta^{\frac{n+s}{2}}_0(-1)=-1$ can be dealt with similarly. Note that the weights of codewords in sets $D_1$ to $D_4$ correspond to $\omega_1$ to $\omega_4$, respectively. Thus Table \ref{leven-1} yields that
\begin{equation}\label{weight-leq}
\omega_4>\omega_1>\omega_3>\omega_2.
\end{equation}
Let $(\alpha_j,\beta_j)\in\mathbb{F}_p\times\mathbb{F}_{p^{n}}\setminus\{(0,0)\}$ with $1\leq j\leq2$. Then for any two linearly independent codewords $\mathbf{a}=\mathbf{c}_{\alpha_1,\beta_1}$ and $\mathbf{b}=\mathbf{c}_{\alpha_2,\beta_2}$, the coverage of codewords in $\mathcal{C}_{f_{a}}$ can be divided into the following four cases.

$(\mathrm{i})$ $\mathbf{a}=\mathbf{c}_{\alpha_1, \beta_1}\in D_1$.

$(\mathrm{i}.1)$ Let $\mathbf{b}=\mathbf{c}_{\alpha_2, 0}\in D_2$. It is easy to see that $\mathbf{a}+z\mathbf{b}=\mathbf{c}_{\alpha_1+z\alpha_2,\beta_1}\in  D_1\cup D_3\cup D_4$ when $z$ runs through $\mathbb{F}^{\star}_p$. Thus Eq. (\ref{weight-leq}) yields that
$$\sum_{z\in\mathbb{F}^{\star}_p}\wt(\mathbf{a}+z\mathbf{b})\geq(p-1)\omega_3=(p-1)\left(p^n-p^{n-1}-(p-2)p^{\frac{n+s}{2}-1}\right),$$
$$(p-1)\wt(\mathbf{a})-\wt(\mathbf{b})=(p-1)\omega_1-\omega_2=(p-1)^2p^{n-1}-(p-2)\left(p^{n-1}-p^{\frac{n+s}{2}-1}\right).$$
Obviously, we can arrive at
$$\sum\limits_{z \in \mathbb{F}^{\star}_p}\wt(\mathbf{a}+z\mathbf{b})>(p-1)\wt(\mathbf{a})-\wt(\mathbf{b}).$$

$(\mathrm{i}.2)$ Let $\mathbf{b}=\mathbf{c}_{\alpha_2,\beta_2}\in D_1\cup D_3\cup D_4$.  It is clear that $\mathbf{a}+z\mathbf{b}=\mathbf{c}_{\alpha_1+z\alpha_2,\beta_1+z\beta_2}\in \bigcup_{i=1}^4 D_i$ when $z$ runs through $\mathbb{F}^{\star}_p$. By the definition of $D_2$, it can be deduced that at most only one codeword belongs to $D_2$ in the set $\{\mathbf{a}+z\mathbf{b}=\mathbf{c}_{z\alpha_2,\beta_1+z\beta_2}~|~z\in\mathbb{F}^{\star}_p\}$ for given codewords $\mathbf{a}$ and $\mathbf{b}$. Then it follows from Eq. (\ref{weight-leq}) that
$$\sum_{z\in\mathbb{F}^{\star}_p}\wt(\mathbf{a}+z\mathbf{b})\geq(p-2)\omega_3+\omega_2=(p-2)p^n-(p-2)(p-1)p^{\frac{n+s}{2}-1},$$
$$(p-1)\wt(\mathbf{a})-\wt(\mathbf{b})\leq(p-1)\omega_1-\omega_3=(p-1)(p-2)p^{n-1}-(p-2)p^{\frac{n+s}{2}-1}.$$
 Clearly, we get
$$\sum\limits_{z \in \mathbb{F}^{\star}_p}\wt(\mathbf{a}+z\mathbf{b})>(p-1)\wt(\mathbf{a})-\wt(\mathbf{b}).$$

$(\mathrm{ii})$ $\mathbf{a}=\mathbf{c}_{\alpha_1,0}\in D_2$. For any $\mathbf{b}\in \bigcup_{i=1}^4 D_i$, from Eq. (\ref{weight-leq}) we get
$$
\sum_{z\in\mathbb{F}^{\star}_p}\wt(\mathbf{a}+z\mathbf{b})\geq(p-1)\omega_2>(p-1)\omega_2-\omega_2\geq(p-1)\wt(\mathbf{a})-\wt(\mathbf{b}).$$

$(\mathrm{iii})$ $\mathbf{a}=\mathbf{c}_{\alpha_1,\beta_1}\in D_3$.

$(\mathrm{iii}.1)$ Let $\mathbf{b}=\mathbf{c}_{\alpha_2, 0}\in D_2$. It is evident to see that $\mathbf{a}+z\mathbf{b}=\mathbf{c}_{\alpha_1+z\alpha_2,\beta_1}\in  D_1\cup D_3\cup D_4$ when $z$ runs through $\mathbb{F}^{\star}_p$. By the definition of $D_1$, it can be deduced that at least one codeword belongs to $D_1$ in the set $\{\mathbf{a}+z\mathbf{b}=\mathbf{c}_{\alpha_1+z\alpha_2,\beta_1}~|~z\in\mathbb{F}^{\star}_p\}$
for given codewords $\mathbf{a}$ and $\mathbf{b}$. Consequently, by Eq. (\ref{weight-leq}) we have
$$\sum_{z\in\mathbb{F}^{\star}_p}\wt(\mathbf{a}+z\mathbf{b})\geq(p-2)\omega_3+\omega_1=(p-1)^2p^{n-1}-(p-2)^2p^{\frac{n+s}{2}-1},$$
$$(p-1)\wt(\mathbf{a})-\wt(\mathbf{b})=(p-1)\omega_3-\omega_2=\left(p^2-3p+3\right)p^{n-1}-(p-2)p^{\frac{n+s}{2}}.$$
Evidently,
$$\sum\limits_{z \in \mathbb{F}^{\star}_p}\wt(\mathbf{a}+z\mathbf{b})>(p-1)\wt(\mathbf{a})-\wt(\mathbf{b}).$$

$(\mathrm{iii}.2)$ Let $\mathbf{b}=\mathbf{c}_{\alpha_2,\beta_2}\in D_1\cup D_3\cup D_4$. A proof similar to that of $(\mathrm{i}.2)$ yields that
$$\sum_{z\in\mathbb{F}^{\star}_p}\wt(\mathbf{a}+z\mathbf{b})\geq(p-2)\omega_3+\omega_2=(p-2)p^n-(p-2)(p-1)p^{\frac{n+s}{2}-1},$$
$$(p-1)\wt(\mathbf{a})-\wt(\mathbf{b})\leq(p-1)\omega_3-\omega_3=(p-2)(p-1)p^{n-1}-(p-2)^2p^{\frac{n+s}{2}-1}.$$
 Clearly, we get
$$\sum\limits_{z \in \mathbb{F}^{\star}_p}\wt(\mathbf{a}+z\mathbf{b})>(p-1)\wt(\mathbf{a})-\wt(\mathbf{b}).$$

$(\mathrm{iv})$ $\mathbf{a}=\mathbf{c}_{\alpha_1,\beta_1}\in D_4$.
For any $\mathbf{b}\in\bigcup_{i=1}^4 D_i$, by the same method as in the proof of $(\mathrm{iii})$, we can easily get that
$$\sum\limits_{z \in \mathbb{F}^{\star}_p}\wt(\mathbf{a}+z\mathbf{b})>(p-1)\wt(\mathbf{a})-\wt(\mathbf{b}).$$

Finally, based on the above all discussions and Lemma \ref{nsc}, we get that $\mathcal{C}^{\star}_{f_{a}}$ is a minimal linear code.

We now proceed to show the statement (2). Applying Lemma \ref{lemma-8} to Lemma \ref{code-lemma}, and then in light of Lemma \ref{equations} and $\sigma_a(\sqrt {p^*})=\eta_0(a)\sqrt{p^*}$ with $a\in\mathbb{F}^{\star}_p$, we can deduce that
\begin{align}
D'_1&=\{\mathbf{c}_{\alpha,0}\in\mathcal{C}_{f_{a}}~|~\alpha=0,\beta\ne0;~\alpha\ne0,\beta\in\mathbb{F}_{p^{n}}\setminus\supp(\mathcal{W}_{f});\nonumber\\&~~~~~~\alpha \beta\ne0,~\beta\in\supp(\mathcal{W}_{f}),~f^*(\alpha^{-1}\beta)(f^*(\alpha^{-1}\beta)+a)\in \textnormal{NSQ}\},\nonumber \\
D'_2&=\left\{\mathbf{c}_{\alpha,0}\in\mathcal{C}_{f_{a}}~|~\alpha\ne0,\beta=0\right\},\nonumber \\
D'_3&=\left\{\mathbf{c}_{\alpha,\beta}\in\mathcal{C}_{f_{a}}~|~\alpha \beta\ne0,~\beta\in\supp(\mathcal{W}_{f}),~ f^*(\alpha^{-1}\beta)=0\right\},\nonumber \\
D'_4&=\left\{\mathbf{c}_{\alpha,\beta}\in\mathcal{C}_{f_{a}}~|~\beta\in\supp(\mathcal{W}_{f}),~f^*(\alpha^{-1}\beta)=-a\right\},\nonumber \\
D'_5&=\left\{\mathbf{c}_{\alpha,\beta}\in\mathcal{C}_{f_{a}}~|~\beta\in\supp(\mathcal{W}_{f}),~f^*(\alpha^{-1}\beta)\in \textnormal{NSQ},f^*(\alpha^{-1}\beta)+a\in \textnormal{NSQ}\right\},\nonumber \\
D'_6&=\left\{\mathbf{c}_{\alpha,\beta}\in\mathcal{C}_{f_{a}}~|~\beta\in\supp(\mathcal{W}_{f}),~f^*(\alpha^{-1}\beta)\in \textnormal{SQ},f^*(\alpha^{-1}\beta)+a\in \textnormal{SQ}\right\}.\nonumber
\end{align}
Based on Lemmas \ref{tffunction g}, \ref{sqnsql}, \ref{lemma-s}, and the definitions of $D'_1$ to $D'_{6}$, we can conclude the results shown in Table \ref{odd-1}. Then similar to the method used to prove the minimality of the linear code in the statement (1), by dividing into six cases based on the sets $D'_j$ that $\mathbf{c}_{\alpha,\beta}$ belongs to, $1\leq j\leq6$, we obtain that any two linearly independent codewords in $\mathcal{C}^{\star}_{f_{a}}$ satisfy Lemma \ref{nsc}.
Finally, from Eqs. (\ref{code_1}), (\ref{vzcode_1}) and Lemma \ref{p-ary}, we see that if $f(0)=0$, then the self-orthogonality of the codes $\mathcal{C}_{f}$ and $\mathcal{C}^{\star}_{f}$ is equivalent. Thus, it follows from Theorem \ref{bent function1}, Lemmas \ref{3-ary} and \ref{lemma-8} that $\mathcal{C}^{\star}_{f_{a}}$ is self-orthogonal. \hfill $\square$
\end{proof}
\begin{example}\label{examples linear code}
Let $p=3$, $n=5$, $a=1$ and $\xi$ be a primitive element in $\mathbb{F}_{3^5}$ such that $\xi^5+2\xi+1=0$.

(1) Let $f(x)=\mathrm{Tr}^{n}_{1}\left(x^{2}+\xi^23x^{4}+\xi^4x^{10}\right)$. Then $f(x)$ is a quadratic unbalanced weakly regular ternary 1-plateaued function with
 $$\mathcal{W}_{f}(\beta)\in\left\{0, 27, 27\zeta_3, 27\zeta^2_3\right\}$$
for all $\beta\in\mathbb{F}_{3^5}$, where $\epsilon=1$. Furthermore, we can verify by Magma
program that the linear code $\mathcal{C}^{\star}_{f_1}$ has the parameters $[242, 6, 72]$ and the weight enumerator is $1+2z^{72}+112z^{153}+566z^{162}+48z^{180}$ with $w_{\min}/w_{\max}=72/180<2/3$. This is consistent with Theorem \ref{ccww}(1).

(2) Let $f(x)=\mathrm{Tr}^{n}_{1}\left(\xi x^{2}+x^{4}+2x^{10}\right)$. Then $f(x)$ is a quadratic unbalanced weakly regular ternary 2-plateaued function with
 $$\mathcal{W}_{f}(\beta)\in\left\{0, -27\sqrt{3}i, -27\sqrt{3}\zeta_3, -27\sqrt{3}\zeta^2_3\right\}$$
for all $\beta\in\mathbb{F}_{3^5}$, where $\epsilon=1$ and $i=\sqrt{-1}$. Furthermore, we can verify by Magma
program that the linear code $\mathcal{C}^{\star}_{f_1}$ has the parameters $[242, 6, 54]$ and the weight enumerator is $1+2z^{54}+16z^{135}+698z^{162}+12z^{189}$ with $w_{\min}/w_{\max}=54/189<2/3$. This is consistent with Theorem \ref{ccww}(2).
\end{example}
\begin{remark}\label{lttremark} (1) By observing (i.1) in the proof of Theorem \ref{ccww}, we deduce that $(p-1)\omega_3=(p-1)\omega_1-\omega_2$ for $n-s=2$. Thus, in light of Lemma \ref{nsc}, it follows that the code $\mathcal{C}^{\star}_{f_a}$ is not minimal when $n-s=2$. Therefore, we assume that $n-s>2$ in Theorem \ref{ccww}.

(2) Combining Theorem \ref{bent function1} with Lemma \ref{lemma-8}, we can derive that $\bar{\mathcal{C}}_{f_{a}}$ is self-orthogonal. Although applying Lemma \ref{lemma-8} into Lemma \ref{1code-lemma} yields the weight distribution of $\bar{\mathcal{C}}_{f_{a}}$, we could not do this, as it is not a minimal code. Moreover, by observing Table \ref{odd-1}, we see that if $p=3$ or $p=5$, then $\mathcal{C}^{\star}_{f_{a}}$ is a self-orthogonal minimal four-weight code with $\frac{w_{\min}}{w_{\max}}<\frac{p-1}{p}$; if $p \equiv1\pmod4$ with $p>5$, then $\mathcal{C}^{\star}_{f_{a}}$ is a self-orthogonal minimal five-weight code with $\frac{w_{\min}}{w_{\max}}<\frac{p-1}{p}$; otherwise, it is a self-orthogonal minimal six-weight code with $\frac{w_{\min}}{w_{\max}}<\frac{p-1}{p}$.
\end{remark}
\section{Conclusions}\label{Sec-Conclusion}

In this paper, we have mainly investigated the constructions of self-orthogonal minimal linear codes violating the AB condition. We have provided the necessary and sufficient conditions for the linear codes designed from Construction 1 to be self-orthogonal. More precisely, in Theorem \ref{lfunction1}, the necessary and sufficient conditions for determining whether a binary linear code is self-orthogonal have been provided. As an application, two classes of minimal binary codes that violate the AB condition have been constructed, and Theorem \ref{lfunction1} determined their self-orthogonality. In Theorem \ref{bent function1}, we also presented the necessary and sufficient conditions for determining the self-orthogonality of a $p$-ary linear code. Subsequently, one class of two-weight and one class of four-weight linear codes, in addition to two classes of minimal codes violating the AB condition, have been designed. The conditions under which these codes are self-orthogonal have been clearly established through Theorem \ref{bent function1}. The constructed codes have different parameters from the existing literature and are inequivalent to the known ones as far as we know.
Furthermore, the minimal linear codes in this paper can be used for secret sharing schemes, and secure two-party computation with the framework introduced in \cite{CDY,CCP,YD}. As future work, it would be interesting to exhibit more classes of self-orthogonal minimal codes that violate the AB condition.

%%%%%%%%%%%%%%%%%%%%%%%%%%%%

\section*{Acknowledgments}
The work was supported by the National Key Research and Development Program of China under Grant (No. 2024YFA1013000 and No. 2022YFA1004900). The work of Kangquan Li was supported by the Research Fund of National University of Defense Technology under Grant ZK22-14 and the National Natural Science Foundation of China (NSFC) under Grant (No. 62202476 and No. 62172427). The work of Longjiang Qu was supported by the National Natural Science Foundation of China under Grant (No. 62032009).

\end{document}